\documentclass[a4paper,11pt]{article}
\pdfoutput=1
\usepackage[utf8]{inputenc}
\usepackage[T1]{fontenc}
\usepackage{lmodern}
\usepackage[margin=1in]{geometry}
\usepackage{microtype}
\usepackage{amsmath,amssymb,amsthm,mathtools,mathrsfs}
\usepackage[dvipsnames]{xcolor}
\usepackage[numbers,sort&compress]{natbib}

\usepackage[hidelinks]{hyperref}
\usepackage{cleveref}
\usepackage{booktabs,tabularx,placeins}
\AtBeginDocument{\normalfont}
\hypersetup{
  pdftitle={From Heat to Homology: Spectral Gap Transfer for Exact Quantum Gibbs Sampling at All Temperatures},
  pdfauthor={Caesnan M. G. Leditto, Kuo-Chin Chen, Min-Hsiu Hsieh}
}
\newtheorem{theorem}{Theorem}[section]
\newtheorem{lemma}[theorem]{Lemma}
\newtheorem{proposition}[theorem]{Proposition}
\newtheorem{corollary}[theorem]{Corollary}
\newtheorem{definition}[theorem]{Definition}

\newtheorem{assumption}[theorem]{Assumption}
\theoremstyle{remark}
\newtheorem{remark}[theorem]{Remark}
\newtheorem*{addedremark}{Remark}
\crefname{assumption}{Assumption}{Assumptions}
\Crefname{assumption}{Assumption}{Assumptions}
\crefname{fact}{Fact}{Facts}
\Crefname{fact}{Fact}{Facts}
\Crefname{equation}{Eq.\!}{Eqs.\!}
\newcommand{\cE}{\mathcal E}
\newcommand{\cL}{\mathcal L}
\newcommand{\cT}{\mathcal T}

\newcommand{\cA}{\mathcal A}
\newcommand{\cK}{\mathcal K}
\newcommand{\cH}{\mathcal H}
\newcommand{\cB}{\mathcal B}
\newcommand{\cD}{\mathcal D}

\newcommand{\Tr}{\operatorname{Tr}}
\newcommand{\Var}{\operatorname{Var}}
\newcommand{\gap}{\operatorname{gap}}
\newcommand{\id}{\mathrm{id}}
\newcommand{\Fix}{\operatorname{Fix}}
\newcommand{\spec}{\operatorname{spec}}
\newcommand{\sfD}{\mathsf D}
\newcommand{\sfU}{\mathsf U}
\newcommand{\sfM}{\mathsf M}

\newcommand{\ad}{\operatorname{ad}}

\renewcommand{\Re}{\operatorname{Re}}
\newcommand{\Lab}{\mathfrak L}

\newcommand{\ket}[1]{|#1\rangle}
\newcommand{\bra}[1]{\langle#1|}

\title{From Heat to Homology: Spectral Gap Transfer for Exact Quantum Gibbs Sampling at All Temperatures}
\author{%
  Caesnan M. G. Leditto$^{1,3}$\thanks{Email: \href{mailto:caesnan.leditto@monash.edu}{\texttt{caesnan.leditto@monash.edu}}}
  \quad Kuo-Chin Chen$^{2}$ \quad Min-Hsiu Hsieh$^{2}$\\[0.8em]
  \small $^{1}$School of Physics and Astronomy, Monash University,\\
  \small Clayton, VIC 3168, Australia\\[0.3em]
  \small $^{2}$Hon-Hai Research Institute, Taipei, Taiwan\\[0.3em]
  \small $^{3}$Faculty of Science and Computer, Universitas Kristen Immanuel,\\
  \small Yogyakarta, Indonesia
}
\date{}
\begin{document}
\maketitle

\begin{abstract}
Preparing Gibbs states through dissipative dynamics requires controlling convergence for the chosen Hamiltonian $H$ and coupling operators. However, efficient implementation requires convergence guarantees from the spectral gap of the generator of the dynamics, which remains a crucial challenge. Recent results address the challenge by restricting the dynamics with structural assumptions about the Hamiltonian and its interactions. For the Chen–Kastoryano–Gilyen (CKG) construction, we prove an explicit comparison that converts a spectral gap estimate for the infinite-temperature generator into a lower bound on the spectral gap of the CKG quantum Gibbs sampler at every finite positive inverse temperature $\beta$. It allows a single spectral gap estimate at infinite temperature to certify convergence across Hamiltonians and temperatures. Such an estimate is often known from a classical spectral gap estimate, such as the classical random walk spectral gap. For Hodge Laplacians of simplicial complexes as Hamiltonians, we construct coupling operators whose generator at infinite temperature has the same spectral gap as a classical simplicial down--up walk.  A positive spectral gap of the walk and bounded spectral width $W=\lambda_{\max}(H)-\lambda_{\min}(H)$ give a generator spectral gap independent of the number of simplices at every fixed temperature. We then establish conditions under which the sampler approximately prepares zero-energy (harmonic) states in polynomial evolution time. This provides a conditional convergence guarantee for the state preparation in the thermal approach to quantum topological data analysis.
\end{abstract}

\section{Introduction}
\label{sec:introduction}

Simulating quantum systems to understand nature is a foundational objective
of quantum computing \cite{Feynman1982}.
Preparing thermal, or Gibbs, states enables quantum computers to estimate
finite-temperature energies and correlations
\cite{PoulinWocjan2009,Temme2011}.
Thermal-state preparation also provides a setting for studying quantum
computational advantage \cite{BergamaschiChenLiu2024,RajakumarWatson2026}
and universal quantum computation \cite{RouzeFrancaAlhambra2024},
with applications in optimization \cite{BrandaoSvore2017}, machine learning
\cite{AminEtAl2018}, and topological data analysis
\cite{ScaliUmeanoKyriienko2024}.

One approach to thermal-state preparation is to simulate dissipative
dynamics inspired by physical thermalization. These dynamics form a
quantum Markov semigroup that evolves an initial state toward the target
Gibbs state \cite{RWW2023,CKBG2023,DLL2025}.
For exact dissipative Gibbs sampling, efficient preparation requires
the target Gibbs state to be an exact stationary state, rapid
convergence from arbitrary initial states, and efficient access to the
Hamiltonian and coupling operators for simulating the dynamics.
The coupling operators specify the quantum updates used to construct the
dynamics. Recent constructions achieve exact stationarity and provide
simulation algorithms whose cost depends on the \emph{mixing time}, the
time needed to approach equilibrium to a chosen accuracy
\cite{CKGTechnical,DLL2025}.
For reversible dynamics satisfying quantum detailed balance and having
a unique stationary state, a lower bound on
the spectral gap of the Lindblad generator, which specifies the
continuous-time evolution, gives an upper bound on the mixing time.
This bound also depends on the desired accuracy and the stationary state
\cite{TemmeEtAl2010}.
The remaining challenge is to obtain a spectral gap bound strong enough
to guarantee rapid convergence at the temperature required by the
application.

Existing results obtain rapid convergence from locality or
interaction-strength bounds at high temperature
\cite{RouzeFrancaAlhambra2024,Bergamaschi}, one-dimensional structure
at every fixed finite temperature \cite{BergamaschiChen2025}, or
randomized couplings under temperature and spectral assumptions
\cite{RamkumarSoleimanifar}.
These approaches exploit structure that need not be available for a
chosen Hamiltonian and coupling family.
At infinite temperature, the stationary state is maximally mixed, and
the normalized infinite-temperature dynamics is determined by the original couplings alone.
Its convergence can sometimes be estimated using classical Markov
chains embedded in quantum channels \cite{AhiableEtAl2021}, or by
combining convergence estimates for independently evolving subsystems
\cite{KastoryanoReebWolf2012}.
Extending these estimates to finite temperature is difficult because
cooling can suppress transitions through higher-energy states that
connect low-energy states. The analysis must also control quantum
coherences, the off-diagonal entries of the density matrix in an energy
eigenbasis, as well as the diagonal populations \cite{Temme2013}.
We therefore ask \emph{whether a known convergence bound at infinite
temperature can provide a quantitative guarantee at finite temperature
for dynamics constructed from the same couplings, without locality or
energy-difference assumptions}.
Such a comparison would allow convergence estimates to be reused across
Hamiltonians and temperatures.

\paragraph{Main results.}
We address this question for the exact sampler of Chen, Kastoryano,
and Gily\'en (CKG) with its Metropolis transition weight
\cite{CKGTechnical}.
This construction uses energy filtering to modify the coupling operators
for the target temperature, which precise definition is given in
\Cref{sec:ckg}.
We fix the overall time scale through the normalization
$\sum_a A_a^\dagger A_a=I$, and require that each adjoint of the coupling operator
also belongs to the family, up to a phase.
Let $H$ be any finite-dimensional Hamiltonian,
$W=\lambda_{\max}(H)-\lambda_{\min}(H)$ its spectral width, and $\beta$ the inverse temperature.
We apply the comparison to Hodge Laplacians of simplicial complexes as
Hamiltonians \cite{HorakJost2013}. Their zero-energy, or harmonic,
subspaces encode topological information: the dimension of the harmonic
subspace is the corresponding Betti number, which counts connected
components or independent higher-dimensional holes
\cite{LloydGarneroneZanardi2016}.

\begin{enumerate}
\item \emph{Spectral gap comparison and polynomial evolution time.}
Let $\kappa_0>0$ be a spectral gap lower bound for the normalized
infinite-temperature generator
$\cL_{\cA,0}(X)=\sum_a A_a^\dagger X A_a-X$ on the full operator space.
We assume that its only fixed observables are scalar multiples of the
identity, so its stationary state is uniquely maximally mixed.
For the CKG generator $\cL_{\beta,\mathrm M}$, we prove
\begin{equation}
\gap(-\cL_{\beta,\mathrm M})
\ge\frac12e^{-\beta W}\kappa_0\qquad(\beta>0).
\label{eq:intro-ckg-gap}
\end{equation}
The comparison controls both populations and coherences and requires
no assumptions on locality or the spacing and multiplicity of energy
differences (\Cref{thm:metropolis-all-temperature}).
Standard mixing estimates \cite{TemmeEtAl2010} give polynomial
evolution time to inverse-polynomial trace-distance accuracy when
$\kappa_0$ is bounded below by an inverse polynomial,
$\beta W=O(\log n)$, and $\log D$ is polynomially bounded in the
instance size $n$, where $D$ is the Hilbert-space dimension.
The exponential dependence on $\beta W$ is necessary in general:
a three-dimensional system example has a gap asymptotic to
$\kappa_0e^{-\beta W}$ at low temperature
(\Cref{appth:thm:sharpness}).
Thus the coefficient in the exponent is optimal, although optimality
of the prefactor $1/2$ remains open.
A complementary perturbative estimate also covers the Gaussian
transition weight. With the exact spectral gap at infinite temperature as input,
its lower bound approaches the CKG Lindbladian spectral gap at infinite temperature as $\beta\to0$ for
fixed $H$ and couplings (\Cref{thm:comparison}).
Thus sufficiently small inverse temperatures preserve almost all of
the limiting gap; the permitted range may shrink with system size.

\item \emph{Approximate harmonic state preparation.}
For Hodge Laplacians, we construct couplings from the uniform simplicial
down--up walk, which moves between faces by removing a vertex and then
adding one. The associated normalized infinite-temperature generator
has exactly the walk's spectral gap (\Cref{thm:main}).
Established walk estimates, including those obtained from local
spectral expansion, therefore supply the input to our comparison
\cite{AlevLau2020,KaufmanOppenheim2020}.
Let $P_0$ project onto the harmonic subspace of $\Delta_k$,
$h=\dim\ker\Delta_k>0$, and $D_k$ be the number of $k$-faces.
The harmonic state is $\rho_{\rm harm}:=P_0/h$, the uniform mixture on this subspace.
For families of complexes with $n$ vertices, suppose the walk spectral gap and
dimension ratio $h/D_k$ are bounded below by inverse polynomials,
and the smallest positive eigenvalue $\delta_H$ of the chosen Hodge
Hamiltonian, with the same normalization used for $W$, satisfies
$\delta_H/W\ge\alpha>0$ with $\alpha$ independent of $n$.
A suitable $\beta W=c\log n$ then gives polynomial evolution time
to approximate the uniform mixture $P_0/h$ on the harmonic subspace
to inverse-polynomial trace-distance error
(\Cref{cor:harmonic-polynomial}).
A positive walk gap alone does not ensure rapid low-temperature
convergence. Our low-temperature upper bound and explicit unweighted
Hodge example show that the quantum gap can decay exponentially as
the inverse temperature increases while the walk gap stays positive
(\Cref{appth:thm:groundobstruction,appth:cor:barbell}).
\end{enumerate}

Our comparison allows a single spectral gap estimate at infinite
temperature to be reused across Hamiltonians and temperatures.
It gives an explicit bound for the prescribed CKG generator at every
finite temperature, with a loss determined by inverse temperature and
the Hamiltonian's spectral width. For Hodge Laplacians, the comparison transfers classical walk spectral gap estimates into convergence guarantees for thermal harmonic state preparation.
Thermal approaches to quantum topological data analysis (QTDA) seek to prepare the uniform mixture on a nonzero harmonic subspace by
cooling \cite{ScaliUmeanoKyriienko2024,SchmidhuberEtAl2025}.
Under the stated conditions, cooling suppresses positive-energy weight
while the dynamics still converges in polynomial evolution time.
This supplies a conditional convergence guarantee for the
state-preparation step in thermal QTDA.
Efficient implementation and extraction of Betti numbers require
separate analyses of access, simulation, and measurement costs.

\paragraph{Related work.}
Our technical contribution is the estimate that transfers the
Poincar\'e inequality at infinite temperature for the original couplings
to the finite-temperature filtered generator. This inequality
quantifies relaxation of all nonconstant observables through
$\kappa_0$.
Adjoint pairing and Plancherel's identity control the full operator
space without analyzing individual Bohr frequencies, the differences
between Hamiltonian eigenvalues, separately. Comparing Gibbs and
maximally mixed weights gives the explicit coefficient.

The factor $e^{-\beta W}$ already occurs in Li--Lu's spectral gap
comparison \cite{LiLu}, building on established comparison methods
\cite{JLR}.
Their result concerns generator pairs built from the same operators
with definite energy differences and satisfying
Gelfand--Naimark--Segal (GNS) detailed balance, a stronger form of
quantum reversibility than the Kubo--Martin--Schwinger (KMS)
detailed balance used by CKG.
The paired maximally mixed generator in that comparison need not
coincide with the generator constructed from the original unfiltered
couplings. Our additional estimate applies to the prescribed
Gaussian-filtered CKG dynamics, which need not satisfy GNS detailed
balance. General comparison criteria \cite{JW} do not by themselves
supply the explicit coefficient for this generator.

Basso et al. \cite{BassoEtAl2025} transfer a gap bound for observables
commuting with the Hamiltonian to the full quantum gap at the same
temperature for Davies dynamics. The loss depends on arithmetic
progressions of energies.
Our input instead controls the full operator space at infinite
temperature, and our estimate quantifies the loss in the gap guarantee at finite
temperature without assumptions on energy differences. 

Table~\ref{tab:comparison} presents the concise comparison with other approaches. Further comparisons with other recent CKG Lindbladian spectral gap results appear in \Cref{app:comparison-context}.
\begin{table}[t]
\centering
\small
\setlength{\tabcolsep}{4pt}
\renewcommand{\arraystretch}{1.08}
\begin{tabularx}{\textwidth}{@{}
>{\raggedright\arraybackslash}p{0.27\textwidth}
>{\raggedright\arraybackslash}p{0.28\textwidth}
>{\raggedright\arraybackslash}X@{}}
\toprule
\textbf{Comparison and reference gap}
& \textbf{Gap lower bound}
& \textbf{Assumptions and limitation} \\
\midrule
Present work: original full-operator gap estimate $\kappa_0$
at infinite temperature
& $\tfrac12e^{-\beta W}\kappa_0$
& CKG dynamics with normalized couplings closed under adjoints up to phases; arbitrary $H$ and $\beta>0$.
The bound can be exponentially weak in $\beta W$. \\
\addlinespace
Li--Lu \cite{LiLu}: paired maximally mixed generator gap
$g_{\mathrm{pair}}$
& $e^{-\beta W}g_{\mathrm{pair}}$
& Specified primitive GNS-symmetric generator pair sharing modular eigenoperators.
The paired generator differs from the infinite temperature
generator. \\
\addlinespace
Basso et al. \cite{BassoEtAl2025}: minimum classical gap
$g_{\mathrm{cl}}^{\min}(\beta)$ at the same temperature
& $\frac{g_{\mathrm{cl}}^{\min}(\beta)}{2\ell}$
& Davies dynamics.
Transfers control from observables commuting with $H$
to the full operator space; the loss depends on
arithmetic progressions of energies. \\
\bottomrule
\end{tabularx}
\caption{Spectral gap comparisons and their required reference
estimates. The paired gap $g_{\mathrm{pair}}$ may depend on temperature.
The quantity $g_{\mathrm{cl}}^{\min}(\beta)$ is the minimum classical
gap over energy eigenbases, including choices within degenerate
eigenspaces. The parameter $\ell$ bounds the length of proper
arithmetic progressions of distinct energies.
Each result uses its own coupling and rate conventions.
}
\label{tab:comparison}
\end{table}

\paragraph{Organization.}
\Cref{sec:ckg} defines the CKG dynamics and spectral gap conventions.
\Cref{sec:comparison-main} presents the comparison, its limitations,
and its mixing-time consequence.
\Cref{sec:hodge-application} applies it to Hodge Laplacians.
The appendices contain the proofs, the perturbative stability
estimate, the low-temperature examples, and further literature
comparisons.

\FloatBarrier

\section{Exact quantum Gibbs sampler}
\label{sec:ckg}

The CKG construction combines Gaussian operator Fourier filtering with
a coherent correction to obtain a quantum Markov semigroup satisfying
KMS detailed balance \cite{CKGTechnical}. We use this construction to
fix the stationary Gibbs state and study convergence for a prescribed
family of coupling operators. Our comparison starts from a
Poincar\'e inequality for the normalized infinite-temperature
generator, whose stationary state is maximally mixed. Its spectral
gap can often be obtained from classical Markov-chain spectral gap estimates or
single-site calculations. The normalization used in the comparison
fixes the overall time scale.

Throughout, $\cB(\cH)$ denotes the linear operators on a
finite-dimensional Hilbert space $\cH$, $I$ is the identity,
$X^\dagger$ is the adjoint, and
$[X,Y]=XY-YX$, $\{X,Y\}=XY+YX$.

\begin{assumption}[Hamiltonian and coupling operators]
\label{ass:ckg}
$\cH$ has dimension $D\ge2$;
$H=H^\dagger\in\cB(\cH)$ has spectral width
$W:=\lambda_{\max}(H)-\lambda_{\min}(H)$; and
$\{A_a\}_{a\in\mathcal A}\subset\cB(\cH)$ is a finite family of
\emph{coupling operators} satisfying
\begin{equation}
 S_{\cA}:=\sum_{a\in\mathcal A}A_a^\dagger A_a\preceq I,
 \qquad
 A_{\iota(a)}=e^{i\varphi_a}A_a^\dagger
 \label{eq:generic-CKG-assumptions}
\end{equation}
for an involution $\iota$ of $\mathcal A$ and real phases
$\varphi_a$. Here $S_{\cA}\preceq I$ means that $I-S_{\cA}$ is
positive semidefinite.
\end{assumption}

The normalization bound and adjoint closure are the hypotheses of the
CKG construction; phases can be removed without changing the generator
(\Cref{lem:relabelling}). For the spectral gap comparison, we additionally
require $S_{\cA}=I$ and a Poincar\'e inequality. This inequality bounds the squared commutator sum in
\eqref{eq:generic-L0-Poincare} below by a positive multiple of
the observable's variance. At infinite temperature, the variance is the squared
distance from the scalar operators in the normalized Hilbert--Schmidt
norm. A positive Poincar\'e constant therefore excludes additional
fixed observables, including off-diagonal ones.

\begin{assumption}[Poincar\'e inequality at infinite temperature]
\label{ass:poincare}
\Cref{ass:ckg} holds; moreover $S_{\cA}=I$ and there is a constant
$\kappa_0>0$ with
\begin{equation}
 \frac12\sum_{a\in\mathcal A}\|[A_a,O]\|_\tau^2
 \ge\kappa_0\Var_\tau(O)
 \qquad\text{for every }O\in\cB(\cH),
 \label{eq:generic-L0-Poincare}
\end{equation}
where $\tau:=I/D$,
$\langle X,Y\rangle_\tau:=D^{-1}\Tr(X^\dagger Y)$,
$\|X\|_\tau^2:=\langle X,X\rangle_\tau$, and
$\Var_\tau(X):=\inf_{c\in\mathbb C}\|X-cI\|_\tau^2$.
\end{assumption}

Under \Cref{ass:poincare}, the normalized infinite-temperature
generator is
\[
 \cL_{\mathcal A,0}(O):=\sum_a A_a^\dagger O A_a-O.
\]
Its Dirichlet form satisfies
$-\langle O,\cL_{\mathcal A,0}(O)\rangle_\tau
=\frac12\sum_a\|[A_a,O]\|_\tau^2$.
Thus $\kappa_0$ is a lower bound on its spectral gap, and the largest
admissible $\kappa_0$ is the gap itself. The inequality must hold on
the entire operator space; a gap estimate for a classical restriction
alone does not establish it.

A classical Markov chain nevertheless provides one direct way to
verify this inequality through a measure-and-prepare channel. In the
Schr\"odinger picture, this channel measures in a fixed basis, applies
a classical transition to the outcome, and prepares the resulting
basis state. Let $Q$ be an irreducible symmetric stochastic matrix on
$\{1,\ldots,D\}$, with eigenvalues
$1=\lambda_1(Q)>\lambda_2(Q)\ge\cdots\ge\lambda_D(Q)$.
In an orthonormal basis $\{|f\rangle\}_{f=1}^D$, the couplings
$K_{fg}:=\sqrt{Q(f,g)}|g\rangle\langle f|$ define its
Heisenberg-picture channel
\[
 \Phi_Q(O):=\sum_{f,g}K_{fg}^\dagger O K_{fg}
 =\sum_f\left(\sum_g Q(f,g)\langle g|O|g\rangle\right)
 |f\rangle\langle f|.
\]
The generator $\Phi_Q-\id$ acts as $Q-I$ on diagonal observables
and as $-\id$ on off-diagonal operators. Its spectral gap is therefore
$\min\{1,1-\lambda_2(Q)\}$, which supplies a valid
$\kappa_0$ in \Cref{ass:poincare}. For this particular construction,
the classical gap thus gives a gap bound on the entire operator space.
Other quantum channels can induce the same classical walk on diagonal
operators.
The construction and proof are recalled in
\Cref{def:Markov-kernel-embedding,thm:associated-gap}.

We now define the finite-temperature CKG dynamics for a general
coupling family satisfying \Cref{ass:ckg}. Fix an inverse temperature
$\beta>0$ and put
$\sigma_\beta:=e^{-\beta H}/\Tr(e^{-\beta H})$.
Operator Fourier filtering resolves the energy changes induced by
each coupling with finite Gaussian energy resolution. The filtered
generator retains terms coupling distinct Bohr frequencies, and a
coherent correction ensures exact KMS detailed balance.
We use the normalized Gaussian time filter and shifted Metropolis
transition weight of \cite{CKGTechnical}:
\begin{gather*}
 f_\beta(t):=\frac{e^{-t^2/\beta^2}}
 {\sqrt{\beta\sqrt{\pi/2}}},\qquad
 \gamma_{\rm M}(\omega):=
 e^{-\beta\max\{\omega+1/(2\beta),0\}},\\
 \widehat A_a(\omega):=\frac1{\sqrt{2\pi}}
 \int_{\mathbb R}f_\beta(t)e^{-i\omega t}
 e^{iHt}A_ae^{-iHt}\,dt.
\end{gather*}
In particular, $\int_{\mathbb R}f_\beta(t)^2dt=1$.
Write $\gamma=\gamma_{\rm M}$ and
$H=\sum_E E P_E$, where $P_E$ are its spectral projections.
The loss operator and coherent correction are
\begin{align*}
 R_\gamma&:=\sum_a\int_{\mathbb R}\gamma(\omega)
 \widehat A_a(\omega)^\dagger\widehat A_a(\omega)\,d\omega,\\
 B_\gamma&:=\frac{i}{2}\sum_{E,E'}
 \tanh\!\left(\frac{\beta(E-E')}{4}\right)P_E R_\gamma P_{E'}.
\end{align*}
These operators define the exact CKG generator in the
Schr\"odinger picture:
\begin{align}
    \cL_{\gamma,*}(\rho)
 =-i[B_\gamma,\rho]
 +\sum_a\int_{\mathbb R}\gamma(\omega)
 \widehat A_a(\omega)\rho\widehat A_a(\omega)^\dagger\,d\omega
 -\frac12\{R_\gamma,\rho\}.\label{eq:exact-CKG-generator}
\end{align}
Its Heisenberg-picture adjoint $\cL_\gamma$ is determined by
$\Tr[\cL_{\gamma,*}(\rho)O]=\Tr[\rho\cL_\gamma(O)]$.
The notation suppresses its dependence on $\beta$, $H$, and the
coupling family. For $\gamma=\gamma_{\rm M}$, we also write
$\cL_{\beta,\mathrm M}:=\cL_{\gamma_{\rm M}}$. The correction $B_\gamma$ enforces KMS detailed
balance for these filtered jumps
\cite[Theorem~I.1 and Lemma~II.1]{CKGTechnical}.
\Cref{app:ckg-details} provides the convention checks and the additional
Gaussian transition weight used in the perturbative comparison.
Finite energy resolution avoids the need to resolve individual Bohr
frequencies, but does not by itself guarantee a large spectral gap.

To characterize convergence, we use the Gibbs-weighted inner product
in which the CKG generator is self-adjoint. For a full-rank state
$\sigma$, the KMS inner product is
$\langle X,Y\rangle_{\sigma,1/2}
:=\Tr(\sigma^{1/2}X^\dagger\sigma^{1/2}Y)$.
CKG detailed balance means that $\cL_\gamma$ is self-adjoint for this
inner product with $\sigma=\sigma_\beta$; in particular,
$\cL_{\gamma,*}(\sigma_\beta)=0$
\cite[Theorem~I.1]{CKGTechnical}.
This symmetry gives the variational characterization
\[
 \gap(-\cL_\gamma):=
 \inf_{\substack{X\ne0\\\Tr(\sigma_\beta X)=0}}
 \frac{-\langle X,\cL_\gamma(X)\rangle_{\sigma_\beta,1/2}}
 {\langle X,X\rangle_{\sigma_\beta,1/2}},
\]
which we use to bound the slowest nonstationary decay rate.
A positive gap implies that the only fixed observables are scalar
multiples of $I$. In this finite-dimensional KMS-symmetric setting,
the dynamics is then primitive: $\sigma_\beta$ is its unique stationary
state and every initial state converges to it. These 
consequences are recalled in \Cref{def:gap,lem:gap-primitive}.

To compare this spectral gap with the bound $\kappa_0$, we
distinguish the normalized infinite-temperature generator from the
infinite-temperature limit of the CKG generator. Under
\Cref{ass:poincare},
$\lim_{\beta\downarrow0}\cL_{\gamma_{\rm M}}
=q\cL_{\mathcal A,0}$, where
\[
 q:=\operatorname{erfc}\!\left(\frac1{2\sqrt2}\right)
 =\frac2{\sqrt\pi}\int_{1/(2\sqrt2)}^\infty e^{-u^2}\,du
 \simeq0.617.
\]
Throughout, $\kappa_0$ denotes a spectral gap lower bound for
$\cL_{\mathcal A,0}$ with the normalization specified above.
The spectral gap of the CKG generator at infinite temperature is $q$ times that
of $\cL_{\mathcal A,0}$.

\section{Spectral gap comparison at finite temperature}
\label{sec:comparison-main}

We use a spectral gap comparison to turn a Poincar\'e inequality for
the normalized infinite-temperature generator
$\cL_{\mathcal A,0}$ into a convergence bound for the finite-temperature
CKG dynamics. The Poincar\'{e} constant $\kappa_0$ can be obtained, for
example, from a classical Markov-chain estimate or tensorization.
Keeping the coupling family fixed, we compare squared commutator norms
before and after Gaussian Fourier filtering. This yields a bound without requiring a
direct spectral analysis of the generator at finite temperature.

To give an overview of the proof, we use the Heisenberg-picture adjoint of \Cref{eq:exact-CKG-generator} for an observable $X=X^\dagger$. We compare the commutator $[\widehat A_a(\omega),X]$ with the corresponding commutator for the adjoint coupling at frequency $-\omega$. Closure of the coupling family under adjoints implies that these commutators are adjoints of one another up to a phase. Averaging their contributions and using $\gamma_{\rm M}(\omega)+\gamma_{\rm M}(-\omega)\ge1$ gives a lower bound in terms of squared Hilbert--Schmidt norms. Using Plancherel's identity \cite[Theorem~9.13]{Rudin1987}, the Poincar\'{e} inequality bounds the frequency integral because simultaneous unitary conjugation of the couplings preserves the Poincar\'{e} constant. Comparing $-\operatorname{Re}\operatorname{Tr}\left(\sigma_\beta X\cL_{\beta,\mathrm M}(X)\right)$ and $\left[\operatorname{Tr}(\sigma_\beta X^2)
-\operatorname{Tr}(\sigma_\beta X)^2\right]$ using the smallest and largest eigenvalues of $\sigma_\beta$ produces the ratio $\lambda_{\min}(\sigma_\beta)/\lambda_{\max}(\sigma_\beta)=e^{-\beta W}$. Finally, KMS symmetry \cite[Theorem~I.1]{CKGTechnical} ensures that the generator has real eigenvalues \cite[Sec.~II]{KastoryanoTemme2013}. Since the generator preserves adjoints, for each nonzero eigenvalue $-\lambda$ we can choose a nonzero operator $X=X^\dagger$ satisfying $\cL_{\beta,\mathrm M}(X)=-\lambda X$. Substituting this equation into $-\operatorname{Re}\operatorname{Tr}\left(\sigma_\beta X\cL_{\beta,\mathrm M}(X)\right)
\ge \frac12e^{-\beta W}\kappa_0
\left[\operatorname{Tr}(\sigma_\beta X^2)
-\operatorname{Tr}(\sigma_\beta X)^2\right]$ gives $\lambda\ge\frac12e^{-\beta W}\kappa_0$. No assumption on the separation of Bohr frequencies is needed. The full proof is in \Cref{app:comparison-proof}.

\begin{theorem}[Spectral gap comparison at every finite inverse temperature]
\label{thm:metropolis-all-temperature}
Adopt \Cref{ass:poincare}, fix $\beta>0$, and use the transition weight
$\gamma_{\rm M}$ defined in \Cref{sec:ckg}. With $b:=\beta W$, the
exact CKG generator is primitive, $\sigma_\beta$ is its unique stationary
state, and
\begin{equation}
 \gap(-\cL_{\gamma_{\rm M}})\ge\frac12e^{-b}\kappa_0.
 \label{eq:metropolis-all-temperature-gap}
\end{equation}
No upper bound on $\beta W$ and no assumption on the number, spacing, or
multiplicity of the Bohr frequencies are required.
\end{theorem}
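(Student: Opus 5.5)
The plan follows the outline sketched before the statement, organised around a Dirichlet-form lower bound for Hermitian observables. First I compute the Heisenberg-picture generator $\cL_{\gamma_{\rm M}}$ and its KMS Dirichlet form. Rather than work with the KMS form directly, I aim to bound the real quantity $-\Re\Tr(\sigma_\beta X\cL_{\gamma_{\rm M}}(X))$ for $X=X^\dagger$. Discarding the coherent part, the dissipative part of the Heisenberg generator is $\sum_a\int\gamma(\omega)\bigl(\widehat A_a(\omega)^\dagger X\widehat A_a(\omega)-\tfrac12\{\widehat A_a(\omega)^\dagger\widehat A_a(\omega),X\}\bigr)d\omega$. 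The term $i[B_\gamma,X]$ contributes $\Tr(\sigma_\beta X[B_\gamma,X])$, and I would show its real part is controlled, or handle it through its explicit $\tanh$ structure relative to $\sigma_\beta$. This should reduce the quantity to
\[
\tfrac12\sum_a\int\gamma(\omega)\,\Tr\bigl(\sigma_\beta\,[\widehat A_a(\omega),X]^\dagger[\widehat A_a(\omega),X]\bigr)d\omega
\]
plus a remainder that vanishes or is nonnegative. I expect this step to be the main obstacle: isolating a clean squared-commutator expression weighted by $\sigma_\beta$, with the coherent correction and the asymmetry of $\sigma_\beta$ placement handled correctly. My first attempt would be to write $\sigma_\beta$ in the energy eigenbasis, expand the traces, and use the KMS symmetry identity $\gamma(\omega)=e^{-\beta\omega}\gamma(-\omega)$ (up to the shift) to pair terms.

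Next I use adjoint pairing. Since $\widehat A_a(\omega)^\dagger=\widehat{A_a^\dagger}(-\omega)$ (the filter $f_\beta$ is real and even), adjoint closure with $A_{\iota(a)}=e^{i\varphi_a}A_a^\dagger$ gives $[\widehat A_{\iota(a)}(-\omega),X]=-e^{i\varphi_a}[\widehat A_a(\omega),X]^\dagger$ for Hermitian $X$. Hence both $\Tr(\sigma_\beta C^\dagger C)$ and $\Tr(\sigma_\beta CC^\dagger)$ appear with weights $\gamma(\omega)$ and $\gamma(-\omega)$. Bounding $\sigma_\beta\succeq\lambda_{\min}(\sigma_\beta)I$ and averaging over the pair yields
\[
\ge\tfrac14\lambda_{\min}(\sigma_\beta)\sum_a\int\bigl(\gamma(\omega)+\gamma(-\omega)\bigr)\|[\widehat A_a(\omega),X]\|_{HS}^2\,d\omega .
\]
Since $\gamma_{\rm M}(\omega)+\gamma_{\rm M}(-\omega)\ge1$ (one of the two values equals $1$ whenever $|\omega|\ge 1/(2\beta)$, and near zero both are at least $e^{-1/2}$ up to checking, so the sum exceeds $1$), this is at least $\tfrac14\lambda_{\min}\sum_a\int\|[\widehat A_a(\omega),X]\|_{HS}^2d\omega$.

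By Plancherel, $\int\|[\widehat A_a(\omega),X]\|^2_{HS}d\omega=\int f_\beta(t)^2\|[A_a(t),X]\|^2_{HS}dt$, where $A_a(t)=e^{iHt}A_ae^{-iHt}$. The family $\{A_a(t)\}$ satisfies \Cref{ass:poincare} with the same $\kappa_0$, because conjugating $O$ by $e^{iHt}$ preserves the variance. Together with $\int f_\beta^2=1$ this gives a lower bound $\tfrac12\lambda_{\min}(\sigma_\beta)\,D\,\kappa_0\Var_\tau(X)$. Finally, $\Tr(\sigma_\beta X^2)-\Tr(\sigma_\beta X)^2=\inf_c\Tr(\sigma_\beta(X-c)^2)\le\lambda_{\max}D\Var_\tau(X)$. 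Since $\lambda_{\min}/\lambda_{\max}=e^{-\beta W}$, we get
\[
-\Re\Tr(\sigma_\beta X\cL(X))\ge\tfrac12e^{-b}\kappa_0\bigl[\Tr(\sigma_\beta X^2)-\Tr(\sigma_\beta X)^2\bigr].
\]

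To conclude, KMS self-adjointness makes the spectrum real, and since $\cL$ preserves adjoints each eigenvalue $-\lambda$ has a Hermitian eigenvector $X$. If $\lambda\ne0$, stationarity of $\sigma_\beta$ gives $\Tr(\sigma_\beta X)=0$. Substituting $\cL(X)=-\lambda X$ yields $\lambda\Tr(\sigma_\beta X^2)\ge\tfrac12e^{-b}\kappa_0\Tr(\sigma_\beta X^2)$. For $\lambda=0$, the inequality forces $X$ to be scalar. This gives the gap bound, and primitivity then follows from \Cref{lem:gap-primitive}.
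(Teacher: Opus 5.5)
Your argument from the adjoint pairing onward matches the paper's proof. That includes the relation $[\widehat A_{\iota(a)}(-\omega),X]=-e^{i\varphi_a}[\widehat A_a(\omega),X]^\dagger$, averaging the two representations, the bound $\sigma_\beta\succeq\lambda_{\min}(\sigma_\beta)I$, Plancherel with unitary invariance of the Poincar\'e inequality, the variance comparison through $\lambda_{\max}(\sigma_\beta)$, and the Hermitian-eigenvector conclusion. There is one minor slip. For $0<|\beta\omega|<1/2$ it is not true that both $\gamma_{\rm M}(\pm\omega)\ge e^{-1/2}$. The correct reason is $\gamma_{\rm M}(\omega)+\gamma_{\rm M}(-\omega)=2e^{-1/2}\cosh(\beta\omega)>1$ there. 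The genuine gap is the first step. You flag it as the main obstacle but leave it open, and the route you sketch would not close it.

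\emph{Why your route for the first step fails.} You cannot treat the coherent term on its own. For $X=X^\dagger$ its contribution is $-\Re\Tr(\sigma_\beta X\,i[B_\gamma,X])=-\tfrac12\Tr\bigl(i[\sigma_\beta,B_\gamma]X^2\bigr)$. Here $i[\sigma_\beta,B_\gamma]$ is a traceless Hermitian operator, nonzero as soon as $P_ER_\gamma P_{E'}\neq0$ for some $E\neq E'$. So this term is generically nonzero and of either sign.
\begin{itemize}
\item Correspondingly, the dissipative part $\cL_{\rm diss}:=\cL_{\gamma_{\rm M}}-i[B_\gamma,\cdot\,]$ contributes the squared-commutator sum minus $\tfrac12\Tr(\sigma_\beta\cL_{\rm diss}(X^2))$. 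The two leftovers cancel only through $\Tr(\sigma_\beta\cL_{\gamma_{\rm M}}(X^2))=0$; neither vanishes by itself.
\item Your fallback uses $\gamma(\omega)=e^{-\beta\omega}\gamma(-\omega)$, which is false for $\gamma_{\rm M}$. For example, for $\beta\omega\ge1/2$ the ratio is $e^{-\beta\omega-1/2}$.
\item CKG detailed balance holds only for the Gaussian-smoothed rates combined with $B_\gamma$, across pairs of Bohr frequencies. An energy-basis expansion would therefore amount to re-proving the CKG theorem.
\end{itemize}

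\emph{The missing idea} is the stationary commutator identity (\Cref{lem:carre}).
\begin{itemize}
\item In $\cL(X^\dagger X)-\cL(X^\dagger)X-X^\dagger\cL(X)$ the derivation $X\mapsto i[B_\gamma,X]$ cancels exactly. What remains is $\sum_a\int\gamma(\omega)[\widehat A_a(\omega),X]^\dagger[\widehat A_a(\omega),X]\,d\omega$.
\item Now apply $\Tr(\sigma_\beta\,\cdot\,)$. The first term vanishes because $\cL_{\gamma,*}(\sigma_\beta)=0$. The other two are complex conjugates because $\cL_\gamma$ preserves adjoints.
\item This gives, exactly and with no remainder, $-\Re\Tr(\sigma_\beta X^\dagger\cL_\gamma(X))=\tfrac12\sum_a\int\gamma(\omega)\Tr\bigl(\sigma_\beta[\widehat A_a(\omega),X]^\dagger[\widehat A_a(\omega),X]\bigr)\,d\omega$.
\end{itemize}
The only input from detailed balance is stationarity of $\sigma_\beta$. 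Once you insert this identity, the rest of your proof goes through unchanged.
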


The factor $e^{-\beta W}$ is the ratio of the smallest and largest
eigenvalues of the Gibbs state. Comparisons through this ratio are
established under Gelfand--Naimark--Segal (GNS) detailed balance,
which means self-adjointness with respect to the inner product
$\Tr(\sigma X^\dagger Y)$ \cite{JLR,LiLu}. The theorem above evaluates
the comparison between the Gaussian-filtered CKG generator and the
normalized infinite-temperature generator constructed from the same
couplings. It requires KMS symmetry but does not assume GNS symmetry.
Indeed, \Cref{prop:non-GNS-example} gives a qubit example covered by
the theorem but outside the hypotheses of the cited GNS comparisons.
General comparison principles also exist without GNS symmetry
\cite{JW}, and qualitative preservation of ergodicity under
operator-Fourier filtering is already known
\cite[Theorem~13]{GCDK2024}. The contribution here is the explicit
spectral gap lower bound for the stated generator pair, including its
dependence on temperature.

The exponential dependence on $\beta W$ is necessary under these
assumptions. \Cref{appth:thm:sharpness} constructs a three-dimensional system chain
whose two zero-energy states communicate only through a state of
energy $W$. Its associated normalized infinite-temperature
generator has spectral gap $\kappa_0=1/4$, while the spectral gap of its CKG generator
is asymptotic to $\kappa_0e^{-\beta W}$ as $\beta W\to\infty$.
Consequently, no bound $c\kappa_0e^{-\eta\beta W}$ with universal
$c>0$ and $\eta<1$ can hold under the theorem's hypotheses.
The coefficient of $\beta W$ in the exponent is therefore optimal;
the optimal numerical prefactor remains undetermined.
A complementary estimate near infinite temperature, also covering
the Gaussian transition weight, is given in \Cref{thm:comparison}.

The spectral gap bound also controls the time required for the CKG
dynamics to approach the Gibbs state. The inverse spectral gap is
the relaxation time. To obtain a mixing-time bound valid for every
initial state, we use the quantum $\chi^2$ estimate
\cite[Lemmas~5 and~12]{TemmeEtAl2010}. This estimate converts
exponential decay in a weighted Hilbert--Schmidt norm into
trace-distance convergence, with a prefactor depending on the
smallest Gibbs eigenvalue. For $\gamma=\gamma_{\rm M}$, let
$t_{\rm mix}(\epsilon)$ be the least $t$ such that
$\tfrac12\|e^{t\cL_{\gamma,*}}(\rho)-\sigma_\beta\|_1\le\epsilon$
for every state $\rho$, where
$\|X\|_1=\Tr\sqrt{X^\dagger X}$.
Using $\lambda_{\min}(\sigma_\beta)\ge e^{-\beta W}/D$ and writing
$[x]_+:=\max\{x,0\}$ gives
\[
 t_{\rm mix}(\epsilon)
 \le\frac{2e^{\beta W}}{\kappa_0}
 \left[\log\frac1{2\epsilon}
       +\frac12\log(De^{\beta W}-1)\right]_+
 \qquad(0<\epsilon<1).
\]
The derivation is recalled in \Cref{sec:mixing}.

\begin{addedremark}[Polynomial mixing]
Consequently, for a family of instances indexed by $n$, the comparison
guarantees polynomial semigroup mixing time at fixed accuracy whenever
$\kappa_0$ is bounded below by an inverse polynomial,
$\beta W=O(\log n)$, and $\log D$ is polynomially bounded.
Converting this evolution time into a circuit cost additionally
requires efficient access to $H$ and the coupling operators, together
with control of the temperature-dependent cost of CKG simulation.
\Cref{appth:cor:log-temperature} gives the explicit estimate, using
\cite[Lemmas~5 and~12]{TemmeEtAl2010}.
\end{addedremark}

\begin{addedremark}[Low-energy concentration]
Under \Cref{appth:cor:log-temperature}, consider a Hamiltonian with exactly two equally large
eigenspaces at energies $0$ and $W>0$. For their ground-space
projection $P_0$ and $\beta W=c\log n$ with $c>0$,
$\Tr(P_0\sigma_\beta)=(1+n^{-c})^{-1}\to1$, whereas
$\Tr(P_0I/D)=1/2$. Polynomial mixing can therefore coexist with
substantial low-energy concentration.
\end{addedremark}

The dependence on the global spectral width distinguishes this
guarantee from local high-temperature bounds
\cite{RouzeFrancaAlhambra2024,Bergamaschi}. Those results exploit local
interaction bounds and can give a gap $\Omega(1/m)$ for an $m$-site
system under the same overall generator normalization. Our comparison
applies to any coupling family satisfying \Cref{ass:poincare} and
every $\beta>0$. However, when $W=\Theta(m)$,
$\kappa_0=\Theta(1/m)$, and $\beta>0$ is fixed, the lower bound it
provides is $e^{-\Theta(m)}/m$. Thus the general comparison can give a
much weaker guarantee than local high-temperature results for
extensive Hamiltonians. Detailed comparisons are in \Cref{app:comparison-context}. 

\section{Gibbs sampling for simplicial Hodge Laplacians}
\label{sec:hodge-application}

We now construct coupling operators from a classical down--up walk
on simplices and use \Cref{thm:metropolis-all-temperature} to
bound convergence to the Gibbs state of a Hodge Laplacian. The
classical walk gap determines the spectral gap of the associated
normalized infinite-temperature generator. A separate spectral
quantity, the smallest positive eigenvalue of the chosen Hodge Hamiltonian,
together with the dimensions of the harmonic and positive-energy
subspaces, determines how low the temperature must be for the Gibbs state to concentrate in the
harmonic subspace. This distinction is relevant when Gibbs sampling
is used to prepare the harmonic state for quantum topological data
analysis \cite{LloydGarneroneZanardi2016,McArdleGilyenBerta2022}.

Let $X\subseteq2^{[n]}$ be a finite unweighted simplicial complex
containing every vertex, and let $X(j)$ denote its set of
$j$-dimensional faces. Fix $1\le k\le\dim X$, and write
$F=X(k)$ and $D_k=|F|\ge2$. The complex $X$ need not be pure.
Orient faces by increasing vertex order, and let
$\partial_j:\ell^2(X(j))\to\ell^2(X(j-1))$ be the oriented
simplicial boundary. On $\cH_k=\ell^2(F)$, use the unweighted
Hodge Laplacian
$\Delta_k=\partial_k^\dagger\partial_k+
\partial_{k+1}\partial_{k+1}^\dagger$, with
$\partial_{k+1}=0$ when $X(k+1)=\emptyset$
\cite{Eckmann1944,HorakJost2013}.
Choose $\alpha_j\ge\|\partial_j\|$ for $j=k,k+1$, where
$\|\cdot\|$ is the operator norm, and set
\[
 \Lambda_k=\alpha_k^2+\alpha_{k+1}^2,
 \qquad H_{X,k}=\Delta_k/\Lambda_k,
 \qquad W=\lambda_{\max}(H_{X,k})-\lambda_{\min}(H_{X,k}).
\]
Then $\Lambda_k>0$, $0\preceq H_{X,k}\preceq I$, and $W\le1$.
The Gibbs state is
$\sigma_\beta=e^{-\beta H_{X,k}}/\Tr(e^{-\beta H_{X,k}})$;
inverse temperature $\theta$ for $\Delta_k$ corresponds to
$\beta=\theta\Lambda_k$.

To construct the coupling operators, let $Y=\langle F\rangle$
be the pure $k$-dimensional complex generated by $F$.
The uniform down--up walk $P_u$ on $F$ deletes a uniformly chosen
vertex from the current face and then chooses uniformly among the
$k$-faces containing the remaining $(k-1)$-face, including the
original face. This walk is symmetric and positive semidefinite,
with uniform stationary distribution. Assume it is irreducible and
write $\kappa_u=1-\lambda_2(P_u)>0$.
Its associated measure-and-prepare channel, in the Heisenberg
picture, is
$\Phi_{P_u}(A)=\sum_{f,g\in F}P_u(f,g)\langle g|A|g\rangle
|f\rangle\langle f|$.
For example,
$\{\sqrt{P_u(f,g)}|g\rangle\langle f|\}_{f,g\in F}$
is a Kraus family closed under adjoints.

\begin{theorem}[CKG spectral gap bound from a simplicial down--up walk]
\label{thm:main}
Under the preceding assumptions, construct the exact CKG generator
for $H_{X,k}$ from any Kraus family for $\Phi_{P_u}$ closed under
adjoints up to phases. For transition weight $\gamma_{\rm M}$ and every
$\beta>0$, this generator is primitive with unique stationary state
$\sigma_\beta$, and
\begin{equation}
 \gap(-\cL_{\gamma_{\rm M}})
 \ge\frac12e^{-\beta W}\kappa_u.
 \label{eq:main-all-temperature-gap}
\end{equation}
\end{theorem}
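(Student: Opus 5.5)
The plan is to reduce \Cref{thm:main} to \Cref{thm:metropolis-all-temperature}. That means checking \Cref{ass:poincare} for the chosen Kraus family with constant $\kappa_0=\kappa_u$. Primitivity, uniqueness of $\sigma_\beta$ and the bound \eqref{eq:main-all-temperature-gap} then follow at once with $b=\beta W$.

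\textbf{Step 1: \Cref{ass:ckg}.} The Hamiltonian $H_{X,k}$ is Hermitian on $\cH_k=\ell^2(F)$, with $D=D_k\ge2$. Let $\{A_a\}$ be any Kraus family for $\Phi_{P_u}$ that is closed under adjoints up to phases. Since $\Phi_{P_u}$ is unital ($P_u$ is stochastic), we get $S_{\cA}=\sum_aA_a^\dagger A_a=\Phi_{P_u}(I)=I$. This gives the normalization and the equality $S_{\cA}=I$ required in \Cref{ass:poincare}. Adjoint closure is a hypothesis. Equivalently, by \Cref{lem:relabelling}, phases may be dropped without changing the generator.

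\textbf{Step 2: the infinite-temperature gap.} The key observation is that $\cL_{\cA,0}(O)=\sum_aA_a^\dagger OA_a-O=\Phi_{P_u}(O)-O$ depends only on the channel, not on the particular Kraus family. Hence its Dirichlet form and gap coincide with those of the measure-and-prepare generator $\Phi_{P_u}-\id$. By the discussion in \Cref{sec:ckg} (made precise in \Cref{def:Markov-kernel-embedding,thm:associated-gap}), this generator acts as $P_u-I$ on diagonal observables and as $-\id$ on off-diagonal ones. Moreover, it is self-adjoint for $\langle\cdot,\cdot\rangle_\tau$ because $P_u$ is symmetric. Its gap is therefore $\min\{1,1-\lambda_2(P_u)\}$.

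The walk $P_u$ is positive semidefinite, so $\lambda_2(P_u)\ge0$ and the minimum equals $\kappa_u$. Irreducibility gives $\kappa_u>0$. Combined with the Dirichlet form identity $-\langle O,\cL_{\cA,0}(O)\rangle_\tau=\frac12\sum_a\|[A_a,O]\|_\tau^2$, the variational characterization yields \eqref{eq:generic-L0-Poincare} with $\kappa_0=\kappa_u$ for all $O\in\cB(\cH_k)$.

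\textbf{Step 3: conclusion.} Applying \Cref{thm:metropolis-all-temperature} gives $\gap(-\cL_{\gamma_{\rm M}})\ge\frac12e^{-\beta W}\kappa_u$, together with primitivity and uniqueness of $\sigma_\beta$, for every $\beta>0$.

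The only step needing care is Step 2, namely the independence of the Kraus representation and the fact that the gap is $\kappa_u$ rather than $\min\{1,\kappa_u\}$. The former follows because the generator is expressed through $\Phi_{P_u}$ alone. The latter uses positive semidefiniteness of the down--up walk, which holds because $P_u=D^\top D$-type factorizations through the $(k-1)$-faces give $P_u\succeq0$. This in turn guarantees $\kappa_u\le1$.
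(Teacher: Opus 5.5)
Your proposal is correct and follows the paper's proof. The paper likewise gets the exact gap $\kappa_u$ of $\Phi_{P_u}-\id$ from \Cref{thm:associated-gap}, using symmetry and positive semidefiniteness of $P_u$. It then verifies \Cref{ass:poincare} via the commutator identity of \Cref{lem:carre-L0}, which relies on the adjoint-closure consequence $\sum_aA_aA_a^\dagger=I$ that you leave implicit, and applies \Cref{thm:metropolis-all-temperature}. The paper also invokes \Cref{lem:relabelling} to note that the CKG generator is independent of the Kraus representation; your direct verification for an arbitrary adjoint-closed family makes that step unnecessary.
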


The theorem follows by identifying the spectral gap of the
normalized infinite-temperature generator $\Phi_{P_u}-\id$.
The channel $\Phi_{P_u}$ acts as $P_u$ on diagonal observables and
annihilates off-diagonal operators. Since $P_u$ is symmetric and
positive semidefinite, \Cref{thm:associated-gap} gives the exact
spectral gap $\kappa_u$ for $\Phi_{P_u}-\id$.
The general comparison theorem then yields
\eqref{eq:main-all-temperature-gap}. Changing the Kraus
representation of this fixed channel leaves the CKG generator
unchanged (\Cref{lem:relabelling}). The full proof is in
\Cref{app:hodge-details}.

The spectral gap bound also gives a trace-distance mixing estimate.
By \Cref{lem:mixing},
\[
 t_{\rm mix}(\epsilon)
 =O\!\left(e^{\beta W}\kappa_u^{-1}
 [\log D_k+\beta W+\log(1/\epsilon)]\right),
 \qquad 0<\epsilon<1.
\]
At fixed accuracy, a size-independent positive lower bound on
$\kappa_u$ and a size-independent upper bound on $\beta W$
therefore give $t_{\rm mix}=O(\log D_k)$.
These estimates require a bound on the spectral width of the Hodge
Hamiltonian, but do not require its diagonalization.

Quantitative lower bounds on the classical walk gap can be
substituted directly into these estimates. In particular, the
Alev--Lau theorem \cite{AlevLau2020} bounds the down--up walk gap
using the spectra of walks on links of faces. This local-to-global
theorem concerns the classical walk; combining it with
\Cref{thm:main} gives a bound on the spectral gap of the CKG generator.
Other lower bounds on $\kappa_u$ can be used in the same way.
The induced link weights and the resulting product bound are
given in \Cref{app:hodge-details,app:hodge-support}.

Convergence to a Gibbs state and concentration in the harmonic subspace
can be controlled together. Let $0<h:=\dim\ker\Delta_k<D_k$, let $P_0$
project onto this kernel, and put $R=(D_k-h)/h$.
For the smallest positive eigenvalue $\delta_H$ of $H_{X,k}$,
\[
 \frac12\|\sigma_\beta-P_0/h\|_1\le R e^{-\beta\delta_H}.
\]
Combining this estimate with the mixing bound gives explicit sufficient
conditions for preparing the harmonic state
(\Cref{cor:harmonic-approximation,cor:harmonic-polynomial}).
For families on $n$ vertices, suppose $\kappa_u\ge n^{-a}$,
$R\le n^r$, and $\delta_H/W\ge\alpha>0$, with $a,r,\alpha$ independent
of $n$. For fixed $s>0$, choose a constant $c\ge(r+s)/\alpha$ and
$\beta W=c\log n$. The Gibbs-state error relative to the harmonic state is then at most $n^{-s}$;
running the dynamics to Gibbs-state error $n^{-s}$ gives trace-distance
error at most $2n^{-s}$ from $P_0/h$ in polynomial evolution time.
Here $\log D_k\le n\log2$.

This supplies a conditional convergence guarantee for thermal quantum
topological data analysis \cite{ScaliUmeanoKyriienko2024}. A positive walk
gap alone does not suffice. Ideal projection of $I/D_k$ onto
$\ker\Delta_k$ succeeds with probability $h/D_k=1/(1+R)$, which is at
least inverse polynomial under $R\le n^r$; no advantage
over other preparation methods is established \cite{ChowdhurySomma2017}.
Boundary-norm estimates and further CKG spectral gap bounds are proved in
\Cref{app:hodge-details,app:hodge-support}.

\section{Discussion}
\label{sec:conclusion}

We proved the comparison
$\gap(-\cL_{\beta,\mathrm M})\ge\tfrac12e^{-\beta W}\kappa_0$
for the exact CKG generator with normalized coupling
operators closed under adjoints up to phases. Here $\kappa_0$ is a
Poincar\'{e} constant of the normalized infinite-temperature generator
constructed from the same couplings. The bound controls both populations
and coherences and requires no assumptions on locality or the spacing
and multiplicity of energy differences. It allows a single spectral
gap estimate at infinite temperature to provide convergence guarantees
across Hamiltonians and temperatures.
Appendix~\ref{app:comparison-context} distinguishes this result from
GNS-symmetric comparisons \cite{JLR,LiLu} and general comparison
principles \cite{JW}; our qubit examples show that the CKG generator
need not be GNS symmetric. The perturbative comparison
(\Cref{thm:comparison}) also covers the Gaussian transition weight.
When $\kappa_0$ is the actual spectral gap of the normalized
infinite-temperature generator, its lower bound approaches
$q_\gamma\kappa_0$, the CKG spectral gap at infinite temperature,
as $\beta W\to0$. This gives one-sided spectral-gap stability for
fixed couplings, although the guaranteed inverse-temperature range
can shrink with system size.

At fixed accuracy, the mixing-time bound is polynomial for instance families indexed by $n$ when $\kappa_0$ is bounded below by an inverse
polynomial, $\beta W=O(\log n)$, and $\log D$ is polynomially bounded.
This regime can still produce substantial low-energy concentration.
The guarantee can become exponentially weak when the spectral width
grows with system size. Our three-dimensional system example in \Cref{app:optimal} shows that the
exponential dependence on $\beta W$ is necessary under the stated
assumptions \Cref{appth:thm:sharpness} and \Cref{appth:cor:necessity}, although a small lower bound alone does not establish
slow convergence for a particular instance. Stronger guarantees
therefore require additional assumptions. A main direction is to
identify structural conditions on Hamiltonians and coupling operators
that improve low-temperature bounds, including for noncommuting
interactions and temperatures that decrease with system size.
Local CKG Lindbladian spectral gap bounds
\cite{RouzeFrancaAlhambra2024,Bergamaschi,BergamaschiChen2025}
suggest exploiting interaction structure; energy-barrier bounds for
Davies generators of commuting Pauli Hamiltonians \cite{Temme2017}
motivate seeking analogous CKG bounds controlled by transitions
allowed by the couplings.

The measure-and-prepare quantum channel construction transfers
classical walk spectral gap estimates to this quantum convergence
analysis. For Hodge Laplacians acting as Hamiltonians, the associated
infinite-temperature generator has the same gap as the
uniform simplicial down--up walk. Local spectral expansion
\cite{AlevLau2020} supplies quantitative bounds. In particular, for
the bounded-degree families in
\Cref{cor:bounded-degree,cor:HDX-mixing}, the CKG dynamics with
transition weight $\gamma_{\mathrm M}$ has a positive spectral gap
bound independent of the number of faces at each fixed inverse
temperature with respect to the unscaled Hodge Laplacian $\Delta_k$
and mixes in $O(\log D_k)$ time at fixed accuracy.
Combining the general walk bound with Gibbs-state concentration gives
polynomial evolution time to approximate the harmonic state when the
walk spectral gap and dimension ratio $h/D_k$ are bounded below by
inverse polynomials and the smallest positive eigenvalue of the Hodge
Hamiltonian is bounded below by a constant fraction of the spectral
width (\Cref{cor:harmonic-polynomial}). This establishes a conditional
convergence guarantee for the state preparation step in thermal
quantum topological data analysis \cite{ScaliUmeanoKyriienko2024}.

Our low-temperature upper bound and unweighted Hodge Laplacian
example show that a spectral gap of the walk alone does not ensure
rapid thermalization to prepare the harmonic state. Reducible projected couplings
obstruct a uniform positive gap as $\beta\to\infty$ for fixed
Hamiltonian and couplings. Irreducibility of the projected couplings
alone has not been proved sufficient; relaxation from positive-energy
states must also be controlled. We therefore aim to find verifiable
sufficient conditions for rapid harmonic state preparation across
broader families of simplicial complexes, controlling both dynamics
within the harmonic subspace and relaxation from positive-energy
states. This requires understanding how coupling choices affect
finite-temperature convergence beyond what is captured by the
classical walk and infinite-temperature gap. Combining coupling
families can improve the infinite-temperature bound
(\Cref{subsec:mixture}), whereas changing a Kraus representation leaves
the CKG generator unchanged (\Cref{lem:relabelling}). Random coupling
constructions and their dependence on the low-energy spectral
distribution \cite{RamkumarSoleimanifar} suggest further approaches
to coupling design, as discussed in
Appendix~\ref{app:comparison-context}.

A further objective is to develop complete thermal QTDA algorithms
by combining convergence guarantees with efficient implementation,
initialization, and measurement procedures. Efficient Hamiltonian
and coupling access, initialization on valid faces, and controlled
simulation errors remain necessary
\cite[Theorem~I.2]{CKGTechnical}. Extracting topological information
requires a separate analysis of estimation error and sample complexity.
The harmonic state has purity $1/h$, and independent SWAP-test
estimation of $h$ to fixed relative accuracy uses $O(h^2)$ repetitions
at fixed success probability
\cite[Sec.~IV]{ScaliUmeanoKyriienko2024}. This procedure can be
exponentially costly despite polynomial evolution time.
Projector-rank estimation using quantum singular value transformation
and amplitude estimation offers another approach
\cite[Sec.~4]{McArdleGilyenBerta2022}, but requires unitary access to
the required block encodings and their inverses beyond a mixing-time
bound. Hodge Hamiltonians arising in Khovanov homology provide another
possible application \cite{SchmidhuberEtAl2025}; appropriate couplings
and spectral gap estimates at infinite temperature would need to be established
there. The present results certify the specified dynamics, without
establishing an advantage over other preparation methods or a fully
efficient algorithm for Betti-number estimation. Bounding the total
query, gate, and sample costs of complete thermal QTDA algorithms
would determine when these preparation guarantees yield efficient
extraction of topological information.

\section*{Acknowledgement}
We thank Angus Southwell and Ryu Hayakawa for discussing the implications of the main results and future directions during the preparation of the manuscript.  

\section*{AI Statement}
OpenAI GPT 5.6 and GPT-6 Astra were used to expand the original research ideas and hypotheses, prove the main theorems, and provide examples. OpenAI GPT-6 Astra was also used to generate the first manuscript draft, which the authors reviewed and polished into the final manuscript. The authors verified and revised the AI-assisted work, including the proofs and examples, and take full responsibility for the results.

\appendix

\section{Classical and quantum spectral gap conventions}
\label{sec:prelim}

\paragraph*{Notation.}
For a finite set $S$, $\ell^2(S)$ denotes $\mathbb C^S$ with orthonormal
basis $\{|s\rangle:s\in S\}$. We write $[n]:=\{1,\ldots,n\}$.
For a superoperator $\Phi$, $\Phi^\dagger$ denotes its adjoint with
respect to the Hilbert--Schmidt inner product $\Tr(A^\dagger B)$.
When $\cL$ acts on observables in the Heisenberg picture,
$\cL_*:=\cL^\dagger$ denotes the corresponding generator acting on
states in the Schr\"odinger picture.
These conventions follow \cite[Chs.~1--3]{Watrous2018}.

\subsection{Dirichlet forms and Poincar\'e gaps}
\label{subsec:poincare}

We use the spectral and Dirichlet-form conventions of
\cite[Secs.~12.1 and~13.2]{LevinPeresWilmer2017}.
Let $\Omega$ be a finite set with $|\Omega|\ge2$, and let $R$ be a
transition matrix on $\Omega$. Its support graph has an edge $\{f,g\}$
whenever $f\ne g$ and at least one of $R(f,g)$ and $R(g,f)$ is positive.

Suppose henceforth that $R$ is reversible with respect to a full-support
distribution $\nu$. Write $\nu_f:=\nu(f)$ and define
$\langle a,b\rangle_\nu:=\sum_{f\in\Omega}\nu_f\overline{a_f}b_f$,
$\|a\|_\nu^2:=\langle a,a\rangle_\nu$, and
$L^2(\Omega,\nu):=(\mathbb C^\Omega,\langle\cdot,\cdot\rangle_\nu)$.
Let $\mathbf1$ denote the constant-one function. For
$a\in\mathbb C^\Omega$, define
\begin{align}
 \cE_{\nu,R}(a)
 &:=
 \frac12\sum_{f,g\in\Omega}
 \nu_fR(f,g)|a_f-a_g|^2
 =\langle a,(I-R)a\rangle_\nu,
 \notag\\
 \Var_\nu(a)
 &:=
 \inf_{c\in\mathbb C}
 \sum_{f\in\Omega}\nu_f|a_f-c|^2
 =
 \|a-\langle\mathbf1,a\rangle_\nu\mathbf1\|_\nu^2.
 \label{eq:classical-form}
\end{align}
These are the  Dirichlet form and variance; see
\cite[Lemma~13.6 and Remark~13.8]{LevinPeresWilmer2017}.

Reversibility makes $R$ self-adjoint on $L^2(\Omega,\nu)$
\cite[Lemma~12.2]{LevinPeresWilmer2017}. Its eigenvalues are therefore real
and may be ordered, with multiplicity, as
$1=\lambda_1(R)\ge\lambda_2(R)\ge\cdots$. The Rayleigh--Ritz
characterization \cite[Lemma~13.7 and Remark~13.8]{LevinPeresWilmer2017}
gives
\begin{equation}
 1-\lambda_2(R)
 =
 \inf_{\Var_\nu(a)>0}
 \frac{\cE_{\nu,R}(a)}{\Var_\nu(a)}.
 \label{eq:classical-poincare}
\end{equation}
The quantity $1-\lambda_2(R)$ is the \emph{Poincar\'e spectral gap}. Note that it is
not the absolute discrete-time spectral gap
$1-\max_{j\ge2}|\lambda_j(R)|$.

The multiplicity of the eigenvalue $1$ equals the number of connected
components of the support graph: by \eqref{eq:classical-form}, zero
Dirichlet form means constancy on each connected component
\cite[Secs.~1.3 and~12.1]{LevinPeresWilmer2017}. In particular,
\begin{equation}
 1-\lambda_2(R)>0
 \quad\Longleftrightarrow\quad
 R\text{ is irreducible}.
 \label{eq:gap-irreducible}
\end{equation}

\subsection{Quantum Markov semigroups and the KMS spectral gap}
\label{subsec:qms}

We use the finite-dimensional quantum Markov semigroup conventions of
\cite[Sec.~II]{KastoryanoTemme2013} and the KMS inner product and
symmetry convention of \cite[Secs.~1--2]{FagnolaUmanita2010}.
A quantum Markov semigroup in the Heisenberg picture consists of
completely positive unital maps $T_t=e^{t\cL}$, while the \textit{trace-dual semigroup}, i.e., the semigroup acting on quantum states (the Schrödinger-picture semigroup),
is $T_{t,*}=e^{t\cL_*}$. A state $\sigma$ is stationary if
$\cL_*(\sigma)=0$.
\begin{definition}[Spectral gap of a KMS-symmetric generator~\cite{KastoryanoTemme2013}]
\label{def:gap}
For a full-rank state $\sigma$ on $\cH$ with $\dim\cH\ge2$, set
$\langle X,Y\rangle_{\sigma,1/2}:=\Tr(\sigma^{1/2}X^\dagger
\sigma^{1/2}Y)$, $\Gamma_\sigma(X):=\sigma^{1/2}X\sigma^{1/2}$, and
$\Fix(\cL):=\ker\cL=\{X:e^{t\cL}(X)=X\text{ for every }t\ge0\}$.
Assume that $\cL$ is KMS symmetric with respect to the full-rank
stationary state $\sigma$, as in \Cref{sec:ckg}, and define
\begin{equation}
 \gap(-\cL):=
 \inf_{\substack{X\ne0\\
       \langle I,X\rangle_{\sigma,1/2}=0}}
 \frac{-\langle X,\cL(X)\rangle_{\sigma,1/2}}
      {\langle X,X\rangle_{\sigma,1/2}}.                 \label{eq:gap-definition}
\end{equation}
\end{definition}
Following \cite[Sec.~II]{KastoryanoTemme2013}, we call the semigroup
\emph{primitive} if it has a unique full-rank stationary state.
The following provides direct consequences of KMS symmetry
\cite[Sec.~II, Definitions~6 and~10]{KastoryanoTemme2013}; see also
\cite[Sec.~2]{FagnolaUmanita2010}.

\begin{lemma}[KMS symmetry and the spectral gap~\cite{KastoryanoTemme2013,FagnolaUmanita2010}]
\label{lem:gap-primitive}
\label{lem:KMS-intertwining}
Let $\cL$ generate a quantum Markov semigroup on a finite-dimensional
$\cB(\cH)$ with $\dim\cH\ge2$, and let $\sigma$ be a full-rank state. If $\cL$ is KMS
symmetric with respect to $\sigma$, then it is diagonalizable in the KMS
inner product, its eigenvalues are real and nonpositive, and $\cL(I)=0$.
Moreover,
\[
 \gap(-\cL)>0
 \quad\Longleftrightarrow\quad
 \Fix(\cL)=\mathbb CI,
\]
and when these hold, $\gap(-\cL)$ is the smallest nonzero eigenvalue of
$-\cL$ and
\begin{equation}
 -\langle X,\cL(X)\rangle_{\sigma,1/2}
 \ \ge\ \gap(-\cL)\inf_{c\in\mathbb C}\|X-cI\|_{\sigma,1/2}^2
 \qquad\text{for every }X\in\cB(\cH),                   \label{eq:KMS-poincare}
\end{equation}
where $\|Y\|_{\sigma,1/2}^2:=\langle Y,Y\rangle_{\sigma,1/2}$.
For any finite-dimensional quantum Markov generator and full-rank state
$\sigma$, KMS symmetry is equivalent to
\begin{equation}
 \cL_*\circ\Gamma_\sigma=\Gamma_\sigma\circ\cL .
 \label{eq:KMS-intertwining-general}
\end{equation}
In that case $e^{t\cL_*}\Gamma_\sigma=\Gamma_\sigma e^{t\cL}$ for all
$t\ge0$, $\cL_*(\sigma)=0$, and the stationary space is
$\mathbb C\sigma$ when $\Fix(\cL)=\mathbb CI$.
\end{lemma}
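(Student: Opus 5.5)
The lemma consists of standard consequences of self-adjointness, so the plan is to work throughout in the finite-dimensional Hilbert space $(\cB(\cH),\langle\cdot,\cdot\rangle_{\sigma,1/2})$. Since $\sigma$ is full rank, this is a genuine inner product. KMS symmetry means $\langle X,\cL(Y)\rangle_{\sigma,1/2}=\langle \cL(X),Y\rangle_{\sigma,1/2}$ for all $X,Y$, so $\cL$ is a self-adjoint operator on this space.

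\textbf{Spectral properties.} The spectral theorem then gives an orthonormal eigenbasis and real eigenvalues. Next, $\cL(I)=0$ because each $T_t=e^{t\cL}$ is unital, so differentiating $T_t(I)=I$ at $t=0$ gives $\cL(I)=0$. For nonpositivity, take an eigenvector $\cL(X)=\lambda X$ with $X\neq0$. Then $T_t(X)=e^{t\lambda}X$. A unital positive map is contractive in operator norm by Russo--Dye, so $e^{t\lambda}\|X\|\le\|X\|$ for all $t\ge0$, which forces $\lambda\le0$.

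\textbf{Gap characterization and Poincar\'e inequality.}
\begin{itemize}
\item The KMS-orthogonal complement $I^\perp=\{X:\langle I,X\rangle_{\sigma,1/2}=0\}$ is $\cL$-invariant, because $\cL$ is self-adjoint and $\cL(I)=0$.
\item The infimum in \eqref{eq:gap-definition} is therefore the Rayleigh quotient of the self-adjoint operator $-\cL|_{I^\perp}$. It equals the smallest eigenvalue of $-\cL$ on $I^\perp$, and this eigenvalue is nonnegative by the previous step.
\item The infimum is positive iff $\ker\cL\cap I^\perp=\{0\}$, i.e.\ iff $\ker\cL=\mathbb CI$. Here $\Fix(\cL)=\ker\cL$ holds trivially: differentiate $e^{t\cL}X=X$, or expand the exponential if $\cL X=0$.
\item When the gap is positive, the eigenvalue $0$ is exhausted by $\mathbb CI$. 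Hence the smallest eigenvalue on $I^\perp$ is the smallest nonzero eigenvalue of $-\cL$.
\end{itemize}
For \eqref{eq:KMS-poincare}, decompose $X=cI+Y$ with $c$ the KMS-orthogonal projection coefficient and $Y\in I^\perp$. Using $\cL(I)=0$ and self-adjointness, the cross terms vanish and $-\langle X,\cL X\rangle_{\sigma,1/2}=-\langle Y,\cL Y\rangle_{\sigma,1/2}$. This is at least $\gap(-\cL)\|Y\|^2_{\sigma,1/2}$. Finally, $\|Y\|^2_{\sigma,1/2}=\inf_{c'}\|X-c'I\|^2_{\sigma,1/2}$ because the orthogonal projection minimizes the distance. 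If the gap is zero, the inequality is trivial by nonpositivity.

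\textbf{Intertwining.} Note that $\langle X,Y\rangle_{\sigma,1/2}=\Tr(\Gamma_\sigma(X)^\dagger Y)$, since $\Gamma_\sigma(X)^\dagger=\sigma^{1/2}X^\dagger\sigma^{1/2}$. By definition of the Hilbert--Schmidt adjoint,
\[
\langle X,\cL Y\rangle_{\sigma,1/2}=\Tr\bigl((\cL_*\Gamma_\sigma X)^\dagger Y\bigr),
\qquad
\langle \cL X,Y\rangle_{\sigma,1/2}=\Tr\bigl((\Gamma_\sigma\cL X)^\dagger Y\bigr).
\]
Equality for all $X,Y$ is therefore equivalent to $\cL_*\Gamma_\sigma=\Gamma_\sigma\cL$, which is \eqref{eq:KMS-intertwining-general}. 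From this:
\begin{itemize}
\item Iterating the intertwining gives $\cL_*^m\Gamma_\sigma=\Gamma_\sigma\cL^m$, and summing the exponential series yields $e^{t\cL_*}\Gamma_\sigma=\Gamma_\sigma e^{t\cL}$.
\item Since $\sigma=\Gamma_\sigma(I)$, we get $\cL_*(\sigma)=\Gamma_\sigma(\cL I)=0$.
\item Because $\Gamma_\sigma$ is invertible, $\ker\cL_*=\Gamma_\sigma(\ker\cL)$. This equals $\mathbb C\sigma$ when $\Fix(\cL)=\mathbb CI$.
\end{itemize}

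\textbf{Main obstacle.} The only step not purely linear-algebraic is nonpositivity of the spectrum. I would handle it with the Russo--Dye contraction argument above, which needs only unitality and positivity of $T_t$. An alternative is a Kadison--Schwarz argument combined with stationarity of $\sigma$.
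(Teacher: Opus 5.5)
Your proposal is correct and follows the paper's proof essentially step for step. Both use self-adjointness in the KMS inner product with the spectral theorem, operator-norm contractivity of the unital semigroup for nonpositivity, and restriction to the KMS-orthogonal complement of $I$ for the gap characterization and the Poincar\'e inequality. Both then use the identity $\langle X,Y\rangle_{\sigma,1/2}=\Tr(\Gamma_\sigma(X)^\dagger Y)$ and invertibility of $\Gamma_\sigma$ for the intertwining and stationary-space claims. The only cosmetic difference is that you invoke Russo--Dye for positive unital maps where the paper appeals to complete positivity, which changes nothing.
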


\begin{proof}
KMS symmetry is self-adjointness in $\langle\cdot,\cdot\rangle_{\sigma,1/2}$,
so the finite-dimensional spectral theorem gives a real spectrum and an
orthonormal eigenbasis. Each $e^{t\cL}$ is unital and completely positive,
hence contractive in operator norm; consequently no eigenvalue of $\cL$
is positive. Since $\cL(I)=0$, restriction to the KMS-orthogonal complement
of $I$ gives the gap characterization and \eqref{eq:KMS-poincare}.
The identity
$\langle X,Y\rangle_{\sigma,1/2}=\Tr(X^\dagger\Gamma_\sigma(Y))$
identifies KMS symmetry with \eqref{eq:KMS-intertwining-general}.
Exponentiating and applying to $I$ gives the semigroup relation and
$\cL_*(\sigma)=0$. Finally, invertibility of $\Gamma_\sigma$ gives
$\ker\cL_*=\Gamma_\sigma(\ker\cL)$, proving the assertion about the
stationary space.
\end{proof}

\section{Quantum channels associated with Markov chains}
\label{sec:associated-lindbladian}

We use the finite-dimensional operator-sum and dual-map
conventions of \cite[Sec.~2.2]{Watrous2018}; the stochastic-matrix channel
below is the Heisenberg-picture dual of the quantum--classical channel in
\cite[Sec.~3.2]{AhiableEtAl2021}.

\begin{definition}[Measure-and-prepare channel associated with a Markov chain]
\label{def:Markov-kernel-embedding}
Let $F$ be finite with $N:=|F|\ge2$, and let $Q$ be the transition matrix
of an irreducible Markov chain on $F$ satisfying $Q=Q^{\mathsf T}$.  Let
$\cH_F:=\ell^2(F)$ with
canonical basis $\{|f\rangle:f\in F\}$.  Define
\begin{equation}
 K_{fg}:=\sqrt{Q(f,g)}\,|g\rangle\langle f|,
 \qquad
 \Phi_Q(A):=\sum_{f,g\in F}K_{fg}^\dagger A K_{fg},
 \qquad
 \cL_0:=\Phi_Q-\id.                                      \label{eq:lift}
\end{equation}
Here $\cL_0$ is the \textit{normalized} infinite-temperature generator or Lindbladian associated
with the coupling family $\{K_{fg}\}$. Row-stochasticity gives $\sum_{f,g}K_{fg}^\dagger K_{fg}=I$, and
$K_{gf}=K_{fg}^\dagger$ because $Q=Q^{\mathsf T}$. Thus the family is
closed under adjoints, with involution $\iota(f,g)=(g,f)$.
Let $u_f:=1/N$ be the uniform distribution on $F$ and $\tau:=I/N$.  The associated Dirichlet form is
$\cE_0(A):=-\langle A,\cL_0(A)\rangle_\tau$, with the inner product and
variance of \Cref{ass:poincare}.
For $A\in\cB(\cH_F)$ write $a:=(A_{ff})_{f\in F}$ for its diagonal and
$\cD(A):=\sum_fA_{ff}|f\rangle\langle f|$ for its diagonal part.
\end{definition}

The following lemma gives the block decomposition and spectral gap of
$\cL_0$; compare the channel construction and
spectral correspondence in \cite[Sec.~3.2, Theorem~3.1 and Corollary~3.2]{AhiableEtAl2021}.

\begin{lemma}[Block structure and spectral gap of the normalized infinite-temperature generator]
\label{lem:associated-structure}
\label{thm:associated-gap}
Adopt \Cref{def:Markov-kernel-embedding}. Then $\cL_0$ is a GKSL generator
that is KMS symmetric with respect to $\tau$, and, for every
$A\in\cB(\cH_F)$,
\begin{equation}
 -\cL_0(A)=
 \sum_{f\in F}\bigl((I-Q)a\bigr)_f|f\rangle\langle f|
 +A-\cD(A).                                               \label{eq:L0-block}
\end{equation}
Moreover, $\Fix(\cL_0)=\mathbb CI$ and
$\ker\cL_{0,*}=\mathbb C\tau$. Consequently, $e^{t\cL_0}$ is primitive,
$\tau$ is its unique stationary state, and
\[
 \gap(-\cL_0)
 =\lambda_Q
 :=\min\{1,1-\lambda_2(Q)\}
 \in(0,1].
\]
\end{lemma}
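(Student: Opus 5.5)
We will first compute $\cL_0$ directly from \eqref{eq:lift}. Since $K_{fg}^\dagger A K_{fg}=Q(f,g)\,|f\rangle\langle g|A|g\rangle\langle f|$, summing over $g$ gives
\[
\Phi_Q(A)=\sum_f (Qa)_f|f\rangle\langle f| .
\]
Subtracting $A$ then yields \eqref{eq:L0-block}: on the diagonal, $-\cL_0$ acts as $I-Q$ on $a$, and on the off-diagonal part $A-\cD(A)$ it acts as the identity. The GKSL property is immediate, because $\cL_0=\Phi_Q-\id$ with $\Phi_Q$ completely positive and unital (row-stochasticity). For KMS symmetry with respect to $\tau=I/N$ the KMS and Hilbert--Schmidt inner products coincide up to $1/N$, so it suffices to show that $\Phi_Q$ is Hilbert--Schmidt self-adjoint. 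This holds since
\[
\Tr(B^\dagger\Phi_Q(A))=\sum_{f,g}\overline{b_f}\,Q(f,g)\,a_g
\]
and $Q=Q^{\mathsf T}$ is real.

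Next, decompose $\cB(\cH_F)$ into the diagonal subspace $\{\cD(A)\}$ and the off-diagonal subspace $\{A:\cD(A)=0\}$. These are orthogonal in $\langle\cdot,\cdot\rangle_\tau$ and invariant under $\cL_0$ by \eqref{eq:L0-block}. On the off-diagonal block, $-\cL_0=\id$, which contributes eigenvalue $1$ (present since $N\ge2$). On the diagonal block, $-\cL_0$ is unitarily equivalent to $I-Q$ acting on $L^2(F,u)$, because $\|\cD(A)\|_\tau^2=\|a\|_u^2$. Its spectrum is therefore $\{1-\lambda_j(Q)\}$, with the eigenvalue $0$ carried by the constants, i.e.\ $\mathbb CI$. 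Irreducibility and \eqref{eq:gap-irreducible} give $1-\lambda_2(Q)>0$, so $\ker\cL_0=\mathbb CI$. \Cref{lem:gap-primitive} then supplies $\ker\cL_{0,*}=\mathbb C\tau$, primitivity, and the identification of $\gap(-\cL_0)$ with the smallest nonzero eigenvalue. That eigenvalue is
\[
\min\{1,\,1-\lambda_2(Q)\}.
\]
Finally, $1-\lambda_2(Q)\le 1$ requires $\lambda_2(Q)\ge0$, which fails in general; the $\min$ with $1$ nevertheless places $\lambda_Q$ in $(0,1]$.

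The only delicate point is checking that the orthogonal complement of $I$ in the KMS inner product splits compatibly with the two blocks. This works because $I$ lies in the diagonal block and the blocks are mutually orthogonal, so the variational infimum in \eqref{eq:gap-definition} is the minimum of the two block infima. We will state this explicitly rather than leave it implicit.
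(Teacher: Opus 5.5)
Your proposal is correct and follows essentially the same route as the paper. It computes $\Phi_Q$ explicitly and splits $\cB(\cH_F)$ into orthogonal invariant diagonal and off-diagonal blocks, on which $-\cL_0$ acts as $I-Q$ and as $\id$. It obtains KMS symmetry from $Q=Q^{\mathsf T}$ and uses irreducibility to make the zero eigenvalue simple before reading off the spectrum. Your additional checks (Hilbert--Schmidt self-adjointness, the norm identity $\|\cD(A)\|_\tau=\|a\|_u$, and the block-wise variational argument) fill in details the paper states only briefly.
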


\begin{proof}
Direct multiplication gives
\begin{equation}
 \Phi_Q(A)=\sum_{f\in F}(Qa)_f\,|f\rangle\langle f| .
 \label{eq:PhiQ-diagonal}
\end{equation}
Since $\sum_{f,g}K_{fg}^\dagger K_{fg}=I$,
$\cL_0(A)=\sum_{f,g}K_{fg}^\dagger AK_{fg}-\frac12\{I,A\}$
is in GKSL form. The diagonal and off-diagonal subspaces are orthogonal
and invariant: $-\cL_0$ acts as $I-Q$ on the former and as the identity
on the latter. Symmetry of $Q$ gives KMS symmetry with respect to $\tau$,
and irreducibility makes the zero eigenvalue simple. The eigenvalues of
$-\cL_0$ are therefore $1-\lambda_i(Q)$ for $1\le i\le N$, together with
$N(N-1)$ additional copies of $1$, proving the assertions.
The gap is attained by a diagonal eigenoperator for $\lambda_2(Q)$ if
$1-\lambda_2(Q)\le1$, and by any off-diagonal matrix unit otherwise.
\end{proof}

\subsection{Kraus representations of \texorpdfstring{$\Phi_Q$}{Phi Q}}
\label{subsec:kraus-families}

Kraus representations of the same completely positive map are related
by unitary mixing after padding with zeros
\cite[Corollary~2.23]{Watrous2018}.

\begin{definition}[Kraus family]
\label{def:kraus-family}
Let $\mathcal A$ be a finite index set. A family
$\{K_a\}_{a\in\mathcal A}\subseteq\mathcal B(\mathcal H_F)$ is a
\emph{Kraus family for $\Phi_Q$} if
$\sum_{a\in\mathcal A}K_a^\dagger K_a=I$ and
$\Phi_Q(A)=\sum_{a\in\mathcal A}K_a^\dagger A K_a$ for every
$A\in\mathcal B(\mathcal H_F)$.
Adjoint closure up to phases is as in \Cref{ass:ckg}; the family is
\emph{closed under adjoints} if the phases may all be chosen equal to $1$.
\end{definition}

Adjoint closure and normalization also give $\sum_aK_aK_a^\dagger=I$.

\begin{lemma}[Commutator representation of the infinite-temperature Dirichlet form]
\label{lem:commutator-sum}
\label{lem:carre-L0}
Let $\cH$ be a Hilbert space of finite dimension $N\ge2$, put $\tau:=I/N$,
and let $\{L_a\}_{a\in\mathcal A}$ be a finite family in $\cB(\cH)$ with
\begin{equation}
 \sum_{a\in\mathcal A}L_a^\dagger L_a=I=\sum_{a\in\mathcal A}L_aL_a^\dagger .
 \label{eq:double-completeness}
\end{equation}
Put $\Psi(O):=\sum_aL_a^\dagger OL_a$. Then $\Psi$ is unital, trace
preserving, completely positive, and a contraction for
$\|\cdot\|_\tau$, and for every $O\in\cB(\cH)$,
\begin{equation}
 \frac12\sum_{a}\|[L_a,O]\|_\tau^2
 =\|O\|_\tau^2-\Re\langle O,\Psi(O)\rangle_\tau
 \ \le\ 2\|O\|_\tau^2 .
 \label{eq:commutator-sum}
\end{equation}
In particular, for any Kraus family $\{K_a\}$ for $\Phi_Q$ that is closed
under adjoints up to phases as in \Cref{def:kraus-family},
\[
 \frac12\sum_a\|[K_a,O]\|_\tau^2
 =\|O\|_\tau^2-\Re\langle O,\Phi_Q(O)\rangle_\tau
 =\cE_0(O)
 \ge\lambda_Q\Var_\tau(O).
\]
\end{lemma}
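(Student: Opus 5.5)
The plan is to verify the identity in \eqref{eq:commutator-sum} by direct expansion, then deduce the contraction bound, and finally specialize to Kraus families of $\Phi_Q$. For a single $L_a$,
\[
 \|[L_a,O]\|_\tau^2=\tfrac1N\Tr\bigl((L_aO-OL_a)^\dagger(L_aO-OL_a)\bigr)
 =\tfrac1N\bigl[\Tr(O^\dagger L_a^\dagger L_aO)+\Tr(L_a^\dagger O^\dagger OL_a)-2\Re\Tr(O^\dagger L_a^\dagger OL_a)\bigr].
\]
Summing over $a$, I use $\sum_aL_a^\dagger L_a=I$ in the first term, giving $\Tr(O^\dagger O)$. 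In the second term I use cyclicity and $\sum_aL_aL_a^\dagger=I$, again giving $\Tr(O^\dagger O)$. The cross term is $\Re\Tr(O^\dagger\Psi(O))$. Halving yields $\|O\|_\tau^2-\Re\langle O,\Psi(O)\rangle_\tau$.

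Next come the structural properties of $\Psi$. Complete positivity holds by Kraus form. Unitality follows from $\sum_aL_a^\dagger L_a=I$. Trace preservation follows from $\Tr\Psi(O)=\Tr(O\sum_aL_aL_a^\dagger)=\Tr O$. For the $\|\cdot\|_\tau$ contraction I use the Kadison--Schwarz inequality for unital CP maps, $\Psi(O)^\dagger\Psi(O)\preceq\Psi(O^\dagger O)$, and take the trace. Trace preservation then gives $\|\Psi(O)\|_\tau^2\le\|O\|_\tau^2$. Cauchy--Schwarz gives $|\langle O,\Psi(O)\rangle_\tau|\le\|O\|_\tau^2$, so the right side of \eqref{eq:commutator-sum} is at most $2\|O\|_\tau^2$.

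For the specialization, let $\{K_a\}$ be a Kraus family for $\Phi_Q$ closed under adjoints up to phases. Then $\sum_aK_a^\dagger K_a=I$ by definition. Adjoint closure gives $\sum_aK_aK_a^\dagger=\sum_aK_{\iota(a)}^\dagger K_{\iota(a)}=I$, since the phases cancel and $\iota$ is a bijection. Thus \eqref{eq:double-completeness} holds with $\Psi=\Phi_Q$.

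It remains to identify the right side with $\cE_0(O)=-\langle O,\cL_0(O)\rangle_\tau=\|O\|_\tau^2-\langle O,\Phi_Q(O)\rangle_\tau$. Here I must check that $\langle O,\Phi_Q(O)\rangle_\tau$ is real. I compute it from \eqref{eq:PhiQ-diagonal}: with $a$ the diagonal of $O$, it equals $N^{-1}\sum_f\overline{a_f}(Qa)_f$, which is real because $Q$ is real symmetric. Hence $\Re$ may be dropped.

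Finally, \eqref{eq:L0-block} gives
\[
 \cE_0(O)=N^{-1}\langle a,(I-Q)a\rangle+\|O-\cD(O)\|_\tau^2 .
\]
The classical Poincar\'e inequality \eqref{eq:classical-poincare} for uniform $\nu$ bounds the first term below by $(1-\lambda_2(Q))\Var_u(a)$. The variance splits orthogonally as $\Var_\tau(O)=\Var_u(a)+\|O-\cD(O)\|_\tau^2$, since the optimal constant $c$ only affects the diagonal part. Together these give $\cE_0(O)\ge\lambda_Q\Var_\tau(O)$. Equivalently, this last step is the gap characterization from \Cref{thm:associated-gap} combined with \eqref{eq:KMS-poincare} at $\sigma=\tau$.

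The only mildly delicate points are the use of $\sum_aL_aL_a^\dagger=I$ in the second term, which is where adjoint closure is essential, and the reality of the cross term. The rest is routine.
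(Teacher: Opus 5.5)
Your proposal is correct and follows essentially the same route as the paper. You expand the commutators using cyclicity and both completeness relations, get the contraction from the Schwarz inequality for unital CP maps plus trace preservation and the bound from Cauchy--Schwarz, and then obtain double completeness from adjoint closure, reality of $\langle O,\Phi_Q(O)\rangle_\tau$, and $\cE_0(O)\ge\lambda_Q\Var_\tau(O)$ via \Cref{thm:associated-gap}; your diagonal computation of reality and your block-decomposition derivation of the Poincar\'e inequality simply spell out what the paper obtains from self-adjointness of $\Phi_Q$ and the gap characterization.
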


\begin{proof}
The completeness relations give unitality and trace preservation.
Expanding the commutators and using cyclicity of the trace proves
\eqref{eq:commutator-sum}. The Schwarz inequality for unital completely
positive maps \cite[Corollary~2.8]{Choi1974} gives
$\|\Psi(O)\|_\tau\le\|O\|_\tau$; Cauchy--Schwarz proves the bound.
For $\Phi_Q$, adjoint closure gives \eqref{eq:double-completeness},
self-adjointness makes $\langle O,\Phi_Q(O)\rangle_\tau$ real, and
\Cref{thm:associated-gap} gives the Poincar\'e inequality.
\end{proof}

\paragraph{Hermitian Kraus representation.}
Let $\{K_a\}_{a\in\mathcal A}$ be a Kraus family for $\Phi_Q$ that is
closed under adjoints, with involution $\iota$.  Let $\Lab_1$ be the set of fixed points of
$\iota$ and $\Lab_2$ a set of representatives of its two-cycles.
Define the new index set
$\Lab_H:=\Lab_1\sqcup(\Lab_2\times\{+,-\})$.  Set $A_a:=K_a$ for
$a\in\Lab_1$ and
\begin{equation}
 A_{a,+}:=\frac{K_a+K_a^\dagger}{\sqrt2},
 \qquad
 A_{a,-}:=\frac{i(K_a-K_a^\dagger)}{\sqrt2},
 \qquad a\in\Lab_2.                                \label{eq:Hermitian-jumps}
\end{equation}
For each two-cycle, this is the unitary recombination
$2^{-1/2}\bigl(\begin{smallmatrix}1&1\\ i&-i\end{smallmatrix}\bigr)$.
By \cite[Corollary~2.23]{Watrous2018}, $\{A_b\}_{b\in\Lab_H}$ is a
Kraus family for $\Phi_Q$ consisting of Hermitian operators; in particular
$\sum_bA_b^2=I$ and $\sum_bA_bOA_b=\Phi_Q(O)$.

\subsection{Combining infinite-temperature Poincar\'e inequalities}
\label{subsec:mixture}

On the same finite-dimensional Hilbert space $\cH$, let each finite family
$\{A_{j,a}\}_a$ be normalized and closed under adjoints up to phases, and
suppose its Poincar\'e inequality at infinite temperature holds with constant $\kappa_j\ge0$.
For finitely many weights $w_j\ge0$ with $\sum_jw_j=1$, the family
$\{\sqrt{w_j}A_{j,a}\}_{j,a}$ is normalized and closed under adjoints up to
phases. Its commutator Dirichlet form is the weighted sum of the individual
forms, so their Poincar\'e inequalities add:
\[
 \frac12\sum_{j,a}\|[\sqrt{w_j}A_{j,a},O]\|_\tau^2
 \ge\Bigl(\sum_jw_j\kappa_j\Bigr)\Var_\tau(O).
\]
In particular, including the reset family
$\{D^{-1/2}\ket f\bra g\}_{f,g}$ with weight $\eta\in(0,1]$, where
$\{\ket f\}_{f=1}^D$ is an orthonormal basis of $\cH$, gives
infinite-temperature Poincar\'e constant at least $\eta$: its normalized
generator $X\mapsto D^{-1}\Tr(X)I-X$ has spectral gap one.
This does not by itself establish efficient access to the resulting couplings.

\section{CKG construction and invariances}
\label{app:ckg-details}

This appendix fixes the conventions for the CKG generator used in
\Cref{sec:ckg} and relates it to the normalized infinite-temperature
generator. We follow the construction and KMS detailed-balance result of
\cite{CKGTechnical,DLL2025} under \Cref{ass:ckg}; the additional
normalization in \Cref{ass:poincare} enters when identifying the
infinite-temperature limit.

\subsection{Construction and detailed balance}

At inverse temperature $\beta>0$, the construction begins by filtering
the Heisenberg evolution of each coupling with the normalized Gaussian
\begin{align}
 f_\beta(t)&:=\frac{e^{-t^2/\beta^2}}
 {\sqrt{\beta\sqrt{\pi/2}}},
 &
 \widehat f_\beta(\omega)&:=\frac1{\sqrt{2\pi}}\int_{\mathbb R}
 f_\beta(t)e^{-i\omega t}\,dt=
 \frac{\sqrt\beta}{(2\pi)^{1/4}}e^{-\beta^2\omega^2/4},  \label{eq:filter}
\end{align}
in the convention of \cite[Eqs.~(7)--(8)]{CKGTechnical}.
Here $\int f_\beta^2=\int\widehat f_\beta^{\,2}=1$ by Plancherel
\cite[Theorem~9.13]{Rudin1987}, and
$\int t^2f_\beta(t)^2\,dt=\beta^2/4$.
The resulting energy-filtered coupling operators are
\begin{equation}
 \widehat A_a(\omega):=\frac1{\sqrt{2\pi}}
 \int_{\mathbb R}f_\beta(t)e^{-i\omega t}
 e^{iHt}A_ae^{-iHt}\,dt.                                  \label{eq:OFT}
\end{equation}
These operators enter the dissipative part of the generator with either
of the CKG transition weights
\begin{align}
 \gamma_{\rm G}(\omega):=e^{-(\beta\omega+1)^2/2},\quad
 \gamma_{\rm M}(\omega):=
 e^{-\beta\max\{\omega+1/(2\beta),0\}}.                \label{eq:weights}
\end{align}
These are the weights in \cite[Eqs.~(10) and~(14)]{CKGTechnical}. Both are
measurable functions $\mathbb R\to(0,1]$; the bound $\gamma\le1$ will imply
$R_\gamma\preceq I$.

To specify the coherent correction accompanying these dissipative terms,
we resolve operators into Bohr frequencies. Let
$H=\sum_{E\in\spec(H)}E\,P_E$ and
$\Omega(H):=\{E'-E:E,E'\in\spec(H)\}$. Every $X\in\cB(\cH)$ then has
the decomposition
\begin{equation}
 X=\sum_{\nu\in\Omega(H)}X_\nu,
 \qquad
 X_\nu:=\sum_{\substack{E,E'\in\spec(H)\\E'-E=\nu}}P_{E'}XP_E,
 \label{eq:bohr-components}
\end{equation}
as in \cite[Sec.~II.1, Eqs.~(32)--(33)]{CKGTechnical}.
With $\ad_H(X):=[H,X]$ and
$\sigma_\beta=e^{-\beta H}/\Tr(e^{-\beta H})$, this decomposition gives
\begin{equation}
 \begin{aligned}
 &\ad_H(X_\nu)=\nu X_\nu,\quad
 (X_\nu)^\dagger=(X^\dagger)_{-\nu},\\
 &\sigma_\beta^{1/2}X_\nu\sigma_\beta^{-1/2}=e^{-\beta\nu/2}X_\nu,\quad
 \widehat A_a(\omega)=\sum_\nu\widehat f_\beta(\omega-\nu)(A_a)_\nu,
 \end{aligned}
 \label{eq:bohr}
\end{equation}
where all sums over $\nu,\mu$ are over $\Omega(H)$; compare
\cite[Sec.~II.1, Eqs.~(32)--(34)]{CKGTechnical}.
Thus the same components describe both the frequency filtering and the
conjugation by $\sigma_\beta^{1/2}$. They also define the functional calculus
needed for the coherent correction: for $h:\Omega(H)\to\mathbb C$, set
\begin{equation}
 h(\ad_H)(X):=\sum_{\nu\in\Omega(H)}h(\nu)X_\nu.
 \label{eq:ad-calculus}
\end{equation}
Define the auxiliary space and operators by
\begin{align}
 \cK_\gamma&:=\ell^2(\mathcal A)\otimes L^2(\mathbb R,d\omega),\notag\\
 (V_\gamma|\psi\rangle)_{a,\omega}
 &:=\sqrt{\gamma(\omega)}\,\widehat A_a(\omega)|\psi\rangle,
 \qquad R_\gamma:=V_\gamma^\dagger V_\gamma
 =\sum_a\int_{\mathbb R}\gamma(\omega)\widehat A_a(\omega)^\dagger
 \widehat A_a(\omega)\,d\omega,\notag\\
 B_\gamma&:=\frac{i}{2}\tanh\!\left(
 \frac\beta4\ad_H\right)(R_\gamma),
 \qquad \ad_H(X):=[H,X].                   \label{eq:CKG-data}
\end{align}
Here $V_\gamma:\cH\to\cK_\gamma\otimes\cH$ collects the filtered jump
operators, and $B_\gamma$ is the correction prescribed by
\cite[Lemma~II.1, Eq.~(45)]{CKGTechnical}.
The resulting exact CKG generator in the Schr\"odinger picture is
\begin{equation}
 \cL_{\gamma,*}(\rho)=
 -i[B_\gamma,\rho]
 +\Tr_{\cK_\gamma}(V_\gamma\rho V_\gamma^\dagger)
 -\frac12\{R_\gamma,\rho\}
 =-i[B_\gamma,\rho]+\cT_\gamma(\rho)-\frac12\{R_\gamma,\rho\},
                                                       \label{eq:CKG-lindbladian}
\end{equation}
with gain map
$\cT_\gamma(\rho):=\sum_a\int\gamma(\omega)\widehat A_a(\omega)\rho
\widehat A_a(\omega)^\dagger\,d\omega$. We write
$\cL_\gamma:=\cL_{\gamma,*}^{\dagger}$ for the Heisenberg-picture
generator.

The coupling normalization in \Cref{ass:ckg} ensures that these
expressions are well defined and that
\begin{equation}
 \|V_\gamma\|\le1,\qquad
 0\preceq R_\gamma\preceq I,\qquad
 B_\gamma^\dagger=B_\gamma.
 \label{eq:ckg-data-bounded}
\end{equation}
To see this, put $A_a(t):=e^{iHt}A_ae^{-iHt}$. Plancherel,
$\gamma\le1$, and $\sum_aA_a(t)^\dagger A_a(t)\preceq I$ give
\[
 \|V_\gamma\psi\|^2
 \le\sum_a\int\|\widehat A_a(\omega)\psi\|^2d\omega
 =\int f_\beta(t)^2\sum_a\|A_a(t)\psi\|^2dt
 \le\|\psi\|^2.
\]
Hence $V_\gamma$ is a contraction and
$0\preceq R_\gamma=V_\gamma^\dagger V_\gamma\preceq I$.
Summing this bound over an orthonormal basis gives absolute norm
convergence of the integrals defining $R_\gamma$ and $\cT_\gamma$.
The Bohr identities in \eqref{eq:bohr} give
$(R_\gamma)_\nu^\dagger=(R_\gamma)_{-\nu}$, so oddness of $\tanh$
makes $B_\gamma$ Hermitian. Meanwhile, the representation
$\cT_\gamma(\rho)=\Tr_{\cK_\gamma}(V_\gamma\rho V_\gamma^\dagger)$
gives complete positivity. The operators
$\sqrt{\gamma(\omega)}\widehat A_a(\omega)$ form an integral Kraus
representation with loss operator $R_\gamma$; in finite dimension,
$\cT_\gamma$ also has a finite Kraus representation
\cite[Theorem~2.22]{Watrous2018}. Substituting it into
\eqref{eq:CKG-lindbladian} therefore gives a bounded GKSL generator
with Hamiltonian part $B_\gamma$
\cite{GoriniKossakowskiSudarshan1976,Lindblad1976}.

To apply the CKG detailed-balance theorem, we take the couplings to be
closed under adjoints. Although \Cref{ass:ckg} allows closure up to
phases, the following invariance shows that those phases can be removed
without changing the generator. The same observation justifies using
any Kraus representation of the coupling map in the applications.

\begin{lemma}[Invariance under isometric mixing of Kraus operators]
\label{lem:relabelling}
Let $\{A_a\}_{a\in\mathcal A}$ and $\{A'_b\}_{b\in\mathcal B}$ be finite families with
$A'_b=\sum_{a}U_{ba}A_a$, where
$U:\ell^2(\mathcal A)\to\ell^2(\mathcal B)$ is an isometry, so
$U^\dagger U=I_{\ell^2(\mathcal A)}$.
Then the two families define the same $R_\gamma$, $B_\gamma$,
$\cT_\gamma$, and hence the same generator $\cL_{\gamma,*}$, for every
$\beta>0$ and every bounded nonnegative transition weight $\gamma$.  In particular:
\begin{enumerate}
\item[(i)] multiplying the coupling operators by label-dependent phases does not
change $\cL_{\gamma,*}$;
\item[(ii)] a family that is closed under adjoints up to phases defines the same
generator as a family closed under adjoints: the phases can be removed
by choosing
$\theta_a+\theta_{\iota(a)}=-\varphi_a$ modulo $2\pi$ and
replacing $A_a$ by $e^{i\theta_a}A_a$;
\item[(iii)] any two Kraus representations of the same completely positive
map define the same exact CKG generator, at every temperature.
\end{enumerate}
\end{lemma}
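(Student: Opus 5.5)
The plan is to show that $R_\gamma$ and $\cT_\gamma$ are invariant under the mixing, and then read off everything else. First I would note that the operator Fourier transform \eqref{eq:OFT} is linear in the coupling. Since conjugation $A\mapsto e^{iHt}Ae^{-iHt}$ and integration against $f_\beta(t)e^{-i\omega t}$ are both linear, $A'_b=\sum_aU_{ba}A_a$ gives
\[
 \widehat{A'_b}(\omega)=\sum_aU_{ba}\widehat A_a(\omega)
\]
for every $\omega$. Equivalently, the collection map satisfies $V'_\gamma=(U\otimes I_{L^2}\otimes I_{\cH})V_\gamma$ as maps $\cH\to\cK'_\gamma\otimes\cH$, because the factor $\sqrt{\gamma(\omega)}$ is label independent. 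Nothing here depends on $\beta$ or on the particular bounded weight $\gamma$.

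Second, I would compute the two data pointwise in $\omega$, using $U^\dagger U=I$:
\[
 \sum_b\widehat{A'_b}(\omega)^\dagger\widehat{A'_b}(\omega)
 =\sum_{a,a'}\Bigl(\sum_b\overline{U_{ba}}U_{ba'}\Bigr)\widehat A_a(\omega)^\dagger\widehat A_{a'}(\omega)
 =\sum_a\widehat A_a(\omega)^\dagger\widehat A_a(\omega).
\]
The same identity holds with $\rho$ inserted between the factors, so $\sum_b\widehat{A'_b}(\omega)\rho\widehat{A'_b}(\omega)^\dagger$ also equals its unprimed counterpart. Integrating against $\gamma(\omega)\,d\omega$ then gives $R'_\gamma=R_\gamma$ and $\cT'_\gamma=\cT_\gamma$; the integrals converge absolutely by the argument after \eqref{eq:ckg-data-bounded}. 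Next, $B_\gamma$ is obtained from $R_\gamma$ by the functional calculus \eqref{eq:ad-calculus}, which depends only on $H$ and $\beta$, so $B'_\gamma=B_\gamma$. Substituting into \eqref{eq:CKG-lindbladian} yields $\cL'_{\gamma,*}=\cL_{\gamma,*}$. The only care needed is the exchange of the finite sums over labels with the $\omega$-integral, and this is immediate from absolute convergence.

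The three consequences then follow.
\begin{enumerate}
\item[(i)] A diagonal matrix of phases is unitary, so relabelling by phases is a special case.
\item[(ii)] Replacing $A_a$ by $e^{i\theta_a}A_a$ changes nothing by (i). The new family satisfies
\[
 e^{i\theta_{\iota(a)}}A_{\iota(a)}=e^{i(\theta_{\iota(a)}+\varphi_a)}A_a^\dagger,
\]
and this equals $(e^{i\theta_a}A_a)^\dagger=e^{-i\theta_a}A_a^\dagger$ exactly when $\theta_a+\theta_{\iota(a)}=-\varphi_a$. I would check solvability orbit by orbit under $\iota$. On a two-cycle $\{a,\iota(a)\}$ the equations for $a$ and for $\iota(a)$ must agree, which requires $\varphi_a\equiv\varphi_{\iota(a)}$; this follows from applying adjoint closure twice, $A_a=e^{i\varphi_{\iota(a)}}A_{\iota(a)}^\dagger=e^{i(\varphi_{\iota(a)}-\varphi_a)}A_a$, whenever $A_a\ne0$, and zero operators may be discarded. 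On a fixed point the equation is $2\theta_a=-\varphi_a$, which is solvable. This consistency check is the only step needing attention, and it is where I expect the main obstacle to lie.
\item[(iii)] By \cite[Corollary~2.23]{Watrous2018}, after padding both families with zero operators to a common index set, two Kraus representations of the same map are related by a unitary. Zero operators contribute nothing to $R_\gamma$ or $\cT_\gamma$, so the main identity applies and the two generators coincide.
\end{enumerate}
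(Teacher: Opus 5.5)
Your proposal is correct and follows the paper's proof essentially step by step. You use linearity of the operator Fourier transform together with $U^\dagger U=I$ to leave $R_\gamma$ and $\cT_\gamma$ unchanged at each frequency, hence also $B_\gamma$ and $\cL_{\gamma,*}$; you treat (i) as the diagonal-unitary special case, solve $\theta_a+\theta_{\iota(a)}\equiv-\varphi_a$ orbit by orbit using $\varphi_{\iota(a)}\equiv\varphi_a$ for nonzero $A_a$, and obtain (iii) from Watrous's unitary equivalence of Kraus representations after zero-padding. The only cosmetic difference is that the paper writes down the explicit choice $\theta_a=-\varphi_a/2$ on fixed points and $(\theta_a,\theta_{\iota(a)})=(-\varphi_a,0)$ on two-cycles, which your solvability check covers implicitly.
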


\begin{proof}
Linearity gives $\widehat A'_b(\omega)=\sum_aU_{ba}\widehat A_a(\omega)$.
Since $U^\dagger U=I$, the sums defining $\cT_\gamma$ and $R_\gamma$
are unchanged at each frequency, hence so are $B_\gamma$ and the generator.
This proves (i). For (ii), adjoint-closure gives
$\varphi_{\iota(a)}\equiv\varphi_a\pmod{2\pi}$ for nonzero operators.
Choose $\theta_a=-\varphi_a/2$ on fixed points and
$(\theta_a,\theta_{\iota(a)})=(-\varphi_a,0)$ on each two-cycle;
zero operators impose no constraint. Then the rephased family is closed
under adjoints. For (iii), use the unitary equivalence of Kraus
representations after padding with zeros
\cite[Corollary~2.23]{Watrous2018}.
\end{proof}

After this phase removal, \cite[Theorem~I.1]{CKGTechnical} applies to
either transition weight in \eqref{eq:weights}. To match its conventions,
set $\sigma_E=1/\beta$ and, for the Gaussian weight,
$\sigma_\gamma=\omega_\gamma=1/\beta$. These choices give
\eqref{eq:filter}--\eqref{eq:CKG-lindbladian}, with
\cite[Lemma~II.1]{CKGTechnical} prescribing the coherent correction.
Writing $\sigma:=\sigma_\beta$ and using $\Gamma_\sigma$ from \Cref{def:gap},
the theorem gives
\begin{equation}
 \cL_{\gamma,*}\circ\Gamma_\sigma=\Gamma_\sigma\circ\cL_\gamma.
 \label{eq:KMS-DB}
\end{equation}
Their Schr\"odinger-picture detailed-balance convention
\cite[Definition~II.1]{CKGTechnical} is
$\cL_\gamma=\Gamma_\sigma^{-1}\circ\cL_{\gamma,*}\circ\Gamma_\sigma$,
which is precisely \eqref{eq:KMS-DB}. Thus $\cL_\gamma$ is KMS
symmetric with respect to $\langle\cdot,\cdot\rangle_{\sigma,1/2}$.
Applying \eqref{eq:KMS-DB} to $I$ also gives
$\cL_{\gamma,*}(\sigma)=0$, since $\Gamma_\sigma(I)=\sigma$ and
$\cL_\gamma(I)=0$. Thus the Gibbs state is stationary for the corrected CKG generator,
as required in the spectral gap comparison.

\subsection{Temperature scaling and normalization}
\label{app:infinite temperature-normalization}

The same formulas determine how the generator depends on the energy
scale and inverse temperature. For fixed couplings and either choice of
transition weight, replacing $(H,\beta)$ by $(cH,\beta/c)$ with $c>0$
leaves $R_\gamma$, $B_\gamma$, and $\cT_\gamma$ unchanged. With
superscripts indicating these parameters, this gives
\begin{equation}
 \cL_{\gamma,*}^{(cH,\beta/c)}=\cL_{\gamma,*}^{(H,\beta)}.
 \label{eq:ckg-scaling}
\end{equation}
Indeed, for $\beta'=\beta/c$ and $H'=cH$, changing variables $s=ct$ gives
$f_{\beta'}(t)=\sqrt c\,f_\beta(ct)$,
$\widehat A'_a(\omega')=c^{-1/2}\widehat A_a(\omega'/c)$, and
$\gamma'(\omega')=\gamma(\omega'/c)$. Thus
$V'_\gamma=(I\otimes D_c\otimes I)V_\gamma$, where the unitary
frequency rescaling is $(D_c\phi)(\omega')=c^{-1/2}\phi(\omega'/c)$.
Substitution $\omega'=c\omega$ leaves $R_\gamma$ and $\cT_\gamma$
unchanged; $\beta'\ad_{H'}=\beta\ad_H$ gives the same $B_\gamma$
and generator. The generator therefore depends on $(H,\beta)$
only through $\beta H$.

To compute the infinite-temperature limit, define
$q_\gamma:=\int_{\mathbb R}\gamma(\omega)\widehat f_\beta(\omega)^2d\omega$.
After the substitution $x=\beta\omega$ and completion of the square,
$q_{\gamma_{\rm G}}=e^{-1/4}/\sqrt2\approx0.551$ and
$q_{\gamma_{\rm M}}=\operatorname{erfc}(1/(2\sqrt2))\approx0.617$, where
$\operatorname{erfc}(x):=2\pi^{-1/2}\int_x^\infty e^{-t^2}dt$ is the
complementary error function. Both constants lie in $(0,1]$ and are
independent of $\beta$.

These constants relate the CKG generator to the normalized
infinite-temperature generator used in the main comparison. For fixed
$H$ and couplings satisfying \Cref{ass:poincare}, put
$\Phi_{\mathcal A}(X):=\sum_aA_a^\dagger X A_a$ and write
$\cL_{\beta,\gamma}$ for the Heisenberg-picture CKG generator with either
transition weight. Then
\begin{equation}
 \lim_{\beta\downarrow0}\cL_{\beta,\gamma}
 =q_\gamma\cL_{\mathcal A,0}.
 \label{eq:infinite temperature-limit}
\end{equation}
The limit holds in every norm on this finite-dimensional space of
superoperators. To verify it, use \eqref{eq:bohr} and put
$s=\beta\omega$. The transition weight becomes a fixed function of
$s$, and each coefficient
$\int\gamma(\omega)\widehat f_\beta(\omega-\nu)
\widehat f_\beta(\omega-\mu)\,d\omega$ tends to $q_\gamma$.
Dominated convergence applies because there are finitely many
frequencies and the translated Gaussian products have a common
integrable bound for sufficiently small $\beta$.
Thus the Heisenberg-picture gain map tends to
$q_\gamma\Phi_{\mathcal A}$ and $R_\gamma\to q_\gamma I$, while
\eqref{eq:CKG-data} gives $B_\gamma\to0$.
Throughout the paper, spectral gap bounds and Poincar\'e constants at
infinite temperature refer to the normalized generator $\cL_{\mathcal A,0}$,
unless the factor $q_\gamma$ is explicitly included.

\section{Proofs of the spectral gap comparisons}
\label{app:gap-comparison-proofs}

This appendix proves the comparison results supporting
\Cref{sec:comparison-main}. The stationary commutator identity is the
common starting point. We first prove the spectral gap bound valid at every finite inverse
temperature, then state and prove the complementary perturbative comparison.
The final corollary applies these bounds to couplings obtained from a
symmetric Markov chain. Trace-distance mixing bounds are given next in
\Cref{sec:mixing}.

\subsection{The stationary commutator identity}

Use the Gelfand--Naimark--Segal (GNS) inner product
$\langle X,Y\rangle_{\sigma,\mathrm{GNS}}:=\Tr(\sigma X^\dagger Y)$.
The associated squared norm is
$\|X\|_{\sigma,\mathrm{GNS}}^2:=\langle X,X\rangle_{\sigma,\mathrm{GNS}}$,
and its variance is
$\Var_{\sigma,\mathrm{GNS}}(X):=\inf_c\|X-cI\|_{\sigma,\mathrm{GNS}}^2
=\|X\|_{\sigma,\mathrm{GNS}}^2-|\Tr(\sigma X)|^2$.
A generator is GNS symmetric with respect to $\sigma$ if it is self-adjoint for $\langle\cdot,\cdot\rangle_{\sigma,\mathrm{GNS}}$. This is the $s=1$ GNS convention of \cite[Eq.~(2.8) and Definition~4]{DLL2025}.

\begin{lemma}[Stationary commutator identity~\cite{Lindblad1976}]
\label{lem:carre}
Adopt \Cref{ass:ckg}, fix $\beta>0$ and
$\gamma\in\{\gamma_{\rm G},\gamma_{\rm M}\}$, and put
$\sigma:=e^{-\beta H}/\Tr(e^{-\beta H})$.  For every $X\in\cB(\cH)$,
\[
 -\Re\Tr(\sigma X^\dagger\cL_\gamma(X))
 =\frac12\sum_a\int_{\mathbb R}\gamma(\omega)
 \|[\widehat A_a(\omega),X]\|_{\sigma,\mathrm{GNS}}^2\,d\omega.
\]
The Hamiltonian part cancels in
$\cL_\gamma(X^\dagger X)-\cL_\gamma(X^\dagger)X-X^\dagger\cL_\gamma(X)$.
Taking its expectation in the stationary state $\sigma$ gives the identity.
\end{lemma}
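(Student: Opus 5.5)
The plan is to compute the carré du champ of the Heisenberg-picture generator, show that it equals the frequency-integrated commutator sum, and then take its expectation in $\sigma$, where stationarity kills the term $\Tr(\sigma\,\cL_\gamma(X^\dagger X))$. Write the Heisenberg-picture generator dual to \eqref{eq:CKG-lindbladian} as
\[
\cL_\gamma(X)=i[B_\gamma,X]+\cT_\gamma^\dagger(X)-\tfrac12\{R_\gamma,X\},
\qquad
\cT_\gamma^\dagger(X)=\sum_a\int\gamma(\omega)\,\widehat A_a(\omega)^\dagger X\widehat A_a(\omega)\,d\omega .
\]
This duality follows by cyclicity of the trace, using $B_\gamma^\dagger=B_\gamma$ and $R_\gamma^\dagger=R_\gamma$ from \eqref{eq:ckg-data-bounded}. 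All frequency integrals converge absolutely in norm, as shown after \eqref{eq:ckg-data-bounded}. Hence linear operations, products with the fixed operator $X$, and traces may be exchanged with $\int d\omega$ and $\sum_a$.

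\textbf{Step 1 (Hamiltonian part cancels).} Set $\Gamma(X):=\cL_\gamma(X^\dagger X)-\cL_\gamma(X^\dagger)X-X^\dagger\cL_\gamma(X)$. Since $Y\mapsto i[B_\gamma,Y]$ is a derivation, its contribution to $\Gamma(X)$ vanishes identically.

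\textbf{Step 2 (dissipative part is the commutator sum).} Abbreviate $\widehat A=\widehat A_a(\omega)$ and expand the pointwise identity
\[
[\widehat A,X]^\dagger[\widehat A,X]
=X^\dagger\widehat A^\dagger\widehat AX-X^\dagger\widehat A^\dagger X\widehat A-\widehat A^\dagger X^\dagger\widehat AX+\widehat A^\dagger X^\dagger X\widehat A .
\]
Multiply by $\gamma(\omega)$, sum over $a$, and integrate. On the other side, the anticommutator terms of $\Gamma(X)$ give
\[
-\tfrac12\{R_\gamma,X^\dagger X\}+\tfrac12\{R_\gamma,X^\dagger\}X+\tfrac12X^\dagger\{R_\gamma,X\}=X^\dagger R_\gamma X .
\]
Together with the three $\cT_\gamma^\dagger$ terms, this matches the expansion term by term. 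Hence
\[
\Gamma(X)=\sum_a\int\gamma(\omega)\,[\widehat A_a(\omega),X]^\dagger[\widehat A_a(\omega),X]\,d\omega .
\]

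\textbf{Step 3 (take the $\sigma$-expectation).} By \eqref{eq:KMS-DB} applied to $I$, we have $\cL_{\gamma,*}(\sigma)=0$. Therefore $\Tr(\sigma\,\cL_\gamma(Y))=\Tr(\cL_{\gamma,*}(\sigma)Y)=0$ for every $Y$, and in particular for $Y=X^\dagger X$. Next, the GKSL form is Hermiticity preserving, so $\cL_\gamma(X^\dagger)=\cL_\gamma(X)^\dagger$. Thus
\[
\Tr\bigl(\sigma\,\cL_\gamma(X)^\dagger X\bigr)=\overline{\Tr\bigl(X^\dagger\cL_\gamma(X)\sigma\bigr)}=\overline{\Tr\bigl(\sigma X^\dagger\cL_\gamma(X)\bigr)} .
\]
Combining these, $\Tr(\sigma\,\Gamma(X))=-2\Re\Tr(\sigma X^\dagger\cL_\gamma(X))$. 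Comparing with Step 2, and recognizing $\Tr(\sigma\,Y^\dagger Y)=\|Y\|_{\sigma,\mathrm{GNS}}^2$, gives the claimed identity after dividing by $2$.

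The computation is standard. The only points needing care are the exchange of the $\omega$-integral with products and traces, already justified by the norm bounds in \eqref{eq:ckg-data-bounded}, and the sign bookkeeping in Step 2, which I would verify term by term. Note that no GNS symmetry is used: only stationarity of $\sigma$ and Hermiticity preservation enter.
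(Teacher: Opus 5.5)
Your proposal is correct and follows the paper's proof essentially step for step. Like the paper, you apply the GKSL dissipation identity to the filtered jumps $\sqrt{\gamma(\omega)}\widehat A_a(\omega)$, integrate using the norm convergence established after \eqref{eq:ckg-data-bounded}, remove $\Tr(\sigma\cL_\gamma(X^\dagger X))$ by stationarity, and note that the remaining two terms are complex conjugates; your explicit derivation check for $i[B_\gamma,\cdot]$ and the anticommutator bookkeeping just fill in what the paper cites from Lindblad.
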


\begin{proof}
The GKSL dissipation identity \cite[Sec.~3]{Lindblad1976} is
\[
 \cL(X^\dagger X)-\cL(X^\dagger)X-X^\dagger\cL(X)
 =\sum_j[L_j,X]^\dagger[L_j,X].
\]
Apply it to the jump operators $\sqrt{\gamma(\omega)}\widehat A_a(\omega)$
and integrate, using the norm convergence established following \eqref{eq:ckg-data-bounded}. Taking the expectation in the stationary state
$\sigma$ eliminates the first term; the other two terms are complex
conjugates. This gives the claimed identity without assuming GNS symmetry.
\end{proof}

\subsection{Proof of the all-temperature comparison}
\label{app:comparison-proof}

\begin{proof}[Proof of \Cref{thm:metropolis-all-temperature}]
Write $\sigma:=\sigma_\beta$, and let $p_{\min}$ and $p_{\max}$ be its
smallest and largest eigenvalues.  Then
$p_{\min}/p_{\max}=e^{-\beta W}=e^{-b}$.  First take $X=X^\dagger$, set
$Y_{a,\omega}:=[\widehat A_a(\omega),X]$, and define
\begin{equation}
 I_{\rm M}(X):=\sum_a\int_{\mathbb R}\gamma_{\rm M}(\omega)
 \Tr(\sigma Y_{a,\omega}^\dagger Y_{a,\omega})\,d\omega .
 \label{eq:IM-definition}
\end{equation}
By \Cref{lem:carre},
\begin{equation}
 -\Re\Tr\!\left(\sigma X\cL_{\gamma_{\rm M}}(X)\right)
 =\frac12I_{\rm M}(X).                                  \label{eq:IM-carre}
\end{equation}

The filter $f_\beta$ is real and even. Combining this with adjoint closure
up to phases in
\eqref{eq:generic-CKG-assumptions} gives
\[
 \widehat A_{\iota(a)}(-\omega)
 =e^{i\varphi_a}\widehat A_a(\omega)^\dagger,
 \qquad
 Y_{\iota(a),-\omega}
 =-e^{i\varphi_a}Y_{a,\omega}^\dagger .
\]
Relabelling $a$ by $\iota(a)$ and changing $\omega$ to $-\omega$ in
\eqref{eq:IM-definition} therefore give the equivalent expression
\begin{equation}
 I_{\rm M}(X)=\sum_a\int_{\mathbb R}\gamma_{\rm M}(-\omega)
 \Tr(\sigma Y_{a,\omega}Y_{a,\omega}^\dagger)\,d\omega .
 \label{eq:IM-adjoint-representation}
\end{equation}
Average \eqref{eq:IM-definition} and
\eqref{eq:IM-adjoint-representation}.  Since $\sigma\succeq p_{\min}I$,
both traces in the resulting integrand are at least
$p_{\min}\|Y_{a,\omega}\|_2^2$.  Moreover, on writing
$x:=\beta\omega$, the definition of $\gamma_{\rm M}$ gives
$\gamma_{\rm M}(\omega)+\gamma_{\rm M}(-\omega)\ge1$.
Indeed, if $|x|\le\frac12$, the left side is
$2e^{-1/2}\cosh x>1$; if $x\ge\frac12$ or $x\le-\frac12$, one of its
two terms equals $1$.  Consequently,
\begin{equation}
 I_{\rm M}(X)
 \ge\frac{p_{\min}}2\sum_a\int_{\mathbb R}
 \|Y_{a,\omega}\|_2^2\,d\omega
 =\frac{Dp_{\min}}2\sum_a\int_{\mathbb R}
 \|Y_{a,\omega}\|_\tau^2\,d\omega .                   \label{eq:IM-HS-lower}
\end{equation}

The function $\omega\mapsto Y_{a,\omega}$ is the Fourier transform of
$t\mapsto f_\beta(t)[A_a(t),X]$, where
$A_a(t):=e^{iHt}A_ae^{-iHt}$.  Plancherel's theorem and unitary invariance
of the Poincar\'e inequality at infinite temperature give
\begin{align}
 \sum_a\int_{\mathbb R}\|Y_{a,\omega}\|_\tau^2\,d\omega
 &=\int_{\mathbb R}f_\beta(t)^2
   \sum_a\|[A_a(t),X]\|_\tau^2\,dt\notag\\
 &\ge2\kappa_0\Var_\tau(X),                            \label{eq:unweighted-Plancherel-lower}
\end{align}
because $\int_{\mathbb R}f_\beta(t)^2dt=1$.  Combining
\eqref{eq:IM-carre}, \eqref{eq:IM-HS-lower}, and
\eqref{eq:unweighted-Plancherel-lower}, and using
$\Var_{\sigma,\mathrm{GNS}}(X)\le Dp_{\max}\Var_\tau(X)$, gives
\begin{equation}
 -\Re\Tr\!\left(\sigma X\cL_{\gamma_{\rm M}}(X)\right)
 \ge\frac12e^{-b}\kappa_0\Var_{\sigma,\mathrm{GNS}}(X)
 \qquad(X=X^\dagger).                                  \label{eq:metropolis-Hermitian-Poincare}
\end{equation}

\Cref{eq:metropolis-Hermitian-Poincare} implies that every Hermitian
fixed operator is scalar.  A Heisenberg-picture GKSL generator preserves
adjoints, so the Hermitian and skew-Hermitian parts of any fixed operator
are also fixed.  Thus $\Fix(\cL_{\gamma_{\rm M}})=\mathbb CI$, and
the detailed-balance identity \eqref{eq:KMS-DB} and
\Cref{lem:gap-primitive} imply primitivity and uniqueness of the stationary state.

Finally, KMS symmetry gives real eigenvalues, and preservation of
adjoints ensures that every eigenspace corresponding to a nonzero eigenvalue
contains a nonzero Hermitian operator.  If
$\cL_{\gamma_{\rm M}}(X)=-\lambda X$ for such an $X$ with
$\lambda\ne0$, then stationarity gives $\Tr(\sigma X)=0$.  Hence
$\Var_{\sigma,\mathrm{GNS}}(X)=\Tr(\sigma X^2)$, and
\eqref{eq:metropolis-Hermitian-Poincare} gives
$\lambda\ge\frac12e^{-b}\kappa_0$.  Taking the smallest nonzero
eigenvalue of $-\cL_{\gamma_{\rm M}}$ proves \eqref{eq:metropolis-all-temperature-gap}.
\end{proof}

\subsection{Perturbative comparison near infinite temperature}
\label{app:perturbative-proof}

The next estimate also covers the Gaussian transition weight. Its proof
bounds the difference between $[\widehat A_a(\omega),X]$ and
$\widehat f_\beta(\omega)[A_a,X]$, and compares the GNS and normalized
Hilbert--Schmidt norms. It does not assume separated Bohr frequencies.

\begin{theorem}[Perturbative comparison near infinite temperature]
\label{thm:comparison}
Adopt \Cref{ass:poincare}, and fix $\beta>0$ and
$\gamma\in\{\gamma_{\rm G},\gamma_{\rm M}\}$.  Set $b:=\beta W$ and
$\delta_\gamma:=\sqrt2e^bb/\sqrt{q_\gamma\kappa_0}$.
If $\delta_\gamma<1$, then the exact CKG
generator is primitive, $\sigma_\beta$ is its unique stationary state, and
\begin{equation} 
 \gap(-\cL_\gamma)\ge
 q_\gamma(1-\delta_\gamma)^2e^{-b}\kappa_0.              \label{eq:comparison-gap}
\end{equation}
No assumption is imposed on the number, spacing, or multiplicity of Bohr
frequencies.  Under the condition $\delta_\gamma<1$, the right side of
\eqref{eq:comparison-gap} is increasing in $\kappa_0$ and decreasing in
$b$.  More precisely, if $0<\underline\kappa\le\kappa_0$,
$\bar b\ge b$, and
$\bar\delta_\gamma:=\sqrt2e^{\bar b}\bar b/
\sqrt{q_\gamma\underline\kappa}<1$,
then
\begin{equation}
 \gap(-\cL_\gamma)\ge
 q_\gamma(1-\bar\delta_\gamma)^2e^{-\bar b}\underline\kappa.
 \label{eq:comparison-substituted-gap}
\end{equation}
\end{theorem}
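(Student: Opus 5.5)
The plan is to follow the architecture of the proof of \Cref{thm:metropolis-all-temperature}, replacing the pointwise inequality $\gamma_{\rm M}(\omega)+\gamma_{\rm M}(-\omega)\ge1$ by a perturbative estimate. Fix $X=X^\dagger$. Since the commutators and both variances are unchanged by $X\mapsto X-cI$, we may assume $\Tr(\tau X)=0$, so that $\|X\|_\tau^2=\Var_\tau(X)$. By \Cref{lem:carre},
\[
 -\Re\Tr(\sigma X\cL_\gamma(X))=\frac12\sum_a\int\gamma(\omega)\|Y_{a,\omega}\|_{\sigma,\mathrm{GNS}}^2\,d\omega
 \ge\frac{Dp_{\min}}2\sum_a\int\gamma(\omega)\|Y_{a,\omega}\|_\tau^2\,d\omega,
\]
where $Y_{a,\omega}=[\widehat A_a(\omega),X]$. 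We then split
\[
 Y_{a,\omega}=\widehat f_\beta(\omega)[A_a,X]+E_{a,\omega}.
\]
Here $E_{a,\omega}$ is the Fourier transform of $t\mapsto f_\beta(t)[A_a(t)-A_a,X]$. The triangle inequality in the Hilbert space $\ell^2(\mathcal A)\otimes L^2(\gamma\,d\omega)\otimes(\cB(\cH),\|\cdot\|_\tau)$ gives
\[
 \Bigl(\sum_a\int\gamma\|Y_{a,\omega}\|_\tau^2\Bigr)^{1/2}
 \ge\Bigl(q_\gamma\sum_a\|[A_a,X]\|_\tau^2\Bigr)^{1/2}-\Bigl(\sum_a\int\|E_{a,\omega}\|_\tau^2\Bigr)^{1/2},
\]
using $\gamma\le1$ in the error term and the definition of $q_\gamma$ in the main term. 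By \Cref{ass:poincare}, the main term is at least $\sqrt{2q_\gamma\kappa_0\Var_\tau(X)}$.

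The error term is the main obstacle. By Plancherel,
\[
 \sum_a\int\|E_{a,\omega}\|_\tau^2\,d\omega=\int f_\beta(t)^2\sum_a\|[B_a(t),X]\|_\tau^2\,dt,
 \qquad B_a(t):=A_a(t)-A_a.
\]
Since $[B_a,X]=B_aX-XB_a$, we would bound
\[
 \sum_a\|[B_a,X]\|_\tau^2\le2\bigl(\|\textstyle\sum_aB_a^\dagger B_a\|+\|\sum_aB_aB_a^\dagger\|\bigr)\|X\|_\tau^2.
\]
The second operator sum is controlled because adjoint closure, after removing the phases via \Cref{lem:relabelling}, makes $\{B_a^\dagger\}$ a relabelling of $\{B_a(t)\}$ up to phases. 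To bound $\sum_aB_a^\dagger B_a$, shift $H$ by a constant so that $\|H\|=W/2$; this changes neither $A_a(t)$ nor $\sigma$. Writing $B_a(t)=\int_0^t e^{iHs}i[H,A_a]e^{-iHs}\,ds$ and using $S_\cA=I$ gives
\[
 \sum_a B_a^\dagger B_a\preceq t^2\,\bigl\|\textstyle\sum_a[H,A_a]^\dagger[H,A_a]\bigr\|\,I\preceq t^2W^2I
\]
up to an absolute constant. Combined with $\int t^2f_\beta^2=\beta^2/4$, this yields an error bounded by a constant times $b^2\Var_\tau(X)$. The constants must be tracked so that the error is at most $\delta_\gamma\sqrt{2q_\gamma\kappa_0\Var_\tau(X)}$. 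The stated $\delta_\gamma$ carries an extra $e^b$, so there is slack: if the direct route falls short, I would instead estimate the error in the GNS norm and then convert with the factor $p_{\max}/p_{\min}=e^b$.

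Squaring the triangle-inequality bound when $\delta_\gamma<1$, and using $\Var_{\sigma,\mathrm{GNS}}(X)\le Dp_{\max}\Var_\tau(X)$, gives
\[
 -\Re\Tr(\sigma X\cL_\gamma(X))\ge q_\gamma(1-\delta_\gamma)^2e^{-b}\kappa_0\Var_{\sigma,\mathrm{GNS}}(X)
 \qquad\text{for Hermitian }X.
\]
The conclusion then follows verbatim from the end of the proof of \Cref{thm:metropolis-all-temperature}. The Hermitian Poincar\'e inequality forces $\Fix(\cL_\gamma)=\mathbb CI$, and \eqref{eq:KMS-DB} together with \Cref{lem:gap-primitive} gives primitivity and uniqueness of $\sigma_\beta$. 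A Hermitian eigenoperator for each nonzero eigenvalue satisfies $\Tr(\sigma X)=0$, which yields \eqref{eq:comparison-gap}. For the monotonicity claim, note that $\delta_\gamma$ is decreasing in $\kappa_0$ and increasing in $b$, as are $e^{-b}$ and $(1-\delta)^2$ in the opposite directions. Hence replacing $(\kappa_0,b)$ by $(\underline\kappa,\bar b)$ keeps $\bar\delta_\gamma<1$ and only lowers the right side, which gives \eqref{eq:comparison-substituted-gap}.
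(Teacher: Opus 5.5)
\textbf{Verdict.} Your proposal is correct, and it takes a genuinely different route from the paper's. That route is in fact sharper.

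\textbf{How the paper proceeds.} The paper works in the GNS norm throughout. It splits $[\widehat A_a(\omega),X]$ in the weighted space $\mathfrak K_{\gamma,\sigma}$. It then controls the remainder with three ingredients: the Jacobi identity $[[H,A],X]=[H,[A,X]]-[A,[H,X]]$, the upper bound $\mathscr D_t(X)\le4e^b\Var_{\sigma,\mathrm{GNS}}(X)$ from \Cref{lem:commutator-sum}, and $\|[H,\cdot]\|_{\sigma,\mathrm{GNS}}\le W$. Finally it compares with $\mathscr D_0(X)\ge2e^{-b}\kappa_0\Var_{\sigma,\mathrm{GNS}}(X)$. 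The two GNS/$\tau$ norm conversions inside the perturbation ratio are precisely the source of the factor $e^b$ in $\delta_\gamma$.

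\textbf{How your route differs.} You convert to $\|\cdot\|_\tau$ only twice: at the start, via $p_{\min}$, and at the end, via $p_{\max}$. You control the remainder by the operator inequalities for $\sum_aB_a^\dagger B_a$ and $\sum_aB_aB_a^\dagger$. This avoids both the Jacobi identity and the GNS upper bound.

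\textbf{The open constant.} The constant you left open is exactly one. Shift $H$ so that $\|H\|=W/2$. Then the map $\psi\mapsto([H,A_a]\psi)_a$ has norm at most $2\|H\|=W$, so $\sum_aB_a(t)^\dagger B_a(t)\preceq t^2W^2I$. Consequently $\sum_a\|[B_a(t),X]\|_\tau^2\le4t^2W^2\|X\|_\tau^2$ and $\sum_a\int\|E_{a,\omega}\|_\tau^2\,d\omega\le b^2\Var_\tau(X)$. The perturbation ratio is therefore at most $b/\sqrt{2q_\gamma\kappa_0}=e^{-b}\delta_\gamma/2$.

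\textbf{What this buys.} The GNS fallback you mention is unnecessary; it is essentially the paper's argument. Your direct route proves a stronger statement than \Cref{thm:comparison}: the bound $q_\gamma\bigl(1-b/\sqrt{2q_\gamma\kappa_0}\bigr)^2e^{-b}\kappa_0$ holds under the weaker condition $b<\sqrt{2q_\gamma\kappa_0}$. Since $b/\sqrt{2q_\gamma\kappa_0}\le\delta_\gamma$, this implies \eqref{eq:comparison-gap}. The paper's route gains only that it stays in the single inner product in which \Cref{lem:carre} is stated. Its norm conversions inside the perturbation step cost an extra factor $2e^{b}$ in the perturbation parameter.

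\textbf{Minor points.} Removing the phases via \Cref{lem:relabelling} is not needed. Since $B_a(t)^\dagger=e^{-i\varphi_a}B_{\iota(a)}(t)$, the phases cancel in $B_aB_a^\dagger$, and $\iota$ is a bijection. In the substitution step the logic should read as follows: $\delta_\gamma\le\bar\delta_\gamma<1$, so the first part of the theorem applies. Replacing $(\kappa_0,b)$ by $(\underline\kappa,\bar b)$ then only decreases each nonnegative factor of the right side.
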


\begin{proof}
Let $Z_\beta:=\Tr(e^{-\beta H})$.  The smallest and largest eigenvalues of
$\sigma=\sigma_\beta$ are respectively
$p_{\min}=e^{-\beta\lambda_{\max}(H)}/Z_\beta$ and
$p_{\max}=e^{-\beta\lambda_{\min}(H)}/Z_\beta$.
Thus $p_{\min}/p_{\max}=e^{-b}$.
Put $A_a(t):=e^{iHt}A_ae^{-iHt}$.
Also set $\mathscr D_t(X):=\sum_a\|[A_a(t),X]\|_{\sigma,\mathrm{GNS}}^2$
and $\Phi_t(Y):=\sum_aA_a(t)^\dagger YA_a(t)$.
The GNS and normalized Hilbert--Schmidt norms satisfy, for every $Y$,
\begin{equation}
 Dp_{\min}\|Y\|_\tau^2\le\|Y\|_{\sigma,\mathrm{GNS}}^2
 \le Dp_{\max}\|Y\|_\tau^2.
 \label{eq:metric}
\end{equation}
Unitary conjugation preserves the normalization $\sum_a A_a^\dagger A_a=I$ and the property of being
closed under adjoints up to phases, and the
family $\{A_a(t)\}$ satisfies the same Poincar\'e inequality because
$\|[A_a(t),O]\|_\tau=\|[A_a,e^{-iHt}Oe^{iHt}]\|_\tau$ and $\Var_\tau$ is
unitarily invariant.  Applying \eqref{eq:generic-L0-Poincare} to
$\{A_a(t)\}$ and \eqref{eq:metric}, and minimizing over scalars (which do
not change $\mathscr D_t$),
\begin{equation}
 \mathscr D_t(X)\ge Dp_{\min}\cdot2\kappa_0\Var_\tau(X)
 \ge2\kappa_0\frac{p_{\min}}{p_{\max}}\Var_{\sigma,\mathrm{GNS}}(X)
 =2e^{-b}\kappa_0\Var_{\sigma,\mathrm{GNS}}(X).                       \label{eq:D-lower}
\end{equation}
For an upper bound, the family $\{A_a(t)\}$ satisfies
$\sum_aA_a(t)^\dagger A_a(t)=I=\sum_aA_a(t)A_a(t)^\dagger$, so
\Cref{lem:commutator-sum} applies to it and gives
\[
 \sum_a\|[A_a(t),Y]\|_\tau^2
 =2\|Y\|_\tau^2-2\Re\langle Y,\Phi_t(Y)\rangle_\tau
 \le4\|Y\|_\tau^2 .
\]
With \eqref{eq:metric} applied to $Y=X-cI$ and minimizing over $c$,
\begin{equation}
 \mathscr D_t(X)\le4e^b\Var_{\sigma,\mathrm{GNS}}(X).                        \label{eq:D-upper}
\end{equation}
Because $[H,\sigma]=0$, the spectral resolution
$H=\sum_E E P_E$ gives
\[
 \|[H,Y]\|_{\sigma,\mathrm{GNS}}^2
 =\sum_{E,E'}\frac{e^{-\beta E}}{Z_\beta}(E'-E)^2
   \|P_{E'}YP_E\|_2^2
 \le W^2\|Y\|_{\sigma,\mathrm{GNS}}^2.
\]
Since $\Tr(\sigma[H,X])=0$ and $[H,X]=[H,X-cI]$,
\[
 \Var_{\sigma,\mathrm{GNS}}([H,X])=\|[H,X]\|_{\sigma,\mathrm{GNS}}^2
 \le W^2\Var_{\sigma,\mathrm{GNS}}(X).
\]
The Jacobi identity
$[[H,A],X]=[H,[A,X]]-[A,[H,X]]$, the triangle inequality in
$\ell^2(\mathcal A)\otimes(\cB(\cH),\|\cdot\|_{\sigma,\mathrm{GNS}})$, and \eqref{eq:D-upper}
applied to $X$ and to $[H,X]$ give
\begin{equation}
 \Bigl(\sum_a\|[[H,A_a(t)],X]\|_{\sigma,\mathrm{GNS}}^2\Bigr)^{1/2}
 \le W \mathscr D_t(X)^{1/2}+\mathscr D_t([H,X])^{1/2}
 \le4e^{b/2}W\Var_{\sigma,\mathrm{GNS}}(X)^{1/2}.                   \label{eq:derivative}
\end{equation}
Let $\mathfrak K_{\gamma,\sigma}$ be the space of equivalence classes,
modulo equality almost everywhere, of measurable families
$Y=(Y_a)_{a\in\mathcal A}$ of functions $Y_a:\mathbb R\to\cB(\cH)$ with
$\sum_a\int_{\mathbb R}\gamma(\omega)\|Y_a(\omega)\|_{\sigma,\mathrm{GNS}}^2d\omega<\infty$,
equipped with the inner product and norm
\[
 \langle Y,Z\rangle_{\mathfrak K_{\gamma,\sigma}}
 :=\sum_{a\in\mathcal A}\int_{\mathbb R}\gamma(\omega)
 \langle Y_a(\omega),Z_a(\omega)\rangle_{\sigma,\mathrm{GNS}}\,d\omega,
 \qquad
 \|Y\|_{\mathfrak K_{\gamma,\sigma}}^2
 :=\langle Y,Y\rangle_{\mathfrak K_{\gamma,\sigma}},
\]
where $\langle X,Y\rangle_{\sigma,\mathrm{GNS}}:=\Tr(\sigma X^\dagger Y)$ is the GNS
inner product.  This is a finite direct sum of weighted $L^2$ spaces of
$\cB(\cH)$-valued functions and hence a Hilbert space.  In it, write
\[
 \begin{gathered}
 [\widehat A_a(\omega),X]
 =\mathscr F_{a,X}(\omega)+\Xi_{a,X}(\omega),
 \qquad
 \mathscr F_{a,X}(\omega):=\widehat f_\beta(\omega)[A_a,X],\\
 \Xi_{a,X}(\omega):=\frac1{\sqrt{2\pi}}\int_{\mathbb R}f_\beta(t)e^{-i\omega t}
 [A_a(t)-A_a,X]\,dt,
 \end{gathered}
\]
and let $\mathscr F_X:=(\mathscr F_{a,X})_{a}$ and
$\Xi_X:=(\Xi_{a,X})_{a}$; both lie in $\mathfrak K_{\gamma,\sigma}$ because
$\widehat f_\beta$ and, by \eqref{eq:bohr}, $\omega\mapsto[\widehat A_a(\omega),X]$
are bounded in norm by $C e^{-c\omega^2}$ for suitable $C,c>0$.  By the definition of $q_\gamma$,
\begin{equation}
 \|\mathscr F_X\|_{\mathfrak K_{\gamma,\sigma}}^2
 =\sum_a\int_{\mathbb R}\gamma(\omega)\widehat f_\beta(\omega)^2d\omega\,
 \|[A_a,X]\|_{\sigma,\mathrm{GNS}}^2
 =q_\gamma\mathscr D_0(X).                                       \label{eq:F-norm}
\end{equation}
For the remainder, $0<\gamma\le1$ and the Plancherel theorem for
Hilbert-space-valued functions, applied to
$t\mapsto f_\beta(t)[A_a(t)-A_a,X]$, whose Fourier transform is
$\Xi_{a,X}$, give
\[
 \|\Xi_X\|_{\mathfrak K_{\gamma,\sigma}}^2
 \le\sum_a\int_{\mathbb R}\|\Xi_{a,X}(\omega)\|_{\sigma,\mathrm{GNS}}^2d\omega
 =\int_{\mathbb R}f_\beta(t)^2\sum_a\|[A_a(t)-A_a,X]\|_{\sigma,\mathrm{GNS}}^2\,dt .
\]
Since $\frac{d}{ds}A_a(s)=i[H,A_a(s)]$, the fundamental theorem of calculus
gives
\[
 [A_a(t)-A_a,X]=i\int_0^t[[H,A_a(s)],X]\,ds,
 \qquad\text{with }\int_0^t:=-\int_t^0\ \text{for }t<0 .
\]
The triangle inequality for the Bochner integral in
$\ell^2(\mathcal A)\otimes(\cB(\cH),\|\cdot\|_{\sigma,\mathrm{GNS}})$ and
\eqref{eq:derivative} give, for every $t$,
\[
\begin{aligned}
 \Bigl(\sum_a\|[A_a(t)-A_a,X]\|_{\sigma,\mathrm{GNS}}^2\Bigr)^{1/2}
 &\le\int_0^{|t|}
 \Bigl(\sum_a\|[[H,A_a(\operatorname{sgn}(t)s)],X]\|_{\sigma,\mathrm{GNS}}^2\Bigr)^{1/2}ds\\
 &\le4e^{b/2}W|t|\,\Var_{\sigma,\mathrm{GNS}}(X)^{1/2}.
\end{aligned}
\]
Inserting this bound and $\int t^2f_\beta(t)^2dt=\beta^2/4$,
\begin{align}
 \|\Xi_X\|_{\mathfrak K_{\gamma,\sigma}}^2
 &\le16e^bW^2\Var_{\sigma,\mathrm{GNS}}(X)
 \int_{\mathbb R}t^2f_\beta(t)^2dt
 =4e^bb^2\Var_{\sigma,\mathrm{GNS}}(X).                             \label{eq:R-bound}
\end{align}
We prove the Poincar\'e inequality \eqref{eq:GNS-poincare} stated below.
If $\Var_{\sigma,\mathrm{GNS}}(X)=0$, then $X$ is scalar and both of its sides vanish,
so assume $\Var_{\sigma,\mathrm{GNS}}(X)>0$.  Then \eqref{eq:D-lower},
\eqref{eq:F-norm}, and \eqref{eq:R-bound} give
\[
 \frac{\|\Xi_X\|_{\mathfrak K_{\gamma,\sigma}}}
      {\|\mathscr F_X\|_{\mathfrak K_{\gamma,\sigma}}}
 \le\frac{2e^{b/2}b\,\Var_{\sigma,\mathrm{GNS}}(X)^{1/2}}
         {\sqrt{2q_\gamma e^{-b}\kappa_0\Var_{\sigma,\mathrm{GNS}}(X)}}
 =\frac{\sqrt2e^bb}{\sqrt{q_\gamma\kappa_0}}
 =\delta_\gamma .
\]
The reverse triangle inequality in $\mathfrak K_{\gamma,\sigma}$ and
\Cref{lem:carre}, whose right side is
$\frac12\|\mathscr F_X+\Xi_X\|_{\mathfrak K_{\gamma,\sigma}}^2$, therefore give
\begin{align}
 -\Re\Tr(\sigma X^\dagger\cL_\gamma(X))
 &\ge\frac12q_\gamma(1-\delta_\gamma)^2\mathscr D_0(X)
 \ge q_\gamma(1-\delta_\gamma)^2e^{-b}\kappa_0
 \Var_{\sigma,\mathrm{GNS}}(X).                                    \label{eq:GNS-poincare}
\end{align}
This inequality first proves primitivity: a fixed operator has zero
GNS variance and is therefore scalar, since $\sigma$ is full rank.  By
\eqref{eq:KMS-DB} and \Cref{lem:gap-primitive}, $\cL_\gamma$ has a complete
KMS-orthonormal eigenbasis with real eigenvalues.  If $\cL_\gamma(X)=-\lambda X$ with $\lambda\ne0$,
then KMS-orthogonality of $X$ to $I$ gives $\Tr(\sigma X)=0$, so
$\Var_{\sigma,\mathrm{GNS}}(X)=\|X\|_{\sigma,\mathrm{GNS}}^2$, and \eqref{eq:GNS-poincare} gives
$\lambda\ge q_\gamma(1-\delta_\gamma)^2e^{-b}\kappa_0$.  Since
$\gap(-\cL_\gamma)$ is the smallest nonzero eigenvalue of $-\cL_\gamma$
(\Cref{def:gap}), this proves \eqref{eq:comparison-gap}.  The monotonicity
and substitution statements follow from the identity
\[
 q_\gamma e^{-b}\kappa_0(1-\delta_\gamma)^2
 =q_\gamma e^{-b}
 \left(\sqrt{\kappa_0}-\sqrt{\frac2{q_\gamma}}\,e^bb\right)^2.
\]
On the region $\delta_\gamma<1$, the expression in parentheses is positive;
it increases with $\kappa_0$ and decreases with $b$, while $e^{-b}$ also
decreases with $b$.  Moreover, the defining hypotheses for
$\bar\delta_\gamma$ imply
$\delta_\gamma\le\bar\delta_\gamma<1$.  Replacing $\kappa_0$ by
$\underline\kappa$ and $b$ by $\bar b$ therefore gives
\eqref{eq:comparison-substituted-gap}.
\end{proof}

\begin{addedremark}[Spectral gap stability]
For fixed couplings, uniform bounds $\kappa_0\ge\kappa_*>0$ and
$\beta W\le B$ give a uniform gap
$\gap(-\cL_{\beta,\mathrm M})\ge\tfrac12e^{-B}\kappa_*>0$.
Near infinite temperature, set
$g_{\gamma,0}:=q_\gamma\gap(-\cL_{\mathcal A,0})>0$,
the spectral gap of the CKG generator at infinite temperature in \eqref{eq:infinite temperature-limit}.
Taking $\kappa_0=\gap(-\cL_{\mathcal A,0})$ in
\Cref{thm:comparison}, with $b=\beta W$, gives
\[
 \frac{\gap(-\cL_\gamma)}{g_{\gamma,0}}
 \ge e^{-b}(1-\delta_\gamma)^2,
 \qquad
 \delta_\gamma=\frac{\sqrt2e^bb}{\sqrt{g_{\gamma,0}}}<1.
\]
The right-hand side tends to one as $b\to0$ for fixed $g_{\gamma,0}$.
This is a one-sided stability bound whose range of admissible inverse
temperatures can shrink with system size. Perturbative stability also underlies
high-temperature CKG Lindbladian spectral gap analysis \cite{RouzeFrancaAlhambra2024}.
\end{addedremark}

\subsection{Combined bounds and transfer from a Markov chain}

\begin{remark}[Combining the two spectral gap estimates]
If, in addition,
$\delta_{\gamma_{\rm M}}<1$, then \Cref{thm:comparison} and
\eqref{eq:metropolis-all-temperature-gap} may be combined as
\begin{equation}
 \gap(-\cL_{\gamma_{\rm M}})
 \ge e^{-b}\kappa_0
 \max\left\{\frac12,
 q_{\gamma_{\rm M}}(1-\delta_{\gamma_{\rm M}})^2\right\}.
 \label{eq:combined-metropolis-gap}
\end{equation}
\end{remark}

\begin{remark}[Scalar Hamiltonians]
\label{rem:scalar-H}
If $H$ is scalar, then $b=0$, $\widehat A_a(\omega)=\widehat f_\beta(\omega)A_a$,
$B_\gamma=0$, and $\cL_\gamma=q_\gamma\cL_{\mathcal A,0}$.  Hence
$\gap(-\cL_\gamma)=q_\gamma\gap(-\cL_{\mathcal A,0})$.
The comparison bound \eqref{eq:comparison-gap} is therefore an equality at
$b=0$ precisely when $\kappa_0$ is the optimal Poincar\'e constant of the
normalized infinite-temperature generator.  The factors $e^{\pm b}$ in
\eqref{eq:D-lower}--\eqref{eq:D-upper} arise from the comparison of the
GNS and normalized Hilbert--Schmidt norms in \eqref{eq:metric}; these
factors are not claimed to be optimal at finite temperature.
\end{remark}

\begin{corollary}[Transfer from a symmetric Markov chain]
\label{cor:generic}
Fix $\beta>0$ and $\gamma\in\{\gamma_{\rm G},\gamma_{\rm M}\}$.  Let $F$ be
a finite set with $|F|\ge2$, let
$Q=Q^{\mathsf T}$ be the transition matrix of an irreducible Markov chain
on $F$, let $\{A_a\}$ be any Kraus family for $\Phi_Q$ that is closed under
adjoints up to phases
(\Cref{def:Markov-kernel-embedding} and \eqref{eq:Hermitian-jumps}), and let
$H$ on $\ell^2(F)$ have spectral width $W$. For $\gamma=\gamma_{\rm M}$,
the generator is primitive at every $\beta>0$, has the unique Gibbs
stationary state, and satisfies
$\gap(-\cL_{\gamma_{\rm M}})\ge\tfrac12e^{-\beta W}\lambda_Q$.
For either weight, if
\begin{equation}
 \sqrt2e^{\beta W}\beta W<\sqrt{q_\gamma\lambda_Q},       \label{eq:generic-window}
\end{equation}
then the exact CKG generator is primitive and
\[
 \gap(-\cL_\gamma)\ge
 q_\gamma e^{-\beta W}\lambda_Q
 \left(1-
 \frac{\sqrt2e^{\beta W}\beta W}
 {\sqrt{q_\gamma\lambda_Q}}\right)^2.
\]
\eqref{eq:generic-window} is equivalent to
$\beta W<\mathrm W_0(\sqrt{q_\gamma\lambda_Q/2})$,
where $\mathrm W_0(x)$ is the unique nonnegative number $w$ satisfying
$we^w=x$ for $x\ge0$; this is the principal branch of the Lambert $W$
function \cite{CorlessEtAl1996}.
\end{corollary}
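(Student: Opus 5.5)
The corollary is a direct specialization of \Cref{thm:metropolis-all-temperature} and \Cref{thm:comparison} to the coupling family coming from a symmetric Markov chain. The plan is therefore to check \Cref{ass:poincare} for that family with $\kappa_0=\lambda_Q$, substitute into the two theorems, and then rewrite the window condition \eqref{eq:generic-window} via the Lambert function.

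\emph{Verifying the hypotheses.} Take any Kraus family $\{A_a\}$ for $\Phi_Q$ that is closed under adjoints up to phases. By \Cref{def:kraus-family}, $\sum_aA_a^\dagger A_a=I$, so $S_{\cA}=I$, and the adjoint-closure condition of \Cref{ass:ckg} holds by hypothesis. The Poincar\'e inequality \eqref{eq:generic-L0-Poincare} with $\kappa_0=\lambda_Q=\min\{1,1-\lambda_2(Q)\}$ is exactly the last display of \Cref{lem:commutator-sum}. That display rests on \Cref{thm:associated-gap}, which in turn uses irreducibility and symmetry of $Q$. Moreover $\lambda_Q>0$ by \eqref{eq:gap-irreducible}. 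Note also that $\cL_{\cA,0}=\Phi_Q-\id$ depends only on the channel, so the choice of Kraus family is immaterial here; by \Cref{lem:relabelling}(iii) the CKG generator is likewise independent of that choice. Since $\dim\ell^2(F)=|F|\ge2$, \Cref{ass:poincare} holds on $\cH=\ell^2(F)$ with $D=|F|$.

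\emph{Applying the two theorems.} For $\gamma=\gamma_{\rm M}$, \Cref{thm:metropolis-all-temperature} then gives primitivity, a unique Gibbs stationary state, and $\gap\ge\frac12e^{-\beta W}\lambda_Q$ at every $\beta>0$. For either weight, set $b=\beta W$ and $\kappa_0=\lambda_Q$ in \Cref{thm:comparison}. Its hypothesis $\delta_\gamma<1$ is literally \eqref{eq:generic-window}, and its conclusion \eqref{eq:comparison-gap} is the stated bound once $\delta_\gamma$ is written out.

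\emph{Lambert-function reformulation.} The function $w\mapsto we^w$ is continuous and strictly increasing on $[0,\infty)$ from $0$ to $\infty$. Hence for $x\ge0$ there is a unique $w\ge0$ with $we^w=x$, and $we^w<x$ holds exactly when $0\le w<\mathrm W_0(x)$. Applying this with $w=\beta W\ge0$ and $x=\sqrt{q_\gamma\lambda_Q/2}$ gives the equivalence after dividing \eqref{eq:generic-window} by $\sqrt2$.

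No step is a real obstacle. The only points needing care are confirming that the phase-closed Kraus family indeed yields the Poincar\'e constant $\lambda_Q$ on the full operator space, which is \Cref{lem:commutator-sum}, and handling the monotonicity argument in the Lambert reformulation.
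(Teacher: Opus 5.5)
Your proposal is correct and matches the paper's proof: it verifies \Cref{ass:poincare} with $\kappa_0=\lambda_Q$ via \Cref{lem:carre-L0}, then applies \Cref{thm:metropolis-all-temperature,thm:comparison}, noting that $\delta_\gamma<1$ is exactly \eqref{eq:generic-window}. It also obtains the Lambert reformulation from strict monotonicity of $b\mapsto be^b$ on $[0,\infty)$, as the paper does.
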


\begin{proof}
\Cref{lem:carre-L0} gives \eqref{eq:generic-L0-Poincare} with
$\kappa_0=\lambda_Q$; apply \Cref{thm:metropolis-all-temperature,thm:comparison}. The Lambert form
follows because $b\mapsto be^b$ is increasing on $[0,\infty)$.
\end{proof}

\section{Trace-distance mixing}
\label{sec:mixing}

The gap estimates in \Cref{sec:comparison-main} give trace-distance
mixing bounds through the quantum $\chi^2$ estimate. For a primitive KMS-symmetric generator $\cL$ with spectral gap
$\lambda:=\gap(-\cL)>0$, its \emph{relaxation time} is
$t_{\rm rel}:=\lambda^{-1}$. The following is the continuous-time, $\alpha=1/2$ specialization of the
quantum $\chi^2$ mixing bound in
\cite[Lemmas~5 and~12]{TemmeEtAl2010}; see also
\cite[Eq.~(2)]{KastoryanoTemme2013}.

\begin{lemma}[Mixing from the KMS spectral gap]
\label{lem:mixing}
Let $T_t=e^{t\cL}$ be primitive and KMS symmetric with full-rank
stationary state $\sigma$ on a $D$-dimensional space, and let
$0<g\le\gap(-\cL)$.  Then every state $\rho_0$ satisfies
\begin{equation}
 \frac12\|T_{t,*}(\rho_0)-\sigma\|_1
 \le\frac12e^{-gt}
 \sqrt{\chi^2_{1/2}(\rho_0\|\sigma)},                \label{eq:mixing}
\end{equation}
where
$\chi^2_{1/2}(\rho\|\sigma):=
 \Tr[(\sigma^{-1/4}\rho\sigma^{-1/4})^2]-1$ is the symmetric quantum
$\chi^2$-divergence.  Moreover,
$\sup_{\rho_0}\chi^2_{1/2}(\rho_0\|\sigma)=\sigma_{\min}^{-1}-1$, where
$\sigma_{\min}:=\lambda_{\min}(\sigma)$.
For $0<\epsilon<1$, define
$t_{\rm mix}(\epsilon):=
 \inf\{t\ge0:\sup_{\rho_0}\frac12\|T_{t,*}(\rho_0)-\sigma\|_1\le\epsilon\}$.
For the Gibbs state $\sigma=\sigma_\beta$ of a Hamiltonian of spectral width $W$ on a
$D$-dimensional space, put $[x]_+:=\max\{x,0\}$.  Then
\begin{equation}
 t_{\rm mix}(\epsilon)\le
 \frac1g\left[
 \log\frac1{2\epsilon}
 +\frac12\log\max\{1,De^{\beta W}-1\}
 \right]_+.                                               \label{eq:mixing-time}
\end{equation}
\end{lemma}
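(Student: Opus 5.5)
The plan is to reproduce the quantum $\chi^2$ argument of \cite[Lemmas~5 and~12]{TemmeEtAl2010} in the KMS setting of \Cref{lem:gap-primitive}. The proof has four steps.

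\emph{Step 1: pull the state back to an observable.} For a state $\rho_0$, set $X_0:=\Gamma_\sigma^{-1}(\rho_0)=\sigma^{-1/2}\rho_0\sigma^{-1/2}$. By the intertwining relation $e^{t\cL_*}\Gamma_\sigma=\Gamma_\sigma e^{t\cL}$ from \Cref{lem:KMS-intertwining}, we have
\[
T_{t,*}(\rho_0)-\sigma=\Gamma_\sigma\bigl(e^{t\cL}(X_0-I)\bigr).
\]
We also have $\langle I,X_0-I\rangle_{\sigma,1/2}=\Tr(\rho_0)-1=0$. Hence $X_0-I$ lies in the KMS-orthogonal complement of $I$.

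\emph{Step 2: exponential decay in the KMS norm.} On this complement, the spectral theorem from \Cref{lem:gap-primitive} (self-adjointness, real nonpositive spectrum, smallest nonzero eigenvalue of $-\cL$ equal to the gap) gives
\[
\|e^{t\cL}Y\|_{\sigma,1/2}\le e^{-gt}\|Y\|_{\sigma,1/2},
\]
because every eigenvalue there is at most $-\gap(-\cL)\le -g$.

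\emph{Step 3: convert to trace distance.} For $Z=\Gamma_\sigma(Y)$, write $Z=\sigma^{1/4}(\sigma^{1/4}Y\sigma^{1/4})\sigma^{1/4}$ and apply Hölder with exponents $(4,2,4)$:
\[
\|Z\|_1\le\|\sigma^{1/4}\|_4^2\,\|\sigma^{1/4}Y\sigma^{1/4}\|_2=\|Y\|_{\sigma,1/2},
\]
since $\|\sigma^{1/4}\|_4^4=\Tr\sigma=1$ and $\|\sigma^{1/4}Y\sigma^{1/4}\|_2^2=\langle Y,Y\rangle_{\sigma,1/2}$. Next compute
\[
\|X_0-I\|_{\sigma,1/2}^2=\Tr\bigl[(\sigma^{-1/4}\rho_0\sigma^{-1/4})^2\bigr]-2\Tr\rho_0+1=\chi^2_{1/2}(\rho_0\|\sigma).
\]
Steps 2 and 3 together give \eqref{eq:mixing}.

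\emph{Step 4: the supremum of $\chi^2_{1/2}$ and the mixing time.} The map $\rho\mapsto\Tr[(\sigma^{-1/4}\rho\sigma^{-1/4})^2]$ is convex, so its supremum over states is attained at a pure state $\rho=|\psi\rangle\langle\psi|$. For such $\rho$ it equals $\langle\psi|\sigma^{-1/2}|\psi\rangle^2\le\sigma_{\min}^{-1}$, with equality at the minimal eigenvector of $\sigma$. This establishes the supremum formula.

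For the Gibbs state, $\sigma_{\min}=e^{-\beta\lambda_{\max}}/Z_\beta$ and $Z_\beta\le De^{-\beta\lambda_{\min}}$. Hence $\sigma_{\min}\ge e^{-\beta W}/D$ and $\sigma_{\min}^{-1}-1\le De^{\beta W}-1$. It then suffices to have
\[
\tfrac12 e^{-gt}\sqrt{\max\{1,De^{\beta W}-1\}}\le\epsilon.
\]
Solving for $t$, and taking the positive part to cover the case where $t=0$ already works, gives \eqref{eq:mixing-time}. The $\max\{1,\cdot\}$ merely keeps the logarithm nonnegative and costs nothing, because enlarging the bound is allowed.

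The only step requiring care is Step 3: choosing the Hölder splitting so that exactly the KMS norm appears with constant one. Everything else is routine bookkeeping.
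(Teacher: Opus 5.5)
Your proposal is correct and follows the paper's proof. The paper obtains \eqref{eq:mixing} by citing \cite[Lemmas~5 and~12]{TemmeEtAl2010} with $\alpha=1/2$, and your Steps 1--3 unpack exactly that argument: intertwining via $\Gamma_\sigma$, decay on the KMS-orthogonal complement of $I$, and the $(4,2,4)$ H\"older bound $\|\Gamma_\sigma(Y)\|_1\le\|Y\|_{\sigma,1/2}$. Your Step 4 is the paper's own convexity reduction to pure states together with the bound $\sigma_{\min}\ge e^{-\beta W}/D$.
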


\begin{proof}
Equation~\eqref{eq:mixing} follows from
\cite[Lemmas~5 and~12]{TemmeEtAl2010} applied to $T_t$ with
$\alpha=1/2$. For the exact worst-case prefactor, convexity reduces the
supremum to pure states, where
$\chi^2_{1/2}(|\psi\rangle\langle\psi|\|\sigma)+1
=\langle\psi|\sigma^{-1/2}|\psi\rangle^2\le\sigma_{\min}^{-1}$,
with equality at a minimum-eigenvalue eigenvector.
Finally $\sigma_{\min}\ge e^{-\beta W}/D$ gives \eqref{eq:mixing-time}.
\end{proof}

For $D\ge2$ and $0<\epsilon<1/2$, the actual gap gives the explicit relation
\begin{equation}
 t_{\rm rel}\log\frac1{2\epsilon}
 \le t_{\rm mix}(\epsilon)
 \le t_{\rm rel}\left(
 \log\frac1{2\epsilon}+\frac12\log(\sigma_{\min}^{-1}-1)\right).
 \label{eq:relaxation-mixing-relation}
\end{equation}
The upper bound follows from \Cref{lem:mixing} with $g=\lambda$.
For the lower bound, choose a Hermitian eigenobservable $A$ with
$\cL(A)=-\lambda A$ and $\|A\|=1$. Stationarity implies
$\Tr(\sigma A)=0$. An eigenstate $\rho$ with
$|\Tr(\rho A)|=1$ then satisfies
$\tfrac12\|T_{t,*}(\rho)-\sigma\|_1\ge\tfrac12e^{-\lambda t}$
by trace-norm duality. This proves the lower bound.

\subsection{Polynomial mixing for logarithmic \texorpdfstring{$\beta W$}{beta W}}

For this consequence, $\cL_\beta$ denotes the CKG
generator with transition weight $\gamma_{\rm M}$ defined in \Cref{sec:ckg}.

\begin{corollary}[Mixing when $\beta W$ grows logarithmically]
\label{appth:cor:log-temperature}
Consider a family satisfying \Cref{ass:poincare}, indexed by integers
$n\ge2$. Suppose $\kappa_0\ge n^{-a}$ and $\beta W\le c\log n$ for
constants $a,c\ge0$ independent of $n$. Then
\begin{equation}
 \gap(-\cL_\beta)\ge\tfrac12 n^{-(a+c)},\qquad
 t_{\rm mix}(\epsilon)\le
 2n^{a+c}\left[
 \log\frac1\epsilon+\frac12\log D+\frac c2\log n
 \right]\quad(0<\epsilon<1).
 \label{appth:eq:log-temperature}
\end{equation}
In particular, if $\log D$ is polynomially bounded in $n$, this gives
polynomial semigroup mixing time in $n$ and $\log(1/\epsilon)$.
\end{corollary}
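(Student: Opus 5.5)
The plan is to combine \Cref{thm:metropolis-all-temperature} with \Cref{lem:mixing}; the only work is simplifying the exponents and logarithms. For the gap bound, apply \Cref{thm:metropolis-all-temperature} with $b=\beta W\le c\log n$. Since $e^{-b}\ge e^{-c\log n}=n^{-c}$ and $\kappa_0\ge n^{-a}$, we get
\[
 \gap(-\cL_\beta)\ge\tfrac12 e^{-b}\kappa_0\ge\tfrac12 n^{-(a+c)}.
\]
The same theorem gives primitivity and KMS symmetry with respect to $\sigma_\beta$, which are the hypotheses of \Cref{lem:mixing}.

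Next, insert $g:=\tfrac12 n^{-(a+c)}$ into \eqref{eq:mixing-time}, so that $1/g=2n^{a+c}$. The bracket is then simplified term by term. First, $\log\frac1{2\epsilon}\le\log\frac1\epsilon$. Second, since $D\ge2$ and $e^{\beta W}\ge1$, we have $\max\{1,De^{\beta W}-1\}\le De^{\beta W}$. This gives
\[
 \tfrac12\log\max\{1,De^{\beta W}-1\}\le\tfrac12\log D+\tfrac12\beta W\le\tfrac12\log D+\tfrac c2\log n.
\]
Since $0<\epsilon<1$, every term in the resulting sum is nonnegative, so the positive part $[\cdot]_+$ can be dropped. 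This yields \eqref{appth:eq:log-temperature}.

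The polynomial statement follows directly. The prefactor $2n^{a+c}$ is polynomial in $n$. The bracket is at most $\log(1/\epsilon)$ plus $\tfrac12\log D$, which is polynomially bounded by assumption, plus $O(\log n)$. There is no genuine obstacle here; the only points needing care are the sign of $\log\frac1{2\epsilon}$ (handled by the positive part) and that $D\ge2$ ensures $De^{\beta W}-1\ge1$.
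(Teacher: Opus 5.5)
Your proposal is correct and follows the paper's proof: apply \Cref{thm:metropolis-all-temperature} with $e^{-\beta W}\kappa_0\ge n^{-(a+c)}$, then \Cref{lem:mixing} with $g=\tfrac12 n^{-(a+c)}$. The bracket simplification is handled correctly, using $\log\frac{1}{2\epsilon}\le\log\frac1\epsilon$, $\max\{1,De^{\beta W}-1\}\le De^{\beta W}$, and monotonicity of $[\cdot]_+$ applied to a nonnegative right-hand side.
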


\begin{proof}
Combine \Cref{thm:metropolis-all-temperature,lem:mixing} with
$e^{-\beta W}\kappa_0\ge n^{-(a+c)}$ and
$\lambda_{\min}(\sigma_\beta)\ge e^{-\beta W}/D$, using the mixing-time upper bound $g^{-1}[\log(1/\epsilon)+(\log D+\beta W)/2]$.
\end{proof}

For example, if $W=1$ and $c>0$, choosing $\beta=c\log n$ corresponds
to temperature $1/(c\log n)$ in units with Boltzmann constant one.

\begin{addedremark}[Limitations of ground-space preparation]
Let $P_0$ be the ground-space projection and set
$E_0=\lambda_{\min}(H)$.
In general,
$\Tr(P_0\sigma_\beta)\le e^{\beta W}\operatorname{rank}(P_0)/D$,
since $\Tr(e^{-\beta(H-\lambda_{\min}(H)I)})\ge De^{-\beta W}$.
Thus, if $\operatorname{rank}(P_0)/D=e^{-\Omega(n)}$ and
$\beta W=O(\log n)$, the Gibbs weight in the ground state subspace tends to zero.
This also applies to the harmonic subspace of a Hodge Laplacian.

If $E_0$ is known,
the ideal energy filter $F=e^{-\beta(H-E_0I)/2}$ applied to $I/D$
produces the Gibbs state upon success, with probability
$p_{\rm succ}=D^{-1}\Tr(F^\dagger F)\ge e^{-\beta W}$.
Logarithmic $\beta W$ therefore also permits polynomial repetition
overhead for established energy-filtering approaches
\cite[Sec.~III]{ChowdhurySomma2017}.
Efficient filter implementation remains a separate requirement.
Our comparison establishes convergence of the CKG dynamics with transition weight $\gamma_{\rm M}$,
without asserting an advantage over those algorithms.
\end{addedremark}

\section{Temperature dependence and upper bounds on the gap}
\label{app:sharpness}

The rate calculation below is used in the example of
\Cref{sec:comparison-main}. It also gives the
spectral gap upper bounds that establish the limits of the general comparison. 
Unless stated otherwise, $\beta>0$, the transition weight is
$\gamma_{\rm M}$ in \eqref{eq:weights}, and
$\cL_\beta:=\cL_{\gamma_{\rm M}}$ denotes the exact
Heisenberg-picture generator for the Hamiltonian and couplings
under consideration. Its trace
dual is $\cL_{\beta,*}$. We use
$D_{\rm tr}(\rho,\sigma):=\frac12\|\rho-\sigma\|_1$ for trace distance.
The results concern these chosen dynamics. An exponentially small gap
does not imply that their Gibbs state is difficult to prepare by another
method.

\subsection{Scalar transition rate after Gaussian filtering}

Write $\Phi_{\mathrm N}(x):=(2\pi)^{-1/2}\int_{-\infty}^x e^{-z^2/2}\,dz$
for the standard normal distribution function, and set
$q:=q_{\gamma_{\rm M}}=2\Phi_{\mathrm N}(-1/2)$.

The scalar transition rate below is obtained by Gaussian smoothing of
the transition weight $\gamma_{\rm M}$; see
\cite[Sec.~4.3.2, Eq.~(44)]{GCDK2024}, expressed in the
normalization of \Cref{sec:ckg}. We recall the formula and verify
the elementary bounds needed for the comparisons.

\begin{lemma}[Scalar transition rate after Gaussian filtering]
\label{appth:lem:rate}
If $[H,A]=\nu A$, then
$\int_{\mathbb R}\gamma_{\rm M}(\omega)
\widehat f_\beta(\omega-\nu)^2\,d\omega=r(\beta\nu)$, where
\begin{equation}
 r(x)=\Phi_{\mathrm N}(-x-\tfrac12)
       +e^{-x}\Phi_{\mathrm N}(x-\tfrac12).
 \label{appth:eq:rate}
\end{equation}
The function is decreasing on $\mathbb R$, $r(0)=q$, and for $x\ge0$,
\begin{equation}
 r(x)=e^{-x}r(-x),\qquad
 qe^{-x}\le r(x)\le e^{-x},\qquad
 r(x)+r(-x)\ge1,\qquad
 \lim_{x\to\infty}r(-x)=1.
 \label{appth:eq:ratebounds}
\end{equation}
\end{lemma}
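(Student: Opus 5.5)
The plan is to reduce the integral to a one-dimensional Gaussian computation and then read off each property from the closed form \eqref{appth:eq:rate}. By \eqref{eq:filter}, $\widehat f_\beta(\omega-\nu)^2=\frac{\beta}{\sqrt{2\pi}}e^{-\beta^2(\omega-\nu)^2/2}$. Substituting $s=\beta\omega$ and $x=\beta\nu$ turns the integral into $\mathbb E[\gamma(Z)]$ with $Z\sim\mathcal N(x,1)$ and $\gamma(s)=e^{-\max\{s+1/2,0\}}$.

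The expectation then splits at $s=-1/2$. On $\{Z\le-1/2\}$ the weight is $1$, which contributes $\Phi_{\mathrm N}(-x-\tfrac12)$. On $\{Z>-1/2\}$ the weight is $e^{-Z-1/2}$. Completing the square, $e^{-s}$ times the $\mathcal N(x,1)$ density equals $e^{-x+1/2}$ times the $\mathcal N(x-1,1)$ density. This part therefore contributes $e^{-x}\,\mathbb P(Z'>-1/2)$ with $Z'\sim\mathcal N(x-1,1)$, and $\mathbb P(Z'>-1/2)=\Phi_{\mathrm N}(x-\tfrac12)$. Adding the two pieces gives \eqref{appth:eq:rate}, and setting $x=0$ gives $r(0)=2\Phi_{\mathrm N}(-1/2)=q$.

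Monotonicity follows without differentiating the formula. Since $\gamma$ is nonincreasing and the family $\mathcal N(x,1)$ is stochastically increasing in $x$, the map $x\mapsto r(x)$ is nonincreasing. It is strictly decreasing because $\gamma$ is strictly decreasing on $(-1/2,\infty)$, a set of positive Gaussian mass. The symmetry $r(x)=e^{-x}r(-x)$ can be checked directly from \eqref{appth:eq:rate}:
\[
e^{-x}r(-x)=e^{-x}\Phi_{\mathrm N}(x-\tfrac12)+\Phi_{\mathrm N}(-x-\tfrac12)=r(x),
\]
so in fact it holds for all real $x$.

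The remaining bounds, for $x\ge0$, follow from monotonicity and the symmetry. First, $r(-x)\ge r(0)=q$, so $r(x)=e^{-x}r(-x)\ge qe^{-x}$. Second, $\gamma\le1$ gives $r(-x)\le1$, so $r(x)\le e^{-x}$. Third, $\lim_{x\to\infty}r(-x)=1$ by dominated convergence, because $\gamma(Z)\to1$ when $Z\sim\mathcal N(-x,1)$ and $x\to\infty$.

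The inequality $r(x)+r(-x)\ge1$ is the only step needing care. Writing it as $(1+e^{-x})\,r(-x)\ge1$, I would use that $r(-x)$ is increasing in $x$ and argue in two ranges.
\begin{itemize}
\item For $x\ge x_0$, with $x_0$ chosen so that $r(-x_0)\ge1/(1+e^{-x_0})$, the inequality holds because the left factor product is then at least $(1+e^{-x})\,r(-x_0)$ and $r(-x)\ge r(-x_0)$.
\item For $x\in[0,x_0]$, I would instead use the pointwise bound $\gamma(s)+\gamma(-s)\ge1$, already verified in the proof of \Cref{thm:metropolis-all-temperature}, together with the fact that a Gaussian centered at $x$ is compared against one centered at $-x$.
\end{itemize}
The difficulty is that $r(x)+r(-x)=\mathbb E[\gamma(x+G)+\gamma(-x+G)]$ with $G\sim\mathcal N(0,1)$. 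Because $G$ is symmetric, this equals $\mathbb E[\gamma(x+G)+\gamma(-x-G)]\ge1$ by applying the pointwise bound at $s=x+G$. That argument in fact settles the inequality for every $x$, so the two-range split above is only a fallback. The main obstacle is therefore just this symmetrization step.
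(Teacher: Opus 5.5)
Your argument is correct and is essentially the paper's. The sum bound $r(x)+r(-x)\ge1$ uses the same reflection $z\mapsto-z$ with the pointwise inequality $\gamma(s)+\gamma(-s)\ge1$. The only differences are that you derive the closed form by completing the square (the paper quotes it from the literature) and get monotonicity by stochastic ordering rather than from $r'(x)=-e^{-x}\Phi_{\mathrm N}(x-\tfrac12)<0$.

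Delete the two-range ``fallback'' rather than keep it, because its first bullet is false. Since $1+e^{-x}$ decreases in $x$, you get $(1+e^{-x})\,r(-x_0)\to r(-x_0)<1$, so the condition at $x_0$ cannot carry over to all $x\ge x_0$.
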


\begin{proof}
For $[H,A]=\nu A$, the filtered operator is
$\widehat A(\omega)=\widehat f_\beta(\omega-\nu)A$.
With $x=\beta\nu$ and $z=\beta\omega$, its scalar rate is
\[
 r(x)=\int_{\mathbb R}e^{-\max\{z+1/2,0\}}
             \frac{e^{-(z-x)^2/2}}{\sqrt{2\pi}}\,dz.
\]
Equation~\eqref{appth:eq:rate} is \cite[Eq.~(44)]{GCDK2024} with Gaussian
width one. Differentiation gives
$r'(x)=-e^{-x}\Phi_{\mathrm N}(x-1/2)<0$, while substitution gives
$r(x)=e^{-x}r(-x)$ and $r(0)=q$.
The integral is at most one, so $q\le r(-x)\le1$ for $x\ge0$,
which proves the two-sided bound. For the sum bound, change $z$ to $-z$
in $r(-x)$ and use
$e^{-\max\{z+1/2,0\}}+e^{-\max\{-z+1/2,0\}}\ge1$.
At $x=0$, this inequality is strict on an interval of positive Gaussian
measure, so $q>1/2$. Finally,
$r(-x)=\mathbb E[e^{-\max\{Z-x+1/2,0\}}]\to1$ by dominated
convergence for a standard normal random variable $Z$.
\end{proof}

\begin{proposition}[Population and coherence blocks]
\label{appth:prop:energybasis}
Let $H=\sum_{i=1}^D E_i\ket i\bra i$ in an orthonormal basis, let $Q$ be a
symmetric irreducible stochastic matrix on $\{1,\ldots,D\}$, and use
$K_{ij}=\sqrt{Q(i,j)}\ket j\bra i$. The diagonal restriction of
$\cL_\beta$ is the reversible classical generator $L_{\rm pop}$ with
transition rates $Q(i,j)r(\beta(E_j-E_i))$ for $j\ne i$.
If $s_i:=\sum_jQ(i,j)r(\beta(E_j-E_i))$, each off-diagonal matrix unit
$\ket i\bra j$ has decay rate $(s_i+s_j)/2$. In particular,
\begin{equation}
 \gap(-\cL_\beta)
 =\min\left\{\gap(-L_{\rm pop}),\
                 \min_{i\ne j}\frac{s_i+s_j}{2}\right\}.
 \label{appth:eq:blocks}
\end{equation}
\end{proposition}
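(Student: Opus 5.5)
The plan is to compute the filtered jumps explicitly in the energy eigenbasis and then show that $\cL_\beta$ splits into invariant diagonal and off-diagonal blocks. Since $H$ is diagonal in $\{\ket i\}$, every $K_{ij}=\sqrt{Q(i,j)}\ket j\bra i$ is a Bohr eigenoperator: $[H,K_{ij}]=(E_j-E_i)K_{ij}$. By \eqref{eq:bohr} we therefore have
\[
 \widehat K_{ij}(\omega)=\widehat f_\beta(\omega-\nu_{ij})K_{ij},\qquad \nu_{ij}:=E_j-E_i .
\]
\Cref{appth:lem:rate} then reduces every frequency integral to the scalar rate $r_{ij}:=r(\beta\nu_{ij})$. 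In particular, the Heisenberg gain map is
\[
 \cT^\dagger_\gamma(O)=\sum_{i,j}Q(i,j)\,r_{ij}\,\langle j|O|j\rangle\,\ket i\bra i ,
\]
and the loss operator is diagonal, $R_\gamma=\sum_i s_i\ket i\bra i$. Because $R_\gamma$ commutes with $H$, only its $\nu=0$ Bohr component survives, and $\tanh(0)=0$ gives $B_\gamma=0$. The Heisenberg generator is therefore
\[
 \cL_\beta(O)=\sum_i\Bigl(\sum_jQ(i,j)r_{ij}O_{jj}\Bigr)\ket i\bra i-\tfrac12\{R_\gamma,O\}.
\]

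Next I read off the two blocks. On a diagonal $O=\sum_i o_i\ket i\bra i$, the generator gives $(\cL_\beta O)_{ii}=\sum_jQ(i,j)r_{ij}(o_j-o_i)$, since $s_i=\sum_jQ(i,j)r_{ij}$ and the $j=i$ terms cancel. This is the classical generator $L_{\rm pop}$ with the stated rates. Reversibility with respect to $\pi_i\propto e^{-\beta E_i}$ follows from $Q$ symmetric together with $r(x)=e^{-x}r(-x)$ from \eqref{appth:eq:ratebounds}, which gives $\pi_iQ(i,j)r_{ij}=\pi_jQ(j,i)r_{ji}$. On an off-diagonal unit $\ket i\bra j$ with $i\ne j$, the gain map only sees diagonal entries and so annihilates it, while the anticommutator gives $-\tfrac12(s_i+s_j)\ket i\bra j$. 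Each off-diagonal unit is therefore an eigenvector with eigenvalue $-(s_i+s_j)/2$.

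Finally I assemble the gap formula \eqref{appth:eq:blocks}. The diagonal subspace and the span of the off-diagonal units are both invariant under $\cL_\beta$, so the spectrum of $-\cL_\beta$ is the union of the spectrum of $-L_{\rm pop}$ and the numbers $(s_i+s_j)/2$. The numbers $(s_i+s_j)/2$ are strictly positive: irreducibility means each $i$ has a neighbour $j$ with $Q(i,j)>0$, and $r>0$. The zero eigenvalue is therefore simple and sits in the population block, since irreducibility of $Q$ with positive rates makes $L_{\rm pop}$ irreducible. By \Cref{lem:gap-primitive}, applicable because of the KMS symmetry in \eqref{eq:KMS-DB}, $\gap(-\cL_\beta)$ is the smallest nonzero eigenvalue, which is the stated minimum. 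The only step I expect to need care is checking that the coherent term vanishes and that the conventions for $\cT_\gamma$ and its dual are transcribed correctly. The remaining steps are direct substitution.
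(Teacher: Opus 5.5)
Your proposal is correct and follows essentially the same route as the paper: single-Bohr-frequency filtering via \Cref{appth:lem:rate}, a diagonal loss operator with entries $s_i$ commuting with $H$ so that $B_\gamma=0$, detailed balance from $r(x)=e^{-x}r(-x)$, and the invariant diagonal/off-diagonal block decomposition. As a minor remark, $s_i>0$ already follows from row-stochasticity of $Q$ and $r>0$, so irreducibility is needed only for simplicity of the zero eigenvalue of $L_{\rm pop}$.
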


\begin{proof}
Each coupling has a single Bohr frequency, so \Cref{appth:lem:rate}
applies separately. The loss operator $R_{\gamma_{\rm M}}$ is diagonal
with entries $s_i$. It commutes with $H$, and the coherent correction in
\eqref{eq:CKG-data} therefore vanishes. The gain map annihilates every
off-diagonal matrix unit. On a diagonal observable with entries $x_i$,
the generator has entries
$(L_{\rm pop}x)_i=\sum_{j\ne i}Q(i,j)r(\beta(E_j-E_i))(x_j-x_i)$.
Self-loops cancel in this expression but remain in $s_i$.
Writing $\pi_i=e^{-\beta E_i}/\sum_j e^{-\beta E_j}$,
\eqref{appth:eq:ratebounds} gives
$\pi_iQ(i,j)r(\beta(E_j-E_i))
=\pi_jQ(j,i)r(\beta(E_i-E_j))$.
Thus $L_{\rm pop}$ is reversible and irreducible, and its nonzero decay
rates are positive. Each $s_i$ is positive as well. The invariant-block
decomposition proves \eqref{appth:eq:blocks}. The energy eigenvalues need not be distinct. Decompositions into population and coherence subspaces are also used
in spectral gap analyses of quantum thermalization; see \cite{Temme2013}.
\end{proof}

\subsection{Optimality of the exponential dependence on \texorpdfstring{$\beta W$}{beta W}}\label{app:optimal}

The example below has gap asymptotic to $\kappa_0e^{-\beta W}$ as
$\beta W\to\infty$. It rules out replacing $e^{-\beta W}$ in the
general lower bound by $e^{-\eta\beta W}$ with any $\eta<1$ and a
positive universal prefactor. Thus the coefficient of $\beta W$ in
the exponent is optimal under the stated assumptions. The optimal
numerical prefactor remains undetermined.

\begin{theorem}[Exact spectral gap of the CKG generator for a three-dimensional system]
\label{appth:thm:sharpness}
Fix $a=1/4$ and $W>0$. In the orthonormal basis
$\ket0,\ket1,\ket2$, set
\begin{equation}
 Q=\begin{pmatrix}1-a&a&0\\a&1-2a&a\\0&a&1-a\end{pmatrix},
 \qquad H=\operatorname{diag}(0,W,0),
 \qquad K_{ij}=\sqrt{Q(i,j)}\ket j\bra i.
 \label{appth:eq:barrier}
\end{equation}
The normalized infinite-temperature generator $\Phi_Q-\id$ has gap
$\kappa_0=a$, and the spectral gap of the exact CKG generator is
\begin{equation}
 \gap(-\cL_\beta)=\kappa_0r(\beta W),\qquad
 \lim_{\beta W\to\infty}
 \frac{\gap(-\cL_\beta)}{\kappa_0e^{-\beta W}}=1.
 \label{appth:eq:sharpness}
\end{equation}
\end{theorem}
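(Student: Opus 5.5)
We work entirely through the block decomposition of \Cref{appth:prop:energybasis}, since the couplings $K_{ij}=\sqrt{Q(i,j)}\ket j\bra i$ each carry a single Bohr frequency $E_j-E_i$.

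\textbf{Infinite-temperature gap.} By \Cref{thm:associated-gap}, $\gap(-(\Phi_Q-\id))=\min\{1,1-\lambda_2(Q)\}$. Write $Q=I-aL$ with $L$ the path Laplacian on three vertices. Its eigenvalues are $0,1,3$, so $Q$ has eigenvalues $1$, $1-a$, $1-3a$. With $a=1/4$ this gives $\kappa_0=\min\{1,a\}=1/4$.

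\textbf{Finite-temperature blocks.} Put $x:=\beta W$. The only nonzero off-diagonal rates are the four transitions between $1$ and $\{0,2\}$. Going up, $0\to1$ and $2\to1$ have rate $a\,r(x)$. Going down, $1\to0$ and $1\to2$ have rate $a\,r(-x)$.

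\emph{Population block.} The chain $L_{\rm pop}$ is a birth--death chain on three states, symmetric under $0\leftrightarrow2$. The antisymmetric vector $(1,0,-1)$ is an eigenvector. Indeed, at site $0$ we get $a r(x)(0-1)=-a r(x)$, at site $1$ we get $0$, and at site $2$ we get $+a r(x)$. Its eigenvalue is therefore $-a r(x)$.

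The remaining nonzero eigenvalue lives on the symmetric subspace. Its value is $-(\text{trace})+$ the antisymmetric contribution. The generator trace is $-(2a r(x)+2a r(-x))$, so the eigenvalue is
\[
-\bigl(2a r(-x)+2a r(x)-a r(x)\bigr)=-a\bigl(r(x)+2r(-x)\bigr),
\]
whose modulus exceeds $a r(x)$. Hence $\gap(-L_{\rm pop})=a r(x)$.

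\emph{Coherence block.} Diagonal self-loops also count in $s_i$. We have $s_0=s_2=(1-a)r(0)+a r(x)=(1-a)q+a r(x)$ and $s_1=(1-2a)q+2a r(-x)$. Every pair average $(s_i+s_j)/2$ is at least $\min(s_0,s_1)$ up to averaging. Since $q>1/2$ (\Cref{appth:lem:rate}), $s_0\ge(3/4)q>3/8$ and $s_1\ge q/2>1/4$. The averages are therefore bounded below by about $0.31>a r(x)$, because $r(x)\le1$ and $a=1/4$.

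This comparison is the one step needing care. I would verify $(s_0+s_2)/2=s_0$ and $(s_0+s_1)/2\ge\tfrac12\bigl((3/4)q+q/2\bigr)=\tfrac58q>\tfrac{5}{16}>\tfrac14\ge a r(x)$ explicitly, using the bound $q>1/2$. By \eqref{appth:eq:blocks} we conclude $\gap(-\cL_\beta)=a\,r(\beta W)=\kappa_0 r(\beta W)$.

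\textbf{Asymptotics.} By \eqref{appth:eq:ratebounds}, $r(x)=e^{-x}r(-x)$ and $r(-x)\to1$ as $x\to\infty$. Therefore $\gap/(\kappa_0e^{-\beta W})=r(-\beta W)\to1$.
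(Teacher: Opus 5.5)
Your proposal is correct and follows the paper's proof essentially step for step. You use the population/coherence block decomposition, the population eigenvalues $0,\,-ar(x),\,-a\bigl(r(x)+2r(-x)\bigr)$, a crude lower bound on the coherence rates $(s_i+s_j)/2$, and finally $r(x)=e^{-x}r(-x)$ with $r(-x)\to1$. The only difference is cosmetic: you separate the coherences from $ar(x)$ using $q>1/2$ and $r\le1$, whereas the paper uses $r(-x)\ge q\ge r(x)$ to get $s_0\ge 3q/4$ and $s_1\ge q$, which it compares with $ar(x)\le q/4$.
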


\begin{proof}
The eigenvalues of $Q$ are $1,1-a,1-3a$, so \Cref{thm:associated-gap}
gives $\kappa_0=a$. Put $u=r(\beta W)$ and $v=r(-\beta W)$.
By \Cref{appth:prop:energybasis}, the population generator on observables is
\[
 L_{\rm pop}=
 \begin{pmatrix}-au&au&0\\av&-2av&av\\0&au&-au\end{pmatrix},
\]
with eigenvalues $0,-au,-a(u+2v)$. The diagonal entries of the loss operator are
$s_0=s_2=(1-a)q+au$ and $s_1=(1-2a)q+2av$.
Since $v\ge q$ and $u\le q$, one has $s_1\ge q$ and
$s_0=s_2\ge3q/4$. All off-diagonal decay rates therefore exceed
$au\le q/4$. The spectral gap on the full operator space is $au$, and \Cref{appth:lem:rate} gives the
limit in \eqref{appth:eq:sharpness}.
\end{proof}

\begin{corollary}[Optimal exponent and bounds on the prefactor]
\label{appth:cor:necessity}
No bound $\gap(-\cL_\beta)\ge c\kappa_0e^{-\eta\beta W}$ with universal
constants $c>0$ and $\eta<1$ can hold for all inputs of
\Cref{thm:metropolis-all-temperature}. In particular, a universal
inverse-polynomial replacement for the exponential dependence on $\beta W$
is impossible. If $C_*$ is the largest universal constant for which
$\gap(-\cL_\beta)\ge C_*e^{-\beta W}\kappa_0$ holds when $\kappa_0$
is the optimal infinite-temperature Poincar\'e constant, then $1/2\le C_*\le q$.
\end{corollary}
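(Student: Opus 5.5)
The plan has three parts: show that $c\kappa_0e^{-\eta\beta W}$ with $\eta<1$ is ruled out, read off the upper bound $C_*\le q$, and read off the lower bound $C_*\ge 1/2$. All three follow from results already proved: the three-level example of \Cref{appth:thm:sharpness} and the comparison \Cref{thm:metropolis-all-temperature}.

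\textbf{Impossibility of $\eta<1$.} The three-dimensional chain satisfies \Cref{ass:poincare}. Its couplings $K_{ij}$ are normalized and closed under adjoints because $Q$ is symmetric and stochastic. By \Cref{thm:associated-gap} its optimal Poincar\'e constant is $\kappa_0=1/4$. \Cref{appth:thm:sharpness} gives $\gap(-\cL_\beta)=\kappa_0 r(\beta W)$, and \Cref{appth:lem:rate} gives $r(x)\le e^{-x}$ for $x\ge0$. Suppose $\gap(-\cL_\beta)\ge c\kappa_0e^{-\eta\beta W}$ held universally. Then $c\,e^{-\eta b}\le e^{-b}$ for all $b=\beta W>0$, that is, $c\le e^{-(1-\eta)b}$. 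Letting $b\to\infty$ forces $c\le0$, a contradiction. Inverse-polynomial replacements are excluded for the same reason. Take $\beta W=\log n$. A bound of the form $c\kappa_0 n^{-p}$ with fixed $p$ would amount to an exponent $\eta$ of order $p/\log n$ in place of $1$; more simply, the example's gap is at most $\kappa_0/n$, whereas any bound that is polynomial in $\beta W$ decays slower than $1/n$.

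\textbf{Upper bound $C_*\le q$.} In the same example, $r(b)e^{b}\ge C_*$ must hold for every $b>0$. Setting $b=0$ would give $r(0)=q$, but $\beta>0$ is required, so I will instead use continuity of $r$ and let $b\downarrow0$. This yields $C_*\le\lim_{b\downarrow0}r(b)e^{b}=q$. An alternative is a scalar $H$ with $W=0$: \Cref{rem:scalar-H} gives $\gap=q\kappa_0$ with $\kappa_0$ optimal, hence $C_*\le q$ directly. Either route works, and the scalar one avoids the limit.

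\textbf{Lower bound $C_*\ge 1/2$.} This is exactly \Cref{thm:metropolis-all-temperature}.

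The only point needing care is the definition of $C_*$. The quantity $\kappa_0$ must be the optimal Poincar\'e constant, which is the gap of $\Phi_Q-\id$; in the example that value is exactly $a$. I do not expect any serious obstacle.
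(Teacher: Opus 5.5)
Your proposal is correct and follows the paper's proof: letting $\beta W\to\infty$ in the three-level example (with $r(x)\le e^{-x}$) excludes $\eta<1$, \Cref{thm:metropolis-all-temperature} gives $C_*\ge 1/2$, and the scalar-$H$ case of \Cref{rem:scalar-H} gives $C_*\le q$. Your alternative limit $b\downarrow0$ in the example also works, since $e^{b}r(b)=r(-b)\downarrow r(0)=q$.

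One sentence in your inverse-polynomial remark should be fixed. With $\beta W=\log n$ one has $n^{-p}=e^{-p\beta W}$, so the exponent is $p$, not $p/\log n$, and for $p\ge1$ such a bound is not excluded (the theorem itself gives $\tfrac12\kappa_0 n^{-1}$). The impossibility statement therefore concerns inverse-polynomial dependence on $\beta W$. Your ``more simply'' argument rules this out correctly, because $\kappa_0 r(\beta W)\le\kappa_0e^{-\beta W}$ eventually falls below any $c\kappa_0(1+\beta W)^{-p}$.
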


\begin{proof}
The first assertions follow by letting $\beta W\to\infty$ in
\eqref{appth:eq:sharpness}. The lower bound on $C_*$ follows from
\Cref{thm:metropolis-all-temperature}. If $H$ is scalar, then $W=0$
and the CKG generator is $q\cL_{\mathcal A,0}$, as in \Cref{rem:scalar-H}.
Its gap equals $q\kappa_0$, proving the upper bound. The optimal numerical
prefactor is not determined here.
\end{proof}

The same example gives a mixing-time lower bound for a specified initial state.
In the three-dimensional system example, label the basis by $0,1,2$ and take
$X=\operatorname{diag}(1,0,-1)$. This norm-one eigenobservable has zero
Gibbs mean, and $\cL_\beta(X)=-a r(\beta W)X$. Consequently,
\begin{equation}
 D_{\rm tr}\!\left(
 e^{t\cL_{\beta,*}}(\ket0\bra0),\sigma_\beta\right)
 \ge\frac12e^{-a r(\beta W)t}.
 \label{appth:eq:seed-lower-bound}
\end{equation}
Indeed, the difference of the two expectations of $X$ is
$e^{-a r(\beta W)t}$, and trace-norm duality bounds this difference by
twice the trace distance. Since $r(\beta W)\le e^{-\beta W}$, trace-distance error at most $\epsilon<1/2$ requires at least
$a^{-1}e^{\beta W}\log(1/(2\epsilon))$ units of evolution time.
This is a lower bound for the chosen dynamics, not for other
preparation algorithms.

\subsection{Spectral gap upper bound at low temperature}

Let $P_0$ be the orthogonal projection onto the ground eigenspace of $H$.
The \emph{projected couplings} are the operators $P_0A_aP_0$ acting on
$\operatorname{ran}P_0$. We call this family \emph{reducible} if there
is an orthogonal projection $P\le P_0$, distinct from $0$ and $P_0$,
that commutes with every projected coupling.
Equivalently, all projected couplings preserve
both $\operatorname{ran}P$ and its orthogonal complement within the
ground state subspace. Under this condition, the next theorem gives an upper
bound on the spectral gap of the CKG generator that tends to zero exponentially as
$\beta\to\infty$, with the Hamiltonian and couplings fixed.
It does not identify the exact asymptotic decay rate.
For a Hodge Laplacian with nonzero kernel, the ground state subspace is the
harmonic subspace.

\begin{theorem}[Spectral gap upper bound for reducible projected couplings]
\label{appth:thm:groundobstruction}
Adopt \Cref{ass:poincare}, shift the ground state energy of $H$ to zero, and
suppose its ground-space projection $P_0$ has rank $h\ge2$ and its positive
spectrum is nonempty, with minimum $\delta_H>0$. If an orthogonal
projection $P\le P_0$ of rank $r$, where $0<r<h$, commutes with every
$P_0A_aP_0$, then
\begin{equation}
 \gap(-\cL_\beta)\le\frac{h}{h-r}e^{-\beta\delta_H}.
 \label{appth:eq:groundobstruction}
\end{equation}
\end{theorem}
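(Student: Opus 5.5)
The plan is to bound $\gap(-\cL_\beta)$ from above by evaluating the variational quotient in \eqref{eq:gap-definition} on one test observable built from $P$. Take $X:=P-pI$ with $p:=\Tr(\sigma_\beta P)$, where $\sigma_\beta=e^{-\beta H}/Z_\beta$, so that $\Tr(\sigma_\beta X)=0$. Since $P\le P_0$ and the ground energy is $0$, $P$ commutes with $H$ and with $\sigma_\beta$. We also have $\sigma_\beta P=P/Z_\beta$, hence $p=r/Z_\beta$. Because $X$ commutes with $\sigma_\beta$, the KMS and GNS quantities coincide. The denominator is $\langle X,X\rangle_{\sigma,1/2}=\Var_{\sigma,\mathrm{GNS}}(P)=p(1-p)$. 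The numerator $-\langle X,\cL_\beta X\rangle_{\sigma,1/2}=-\Tr(\sigma_\beta X\cL_\beta X)$ is real, and \Cref{lem:carre} expresses it as
\[
\frac12\sum_a\int\gamma_{\rm M}(\omega)\,\|[\widehat A_a(\omega),P]\|_{\sigma,\mathrm{GNS}}^2\,d\omega .
\]
Since $Z_\beta\ge h$, we get $1-p\ge (h-r)/h$. It therefore suffices to show that the numerator is at most $p\,e^{-\beta\delta_H}=r e^{-\beta\delta_H}/Z_\beta$. Dividing then gives exactly $\frac{h}{h-r}e^{-\beta\delta_H}$.

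To compute the commutator, split each filtered coupling into blocks with respect to $P_0$ and $P_0^\perp:=I-P_0$. By \eqref{eq:bohr}, $P_0\widehat A_a(\omega)P_0=\widehat f_\beta(\omega)P_0A_aP_0$, because only the zero Bohr frequency survives inside the ground space. By hypothesis this block commutes with $P$. The block $P_0^\perp\widehat A_aP_0^\perp$ also commutes with $P$, since $PP_0^\perp=0$. Writing $\widehat A=\widehat A_a(\omega)$, this leaves
\[
Y_{a,\omega}:=[\widehat A,P]=P_0^\perp\widehat A P-P\widehat A P_0^\perp .
\]
The cross terms in $Y^\dagger Y$ vanish, so
\[
\Tr(\sigma_\beta Y^\dagger Y)=\frac1{Z_\beta}\Tr\bigl(P\widehat A^\dagger P_0^\perp\widehat A P\bigr)+\Tr\bigl(\sigma_\beta P_0^\perp\widehat A^\dagger P\widehat A P_0^\perp\bigr).
\]
The second term describes jumps from excited states into $\operatorname{ran}P$. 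It is harmless: $\sigma_\beta P_0^\perp\preceq e^{-\beta\delta_H}Z_\beta^{-1}P_0^\perp$, and adjoint closure together with Plancherel gives $\sum_a\int\widehat A_a(\omega)\widehat A_a(\omega)^\dagger d\omega\preceq I$. Using also $\gamma_{\rm M}\le 1$, its weighted integral is at most $r e^{-\beta\delta_H}/Z_\beta$.

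The main obstacle is the first term, which describes upward jumps out of the ground space carrying the large Gibbs weight $1/Z_\beta$. Here the transition weight must supply the suppression $e^{-\beta\delta_H}$, although the Gaussian filter mixes frequencies. I would handle it with the adjoint-pairing relabelling used in the proof of \Cref{thm:metropolis-all-temperature}. Replacing $(a,\omega)$ by $(\iota(a),-\omega)$ turns the total weighted integral into $\sum_a\int\gamma_{\rm M}(-\omega)\Tr(\sigma_\beta YY^\dagger)\,d\omega$. In this form the upward and downward roles of the two blocks are exchanged. Equivalently, and more robustly, I would use KMS detailed balance (\Cref{lem:gap-primitive}, \eqref{eq:KMS-DB}) to rewrite the Dirichlet form of $P$ as $-\Tr(\sigma_\beta P\cL_\beta(P))$. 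The gain-map contribution from excited to ground states, which carries Gibbs weight at most $e^{-\beta\delta_H}/Z_\beta$, then equals the upward flux. In this way the whole numerator is controlled by twice the excited-to-ground flux, i.e.\ twice the second term. With the factor $\frac12$ from \Cref{lem:carre}, this yields the bound $\le r e^{-\beta\delta_H}/Z_\beta$. If this flux-balance step does not close exactly, I would fall back on a direct estimate. The component $P_0^\perp\widehat A_aP$ involves only Bohr frequencies $\nu\ge\delta_H$, and for $\omega\ge -1/(2\beta)$ one has $\gamma_{\rm M}(\omega)\le e^{-\beta\omega-1/2}$. A Gaussian-tail estimate for the filtered frequencies, in the spirit of \Cref{appth:lem:rate} where $r(x)\le e^{-x}$, would then give the factor $e^{-\beta\delta_H}$, possibly with a modified constant.
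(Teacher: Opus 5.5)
Your proposal is correct. The core step is carried by your flux-balance route, which differs from the paper's, though as written it is asserted rather than derived. Write $T_\uparrow$ and $T_\downarrow$ for the $\gamma_{\rm M}$-weighted integrals of your first and second terms.

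The relabelling $(a,\omega)\mapsto(\iota(a),-\omega)$ does not by itself exchange anything. It maps $T_\uparrow$ to itself, rewritten through $\widehat A_a(\omega)^\dagger$ with weight $\gamma_{\rm M}(-\omega)$, and likewise maps $T_\downarrow$ to itself. So it is not equivalent to flux balance, and it still leaves a $Z_\beta^{-1}$-weighted term to be suppressed.

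The identity $T_\uparrow=T_\downarrow$ is true, and the clean way to get it is from stationarity: $\Tr(\sigma_\beta\cL_\beta(O))=0$ for all $O$, which is the consequence of \eqref{eq:KMS-DB} you actually need. It gives
\[
 -\Tr(\sigma_\beta P\cL_\beta(P))
 =\Tr\bigl(\sigma_\beta(P_0-P)\cL_\beta(P)\bigr)
 +\Tr\bigl(\sigma_\beta P_0^\perp\cL_\beta(P)\bigr).
\]
Both $S=\sigma_\beta(P_0-P)$ and $S=\sigma_\beta P_0^\perp$ satisfy $SP=PS=0$, so the $i[B_{\gamma_{\rm M}},P]$ and $\{R_{\gamma_{\rm M}},P\}$ contributions drop out of both terms. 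The gain part of the first term is $Z_\beta^{-1}\sum_a\int\gamma_{\rm M}(\omega)\|P\widehat A_a(\omega)(P_0-P)\|_2^2\,d\omega=0$. This holds because $P\widehat A_a(\omega)(P_0-P)=\widehat f_\beta(\omega)P_0A_aP_0P(P_0-P)=0$, and this is where the commutation hypothesis enters. The gain part of the second term is exactly $T_\downarrow$. Hence the numerator equals $T_\downarrow\le re^{-\beta\delta_H}/Z_\beta$ by your crude bound, and \Cref{lem:carre} is not even needed.

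The paper takes a different route. It starts from \Cref{lem:carre} and computes both fluxes entrywise in an energy basis adapted to $P$, where each matrix entry carries a single Bohr frequency. It then identifies the two fluxes via adjoint closure and the smoothed-rate identity $r(x)=e^{-x}r(-x)$ of \Cref{appth:lem:rate}, and bounds the result using $r(x)\le e^{-x}$. That route produces an explicit formula $Z_\beta^{-1}\sum_{\ell,j}r(\beta E_j)\sum_a|\langle e_j|A_a|u_\ell\rangle|^2$ for the Dirichlet form of $P$, but it relies on specific properties of the Metropolis rate. The stationarity route uses only $\gamma\le1$, adjoint closure, Plancherel, and exact stationarity, so it also gives the same bound verbatim for $\gamma_{\rm G}$.

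Your fallback is essentially the paper's argument, and it needs no modified constant. Since $P_0^\perp\widehat A_a(\omega)P=\sum_{E>0}\widehat f_\beta(\omega-E)P_EA_aP$ with orthogonal ranges, \Cref{appth:lem:rate} gives $T_\uparrow\le re^{-\beta\delta_H}/Z_\beta$ exactly. Averaging this with your bound on $T_\downarrow$ yields precisely $\frac{h}{h-r}e^{-\beta\delta_H}$.
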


\begin{proof}
Write $Z=\Tr(e^{-\beta H})$ and $p_0=1/Z$. Since $P$ commutes with
$\sigma_\beta$, its KMS Dirichlet form
$\cE_\beta(P):=-\langle P,\cL_\beta(P)\rangle_{\sigma_\beta,1/2}$
equals $-\Re\Tr(\sigma_\beta P\cL_\beta(P))$.
Its KMS variance is $rp_0(1-rp_0)$. Choose an orthonormal energy basis
that diagonalizes $P$ within the ground state subspace, and let
$u_1,\ldots,u_r$ be its basis vectors spanning $\operatorname{ran}P$.
The commutator with $P$ has entries only between its range and its
complement. Entries within the ground state subspace vanish by hypothesis.

Apply the stationary commutator identity of \Cref{lem:carre}.
In the energy basis, taking the weighted Hilbert--Schmidt norm removes
cross terms between different matrix entries; each matrix entry has one
Bohr frequency. The terms directed from $u_\ell$ to a positive-energy
vector $e_j$ have weight $p_0r(\beta E_j)$, and the reverse terms have
weight $p_0e^{-\beta E_j}r(-\beta E_j)$. Relabeling the adjoint couplings
identifies their squared matrix elements, and detailed balance in
\eqref{appth:eq:ratebounds} identifies their weights. The factor $1/2$
in \Cref{lem:carre} therefore gives
\[
 \cE_\beta(P)
 =p_0\sum_{\ell=1}^r\sum_{j:E_j>0}
       r(\beta E_j)\sum_a|\bra{e_j}A_a\ket{u_\ell}|^2
 \le rp_0e^{-\beta\delta_H}.
\]
Here the last inequality uses $r(x)\le e^{-x}$ and
$\sum_aA_a^\dagger A_a=I$. The generator is primitive by
\Cref{thm:metropolis-all-temperature}, so the KMS variational principle
bounds its gap by $e^{-\beta\delta_H}/(1-rp_0)$.
Since $Z\ge h$, this is at most \eqref{appth:eq:groundobstruction}.
The proof does not assume that the coherent correction vanishes.
\end{proof}

For a fixed Hamiltonian with a degenerate ground state subspace and fixed couplings,
irreducibility of the projected couplings is therefore necessary for a
gap bounded uniformly away from zero as $\beta\to\infty$. Here
irreducibility means that the only orthogonal projections on the ground
space commuting with every $P_0A_aP_0$ are $0$ and $P_0$.
An unweighted Hodge Laplacian with reducible projected couplings is given in \Cref{app:hodge-barbell-example}.

\section{Hodge Laplacians and the uniform down--up walk}
\label{app:hodge-details}

This appendix gives the definitions and proof used in
\Cref{sec:hodge-application}.
Bounds based on local spectral expansion and explicit examples follow
in \Cref{app:hodge-support}.

\subsection{The Hodge Laplacian as a Hamiltonian}
\label{sec:model}

We use the simplicial-chain and unweighted Hodge Laplacian conventions of
\cite[Sec.~2, Definitions~2.1--2.3]{HorakJost2013}.

\begin{assumption}[Simplicial complex setup]
\label{set:complex}
Use the finite unweighted simplicial complex $X$, face dimension $k$, face set
$F=X(k)$, cardinality $D_k\ge2$, and increasing-vertex orientation of
\Cref{sec:hodge-application}. The complex need not be pure.
\end{assumption}

For $j\ge0$, let $\cH_j:=\ell^2(X(j))$, where $\ket f$ is the indicator
function of $f\in X(j)$; these vectors form the orthonormal face basis.
For $j\ge1$, the boundary map $\partial_j:\cH_j\to\cH_{j-1}$ acts as
$\partial_j\ket f=\sum_{i=0}^j(-1)^i\ket{f\setminus\{x_i\}}$
when $f=\{x_0<\cdots<x_j\}$. Choose nonnegative bounds $\alpha_j\ge\|\partial_j\|$ for $j=k,k+1$,
and define
\begin{equation}
 \Lambda_k:=\alpha_k^2+\alpha_{k+1}^2,\qquad\Delta_k=\partial_k^\dagger\partial_k+\partial_{k+1}\partial_{k+1}^\dagger,\qquad H_{X,k}:=\frac{\Delta_k}{\Lambda_k},
 \label{eq:normalized-H}
\end{equation}

Write $\Delta_k^\downarrow:=\partial_k^\dagger\partial_k$ and
$\Delta_k^\uparrow:=\partial_{k+1}\partial_{k+1}^\dagger$ for the lower
and upper Hodge Laplacians, respectively.
The identity
$\partial_k\partial_{k+1}=0$ implies
$\Delta_k^\downarrow\Delta_k^\uparrow=
\Delta_k^\uparrow\Delta_k^\downarrow=0$.  Thus the two positive
semidefinite summands have orthogonal ranges. The nonzero spectrum of
$\Delta_k$ is the union of their nonzero spectra, counting multiplicities; see
\cite[Sec.~2, Eqs.~(2.3)--(2.5)]{HorakJost2013}.  Consequently,
\begin{equation}
 \|\Delta_k\|=\max\{\|\partial_k\|^2,\|\partial_{k+1}\|^2\}.
 \label{eq:hodge-norm}
\end{equation}

\begin{lemma}[Norm bounds for the boundary operators~\cite{HorakJost2013}]
\label{lem:boundary-norms}
Adopt \Cref{set:complex}.  Suppose that $R_-,R_+\ge0$ satisfy
\begin{equation}
 \max_{\xi\in X(k-1)}\bigl|\{f\in X(k):\xi\subset f\}\bigr|\le R_-,
 \quad
 \max_{f\in X(k)}\bigl|\{\zeta\in X(k+1):f\subset\zeta\}\bigr|\le R_+ .
 \label{eq:degree-bounds}
\end{equation}
Then
\begin{equation}
 \|\partial_k\|^2\le(k+1)R_-,
 \quad
 \|\partial_{k+1}\|^2\le(k+2)R_+,
 \quad
 \|\Delta_k\|\le\max\{(k+1)R_-,\ (k+2)R_+\}.
 \label{eq:boundary-norms}
\end{equation}
The choices $R_-=n-k$ and $R_+=n-k-1$ are always admissible, so
$\|\Delta_k\|\le(k+2)(n-k)$ unconditionally.
\end{lemma}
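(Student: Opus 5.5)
The approach is to bound each boundary operator through $\|\partial\|^2=\|\partial^\dagger\partial\|$ and then control the resulting matrices by their entrywise row sums. Schur's test (equivalently, the maximum absolute row sum bound for a symmetric matrix) gives $\|M\|\le\max_f\sum_g|M(f,g)|$ for a Hermitian matrix $M$ in the face basis. The estimate for $\|\Delta_k\|$ then follows directly from \eqref{eq:hodge-norm}.

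\textbf{Step 1: the lower map.} For $\partial_k$ it is cleaner to bound the other Gram matrix, $\|\partial_k\|^2=\|\partial_k\partial_k^\dagger\|$, which acts on $\cH_{k-1}$. Its entries are
\[
 (\partial_k\partial_k^\dagger)(\xi,\eta)=\sum_{f\in X(k)}\langle\xi|\partial_k|f\rangle\overline{\langle\eta|\partial_k|f\rangle},
\]
and each factor is $0$ or $\pm1$. Summing absolute values over $\eta$ for fixed $\xi$ gives at most the number of pairs $(f,\eta)$ with $\xi\subset f$ and $\eta\subset f$, $|\eta|=k$. Each $k$-face $f$ has exactly $k+1$ facets, and at most $R_-$ faces $f$ contain $\xi$. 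The row sum is therefore at most $(k+1)R_-$, which proves the first inequality.

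\textbf{Step 2: the upper map.} The same argument applies to $\partial_{k+1}\partial_{k+1}^\dagger$ on $\cH_k$. For fixed $f\in X(k)$, we count pairs $(\zeta,g)$ with $f\subset\zeta\in X(k+1)$ and $g$ a facet of $\zeta$. There are at most $R_+$ choices of $\zeta$, and each $\zeta$ has $k+2$ facets. This gives $\|\partial_{k+1}\|^2\le(k+2)R_+$. When $X(k+1)=\emptyset$, $\partial_{k+1}=0$ and the bound holds trivially. Combining Steps 1 and 2 with \eqref{eq:hodge-norm} yields the bound on $\|\Delta_k\|$.

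\textbf{Step 3: the unconditional choices.} A $(k-1)$-face $\xi$ has $k$ vertices, so every $k$-face containing it is obtained by adding one of the remaining $n-k$ vertices. Hence $R_-=n-k$ is admissible. Similarly, a $k$-face has $k+1$ vertices and can be extended in at most $n-k-1$ ways, so $R_+=n-k-1$ is admissible. This gives
\[
 \|\Delta_k\|\le\max\{(k+1)(n-k),(k+2)(n-k-1)\}.
\]
Both entries are at most $(k+2)(n-k)$, which proves the unconditional bound.

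No step is a serious obstacle. The only care needed is in Step 1: one must work with the Gram matrix on $\cH_{k-1}$, not with $\partial_k^\dagger\partial_k$, so that the count involves the up-degree $R_-$ rather than a cruder bound. One must also check that signs do not matter, since the argument uses only absolute values.
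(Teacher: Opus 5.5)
Your proof is correct and follows the paper's route. The paper cites Horak--Jost's degree bound for the upper Laplacians $\partial_{i+1}\partial_{i+1}^\dagger$ with $i=k-1$ and $i=k$, together with the equality of the nonzero spectra of $\partial\partial^\dagger$ and $\partial^\dagger\partial$; this is the row-sum estimate you carry out directly. It then concludes with \eqref{eq:hodge-norm} and the same vertex count for the unconditional choices. Your closing caveat is unnecessary, however: the row sum of $\partial_k^\dagger\partial_k$ at $f$ is at most $\sum_{\eta\subset f}|\{g\in X(k):\eta\subset g\}|\le(k+1)R_-$, so that Gram matrix gives the same bound.
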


\begin{proof}
The degree bound for the upper Hodge Laplacian in
\cite[Eq.~(3.9), arXiv v3]{HorakJost2013}, with indices $i=k-1$ and
$i=k$, gives the first two bounds, using the equality of the nonzero spectra of the upper Hodge Laplacian
in degree $i$ and the lower Hodge Laplacian in degree $i+1$ in \cite[Eq.~(2.6)]{HorakJost2013}.
If $X(k+1)=\emptyset$, then $\partial_{k+1}=0$.
The full Hodge Laplacian bound follows from \eqref{eq:hodge-norm}.
Each extension adds one of at most $n-k$ or $n-k-1$ remaining vertices,
respectively, and
$\max\{(k+1)(n-k),(k+2)(n-k-1)\}\le(k+2)(n-k)$.
\end{proof}

By \eqref{eq:hodge-norm}, the spectral width $W$ of $H_{X,k}$ also satisfies
$W\le\max\{\alpha_k^2,\alpha_{k+1}^2\}/\Lambda_k\le1$.

\subsection{The uniform down--up walk and its gap bound}
\label{sec:walk}
\label{subsec:walk-definition}

Let $Y:=\langle F\rangle:=\{\alpha:\alpha\subseteq f\text{ for some }f\in F\}$ be the
pure $k$-dimensional complex generated by $F=X(k)$.
A \emph{ridge} is a face $\eta\in Y(k-1)$.
Write $S_\eta:=\{v\in[n]\setminus\eta:\eta\cup\{v\}\in F\}$ and
$d_\eta:=|S_\eta|=|\{f\in F:\eta\subseteq f\}|\ge1$.

\begin{definition}[Uniform down--up walk]
\label{def:Pu}
Under \Cref{set:complex}, the uniform down--up walk deletes a uniformly
chosen vertex of the current face and then selects a uniformly chosen
$k$-face containing the resulting ridge. Its transition matrix is
\begin{equation}
 P_u(f,g):=\frac1{k+1}
 \sum_{\substack{\eta\in Y(k-1)\\ \eta\subseteq f\cap g}}\frac1{d_\eta},
 \qquad f,g\in F.
 \label{eq:Pu}
\end{equation}
\end{definition}
This is the down--up walk induced by the uniform distribution on $Y(k)$
in \cite[Sec.~2.4]{AlevLau2020}.

\subsubsection{Structure and Dirichlet form of the walk}
\label{app:uniform-walk}

The Dirichlet form $\cE_{u,P_u}$ and variance $\Var_u$ use the
conventions of \Cref{subsec:poincare} with $u_f=D_k^{-1}$.

Parts (i) and (ii) of the next lemma specialize the down--up factorization
and local restriction identities to the uniform distribution on $Y(k)$; see \cite[Sec.~2.4 and Lemma~3.7]{AlevLau2020}.
\begin{lemma}[Properties of the uniform down--up walk]
\label{lem:Pu-structure}
Let $u_f:=D_k^{-1}$, let
$B_{\eta f}:=\mathbf 1_{\{\eta\subseteq f\}}$, and set
$\mathsf N:=\operatorname{diag}(d_\eta)_{\eta\in Y(k-1)}$.  Under \Cref{def:Pu}, the following statements hold.
\begin{enumerate}
\item[(i)]
The matrix $\mathsf N$ is invertible, and
$P_u=(k+1)^{-1}B^{\mathsf T}\mathsf N^{-1}B$.
Consequently, $P_u$ is a symmetric, positive-semidefinite Markov
transition matrix that is reversible with respect to $u$.

\item[(ii)]
For a ridge $\eta$ and $a\in\mathbb C^F$, write
$\bar a_\eta:=d_\eta^{-1}\sum_{f\supseteq\eta}a_f$ and
$\Var_\eta(a):=d_\eta^{-1}\sum_{f\supseteq\eta}|a_f-\bar a_\eta|^2$ for
the mean and variance of $a$ under the uniform distribution on the $k$-faces containing $\eta$. The Dirichlet form of the walk satisfies
\begin{equation}
 \cE_{u,P_u}(a)
 =\frac1{D_k(k+1)}
 \sum_{\eta\in Y(k-1)}d_\eta\Var_\eta(a).
 \label{eq:Pu-form}
\end{equation}

\item[(iii)]
For distinct $f,g\in F$, one has $P_u(f,g)>0$ if and only if
$f\cap g\in Y(k-1)$.  Hence the support graph of $P_u$ is the
ridge-adjacency graph on $F$.  Moreover,
$\operatorname{spec}(P_u)\subseteq[0,1]$, and $P_u$ is irreducible if and
only if the ridge-adjacency graph is connected, equivalently, if and only
if $1-\lambda_2(P_u)>0$.
\end{enumerate}
\end{lemma}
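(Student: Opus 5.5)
The plan is to prove the three parts in order, since (iii) uses both (i) and (ii).

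\emph{Part (i).} Every ridge $\eta\in Y(k-1)$ lies in some $f\in F$ by definition of $Y=\langle F\rangle$, so $d_\eta\ge1$ and $\mathsf N$ is invertible. The walk is determined by two choices:
\begin{itemize}
\item from $f$, the ridge $\eta\subseteq f$ is chosen with probability $(k+1)^{-1}$, since $f$ has exactly $k+1$ ridges, one per deleted vertex;
\item from $\eta$, the face $g\supseteq\eta$ is chosen with probability $d_\eta^{-1}$.
\end{itemize}
This gives $P_u(f,g)=\sum_\eta (k+1)^{-1}B_{\eta f}d_\eta^{-1}B_{\eta g}$, which is the entrywise form of $(k+1)^{-1}B^{\mathsf T}\mathsf N^{-1}B$. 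Symmetry is immediate from this formula. Positive semidefiniteness follows by writing the matrix as $(k+1)^{-1}(\mathsf N^{-1/2}B)^{\mathsf T}(\mathsf N^{-1/2}B)$. For row sums,
\[
\sum_g P_u(f,g)=(k+1)^{-1}\sum_{\eta\subseteq f}d_\eta^{-1}d_\eta=1 .
\]
Nonnegativity is clear, so $P_u$ is a stochastic matrix. Symmetry of a stochastic matrix gives reversibility with respect to $u$.

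\emph{Part (ii).} I start from $\cE_{u,P_u}(a)=\langle a,(I-P_u)a\rangle_u=D_k^{-1}\bigl(\|a\|^2-\langle a,P_ua\rangle\bigr)$. Using the factorization,
\[
\langle a,P_ua\rangle=(k+1)^{-1}\sum_\eta d_\eta^{-1}\Bigl|\sum_{f\supseteq\eta}a_f\Bigr|^2=(k+1)^{-1}\sum_\eta d_\eta|\bar a_\eta|^2 .
\]
Each face lies in exactly $k+1$ ridges, so double counting gives
\[
\|a\|^2=(k+1)^{-1}\sum_\eta\sum_{f\supseteq\eta}|a_f|^2 .
\]
Subtracting and using $d_\eta^{-1}\sum_{f\supseteq\eta}|a_f|^2-|\bar a_\eta|^2=\Var_\eta(a)$ yields \eqref{eq:Pu-form}.

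\emph{Part (iii).} For distinct $f,g$, the sum defining $P_u(f,g)$ is nonempty exactly when $f\cap g$ contains a $(k-1)$-face. Since $|f\cap g|\le k$, this means $f\cap g$ is itself a ridge, which lies in $Y$ because it is a subset of $f$. Hence the support graph of $P_u$ is the ridge-adjacency graph on $F$.

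The spectral inclusion $\operatorname{spec}(P_u)\subseteq[0,1]$ comes from two facts: positive semidefiniteness from (i), and the bound $\|P_u\|\le1$ in $\ell^2$, which holds because $P_u$ is symmetric and stochastic. Finally, by \eqref{eq:gap-irreducible}, irreducibility is equivalent to $1-\lambda_2(P_u)>0$, and also to connectivity of the support graph.

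I expect no real obstacle. The only care needed is in the double counting for $\|a\|^2$ and in checking that $f\cap g$ lies in $Y(k-1)$, which holds because faces of $Y$ are closed under taking subsets.
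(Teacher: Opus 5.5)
Your proposal is correct and follows the paper's proof essentially step for step: the entrywise factorization, positive-semidefiniteness and row-sum argument in (i), and the support-graph and spectrum argument via \eqref{eq:gap-irreducible} in (iii), are the same. The only difference is in (ii), where the paper cites the Alev--Lau local restriction identities and applies them to the real and imaginary parts of $a$. You instead verify the two double-counting identities directly for complex $a$, which makes that step self-contained.
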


\begin{proof}
The factorization in (i) follows entrywise from \eqref{eq:Pu}; it gives
symmetry and positive semidefiniteness. Each face has $k+1$ ridges, so
$\sum_gP_u(f,g)=(k+1)^{-1}\sum_{\eta\subseteq f}d_\eta/d_\eta=1$,
and symmetry gives reversibility with respect to $u$.
For (ii), the induced ridge distribution is
$d_\eta/[D_k(k+1)]$, and conditional on a ridge $\eta$, its containing $k$-faces are uniformly
distributed. Subtracting the local restriction identities in
\cite[Lemma~3.7(1)--(2)]{AlevLau2020} therefore gives
\eqref{eq:Pu-form}. For complex $a$, apply these identities to its real
and imaginary parts. Finally, (iii) follows from \eqref{eq:Pu},
\Cref{eq:gap-irreducible}, and positive semidefiniteness.
\end{proof}

\subsection{Proof of the CKG Lindbladian spectral gap bound for the Hodge Laplacian}

\begin{proof}[Proof of \Cref{thm:main}]
Symmetry and positive semidefiniteness of $P_u$ give the exact spectral
gap $\kappa_u$ of the normalized infinite-temperature generator by
\Cref{thm:associated-gap}.
\Cref{lem:carre-L0} verifies \Cref{ass:poincare} with this constant,
so \Cref{thm:metropolis-all-temperature} applies.
The generator is independent of the Kraus representation by
\Cref{lem:relabelling}.
\end{proof}

\section{Spectral gap bounds and examples for Gibbs sampling with Hodge Laplacians}
\label{app:hodge-support}

We give further spectral gap bounds and examples for the Hodge Laplacian
Hamiltonians in \Cref{app:hodge-details}. The final example gives an upper
bound on the spectral gap of the CKG generator at low temperature for an
unweighted Hodge Laplacian.
Throughout, use the conventions of \Cref{sec:model,sec:walk}.

\subsection{Local spectral expansion}
\label{subsec:hodge-bounds}
\label{subsec:links}

Adopt \Cref{set:complex}.
For any vertex subset $\alpha$, write
$c_\alpha:=|\{f\in F:\alpha\subseteq f\}|$.
We use the induced link distributions and non-lazy link walks
of \cite[Secs.~2.2--2.3]{AlevLau2020}, with the uniform
distribution on $Y(k)=F$.

For $\alpha\in Y(j)$, $-1\le j\le k-2$, its link is
$\operatorname{lk}_Y(\alpha):=
\{\sigma\in Y:\sigma\cap\alpha=\emptyset,\ \sigma\cup\alpha\in Y\}$.
The \emph{link graph} $\Gamma_\alpha$ is the $1$-skeleton of
$\operatorname{lk}_Y(\alpha)$, regarded as a weighted graph.
Its vertex set is
$V_\alpha:=\{v\in[n]\setminus\alpha:
\alpha\cup\{v\}\in Y\}$.
Distinct vertices $u,v\in V_\alpha$ are adjacent precisely when
$\alpha\cup\{u,v\}\in Y$, and their edge has weight
$c_{\alpha\cup\{u,v\}}$.
The weighted degree of $u$ is
$(k-j-1)c_{\alpha\cup\{u\}}$.
Accordingly, the non-lazy random walk on $\Gamma_\alpha$ has
transition matrix $\sfM_\alpha$, defined by
$\sfM_\alpha(u,v):=
c_{\alpha\cup\{u,v\}}/
[(k-j-1)c_{\alpha\cup\{u\}}]$ for $u\ne v$ and
$\sfM_\alpha(u,u):=0$.

By the properties of induced link walks
\cite[Secs.~2.2--2.3]{AlevLau2020}, $\sfM_\alpha$ is
row-stochastic and reversible with respect to the probability
distribution
$\nu_\alpha(u):=
c_{\alpha\cup\{u\}}/((k-j)c_\alpha)$.
This distribution has full support.
Moreover, purity of $Y$ and $j\le k-2$ imply
$|V_\alpha|\ge2$.
Thus $\sfM_\alpha$ has real eigenvalues, and
$\lambda_2(\sfM_\alpha)$ denotes its second-largest eigenvalue,
counting multiplicity.

Define
$\mu_j(Y):=\max_{\alpha\in Y(j)}
\lambda_2(\sfM_\alpha)$ for $-1\le j\le k-2$, and set
\begin{equation}
 G_k(Y):=\frac1{k+1}
 \prod_{j=-1}^{k-2}\bigl(1-\mu_j(Y)\bigr).
 \label{eq:Gk}
\end{equation}
Following the one-sided convention of
\cite[Definition~1.1]{AlevLau2020}, for $0\le\gamma<1$
we call $Y$ a \emph{one-sided $\gamma$-local spectral expander}
if $\mu_j(Y)\le\gamma$ for every $-1\le j\le k-2$.
Only upper bounds on the second-largest eigenvalues are required;
the smallest eigenvalues need not be bounded away from $-1$.
Since $Y$ is finite, the condition $\mu_j(Y)<1$ for every
relevant $j$ is equivalent to the existence of such a $\gamma<1$.
For a family of complexes, a uniform expansion bound requires
a common $\gamma<1$.

The following statement specializes
\cite[Thm.~1.5, arXiv v3]{AlevLau2020}
to the uniform distribution on $Y(k)$.

\begin{theorem}[Alev--Lau bound for the uniform down--up walk~\cite{AlevLau2020}]
\label{thm:alev-lau}
Under \Cref{set:complex}, the uniform down--up walk $P_u$ satisfies
\begin{equation}
 1-\lambda_2(P_u)\ge G_k(Y).
 \label{eq:alev-lau}
\end{equation}
In particular, if $\mu_j(Y)<1$ for every $-1\le j\le k-2$,
then $P_u$ is irreducible and its spectral gap is at least
$G_k(Y)>0$.
\end{theorem}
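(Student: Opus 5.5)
The statement is a specialization of \cite[Thm.~1.5]{AlevLau2020}. The plan is therefore to check that our uniform down--up walk and link walks are exactly the objects in that theorem, and then deduce irreducibility from positivity of the gap. Take the pure complex $Y=\langle F\rangle$ with the uniform distribution $\pi_k$ on $Y(k)=F$. Let $\pi_j$ be the induced distributions on $Y(j)$: choose a uniform $k$-face, then a uniformly random $j$-subface. A direct count gives $\pi_j(\alpha)=c_\alpha/\bigl(D_k\binom{k+1}{j+1}\bigr)$.

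\emph{Step 1: identify the walk.} With these weights, the Alev--Lau down--up walk on $Y(k)$ first moves from $f$ to a uniformly random ridge $\eta\subset f$. It then moves to $g\supseteq\eta$ with probability $\pi_k(g)/\sum_{g'\supseteq\eta}\pi_k(g')=1/d_\eta$. This is exactly \eqref{eq:Pu}, consistent with the factorization in \Cref{lem:Pu-structure}(i).

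\emph{Step 2: identify the link walks.} For $\alpha\in Y(j)$ with $-1\le j\le k-2$, the induced weight on an edge $\{u,v\}$ of $\operatorname{lk}_Y(\alpha)$ is proportional to $\pi_{j+2}(\alpha\cup\{u,v\})$, hence to $c_{\alpha\cup\{u,v\}}$. The vertex weight is proportional to $c_{\alpha\cup\{u\}}$. Each $k$-face containing $\alpha\cup\{u\}$ contributes $k-j-1$ edges at $u$, so the weighted degree is $(k-j-1)c_{\alpha\cup\{u\}}$. Dividing edge weight by degree gives the non-lazy walk $\sfM_\alpha$ defined above, and normalizing the vertex weights gives $\nu_\alpha$. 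Hence $\mu_j(Y)$ is exactly the quantity $\max_{\alpha}\lambda_2$ bounded in their local-to-global theorem.

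\emph{Step 3: apply and conclude.} Quoting \cite[Thm.~1.5]{AlevLau2020} gives the spectral gap of the $k$-level down--up walk as at least $\frac1{k+1}\prod_{j=-1}^{k-2}(1-\mu_j)$, which is \eqref{eq:alev-lau}. If every $\mu_j<1$, then $G_k(Y)>0$, and by \eqref{eq:gap-irreducible}, equivalently \Cref{lem:Pu-structure}(iii), $P_u$ is irreducible.

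The main obstacle is the bookkeeping in Steps 1 and 2. Their theorem is stated for general weighted pure complexes, with its own index shifts (their dimension $d$ versus our $k$, and ranges of $j$), and it uses a one-sided convention that only needs upper bounds on $\lambda_2$. I would check carefully that their gap is the Poincar\'e gap $1-\lambda_2$ and not an absolute gap. I would also confirm that the product runs over links of dimensions $-1$ through $k-2$, with the empty face giving the $1$-skeleton of $Y$. To guard against convention mismatch, I would sanity-check both sides on the boundary of a single $(k+1)$-simplex, where everything can be computed explicitly.
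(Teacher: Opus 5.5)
Your proposal is correct and follows essentially the same route as the paper. Both compute the induced distributions $\Pi_j(\alpha)=c_\alpha/[D_k\binom{k+1}{j+1}]$, check that the up step from a ridge $\eta$ selects each $g\supseteq\eta$ with probability $1/d_\eta$ (so the Alev--Lau down--up walk is $P_u$), and check that the link walks are $\sfM_\alpha$ (so the local parameters are $\mu_j(Y)$). They then quote \cite[Thm.~1.5]{AlevLau2020} and use \eqref{eq:gap-irreducible} for irreducibility; the only cosmetic difference is that the paper verifies the link walk through the ratio $\Pi_1^\alpha(\{u,v\})/(2\Pi_0^\alpha(u))$ of explicitly normalized link distributions, whereas you divide edge weight by weighted degree, which is the same computation.
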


\begin{proof}[Verification of the induced distributions]
The complex $Y=\langle F\rangle$ is pure of dimension $k$.
The uniform distribution $\Pi_k(f)=D_k^{-1}$ on $Y(k)$ induces
$\Pi_j(\alpha)=c_\alpha/[D_k\binom{k+1}{j+1}]$ for $-1\le j\le k$.
For $\alpha\in Y(j)$, $-1\le j\le k-2$, its link distributions are
\[
 \Pi_0^\alpha(u)=\frac{c_{\alpha\cup\{u\}}}{(k-j)c_\alpha},
 \qquad
 \Pi_1^\alpha(\{u,v\})=
 \frac{c_{\alpha\cup\{u,v\}}}{\binom{k-j}{2}c_\alpha}.
\]
Thus $\Pi_1^\alpha(\{u,v\})/(2\Pi_0^\alpha(u))=\sfM_\alpha(u,v)$
for $u\ne v$, so the local spectral parameters are $\mu_j(Y)$.
Conditional on a ridge $\eta$, the probability of selecting
$g\supseteq\eta$ in the up step is
$\Pi_k(g)/((k+1)\Pi_{k-1}(\eta))=1/d_\eta$; hence the induced
walk is $P_u$. Applying \cite[Thm.~1.5, arXiv v3]{AlevLau2020} gives
\eqref{eq:alev-lau}. Positivity of $G_k(Y)$ gives irreducibility by
\Cref{eq:gap-irreducible}.
\end{proof}

\begin{remark}[The case $k=1$]
\label{rem:k1}
For $k=1$, the product in \eqref{eq:Gk} contains only
$1-\mu_{-1}(Y)$, and $\sfM_\emptyset$ is the simple random
walk on the graph with vertex set $Y(0)$ and edge set $F$.
In this case, \eqref{eq:alev-lau} is an equality.

Indeed, define
$\sfD_1:\mathbb C^F\to\mathbb C^{Y(0)}$ by
$(\sfD_1a)(v):=
d_{\{v\}}^{-1}\sum_{e\in F:\,v\in e}a_e$,
and define
$\sfU_0:\mathbb C^{Y(0)}\to\mathbb C^F$ by
$(\sfU_0h)(\{u,v\}):=\frac12(h(u)+h(v))$.
Then $P_u=\sfU_0\sfD_1$ and
$\sfD_1\sfU_0=\frac12(I+\sfM_\emptyset)$ have the same nonzero spectra
\cite[Fact~2.1]{AlevLau2020}. Both spectra are nonnegative, and the
independent unsigned incidence columns of two distinct edges give rank
at least two. Thus $\lambda_2(P_u)=\frac12(1+\mu_{-1}(Y))$ and
$1-\lambda_2(P_u)=G_1(Y)$.
\end{remark}

\begin{corollary}[Exact spectral gap of the normalized infinite-temperature generator for an irreducible down--up walk]
\label{cor:Pu-gap}
Adopt \Cref{set:complex} and suppose $P_u$ is irreducible. For the
associated normalized infinite-temperature generator on $\ell^2(X(k))$, one has
$\gap(-\cL_0)=\kappa_u:=1-\lambda_2(P_u)>0$ and
$\cE_0(A)\ge\kappa_u\Var_\tau(A)$ for every operator $A$.
If, in addition, $Y$ satisfies one-sided local spectral expansion,
then $\kappa_u\ge G_k(Y)>0$.
\end{corollary}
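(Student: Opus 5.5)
The plan is to reduce everything to earlier results: \Cref{thm:associated-gap} for the generator, \Cref{lem:carre-L0} for the Poincar\'e inequality, and \Cref{thm:alev-lau} for the expansion bound.

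\emph{The associated generator.} Under irreducibility, $P_u$ is a symmetric irreducible stochastic matrix on $F$ with $|F|=D_k\ge2$, by \Cref{lem:Pu-structure}(i),(iii). So \Cref{def:Markov-kernel-embedding} applies with $Q=P_u$, and \Cref{thm:associated-gap} gives
\[
 \gap(-\cL_0)=\min\{1,1-\lambda_2(P_u)\}.
\]

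\emph{Removing the minimum.} The only point needing care is that this minimum equals $1-\lambda_2(P_u)$. This follows because $P_u$ is positive semidefinite by \Cref{lem:Pu-structure}(i), so $\lambda_2(P_u)\ge0$ and hence $1-\lambda_2(P_u)\le1$. Irreducibility and \eqref{eq:gap-irreducible} give $\kappa_u>0$. Thus $\gap(-\cL_0)=\kappa_u\in(0,1]$. This positive semidefiniteness is exactly what separates the down--up walk from a general $Q$, where off-diagonal coherences could set the gap at $1$.

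\emph{The Poincar\'e inequality.} The inequality $\cE_0(A)\ge\kappa_u\Var_\tau(A)$ for every operator $A$ comes from one of two equivalent sources. One is the last display of \Cref{lem:carre-L0} with $\lambda_Q=\kappa_u$. The other is the KMS Poincar\'e inequality \eqref{eq:KMS-poincare} of \Cref{lem:gap-primitive} with $\sigma=\tau$. In the second case, note that at $\sigma=\tau$ the KMS norm is $D_k^{-1}\Tr(X^\dagger X)=\|X\|_\tau^2$, so the infimum over scalars is $\Var_\tau$.

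\emph{The expansion bound.} If $Y$ is a one-sided local spectral expander, then $\mu_j(Y)<1$ for all relevant $j$. \Cref{thm:alev-lau} then gives $\kappa_u=1-\lambda_2(P_u)\ge G_k(Y)$, which is a product of positive factors and hence positive.

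No step is a real obstacle. The proof is bookkeeping, and the only step requiring attention is the positive-semidefiniteness argument that removes the minimum with $1$.
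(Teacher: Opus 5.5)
Your proposal is correct and follows the paper's proof: positive semidefiniteness of $P_u$ from \Cref{lem:Pu-structure} removes the minimum with $1$ in \Cref{thm:associated-gap}, and the expansion bound is read off from \Cref{thm:alev-lau}. Your explicit derivation of $\cE_0(A)\ge\kappa_u\Var_\tau(A)$, either from \Cref{lem:carre-L0} or from the KMS Poincar\'e inequality at $\sigma=\tau$, supplies a step the paper leaves implicit.
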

\begin{proof}
By \Cref{lem:Pu-structure}, $P_u$ is symmetric and positive semidefinite.
Hence \Cref{thm:associated-gap} gives
$\gap(-\cL_0)=\min\{1,1-\lambda_2(P_u)\}=1-\lambda_2(P_u)$.
The final conditional assertion is \Cref{thm:alev-lau}.
\end{proof}

\subsubsection{Consequence for the finite-temperature spectral gap of the CKG generator}

\begin{corollary}[Spectral gap bound from local spectral expansion]
\label{cor:hodge-local-expansion}
Adopt \Cref{set:complex}. If $\mu_j(Y)<1$ for every
$-1\le j\le k-2$, then $P_u$ is irreducible and
\begin{equation}
 \kappa_u\ge G_k(Y)>0.
 \label{eq:main-link-gap}
\end{equation}
For the Hamiltonian and couplings of \Cref{thm:main},
\begin{equation}
 \gap(-\cL_{\gamma_{\rm M}})\ge\frac12e^{-\beta W}G_k(Y)
 \qquad(\beta>0).
 \label{eq:hodge-link-metropolis}
\end{equation}
\end{corollary}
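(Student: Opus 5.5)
The plan is to obtain this corollary by chaining three results already in the paper: \Cref{thm:alev-lau}, \Cref{cor:Pu-gap}, and \Cref{thm:main} (equivalently \Cref{thm:metropolis-all-temperature}). No new estimate should be needed.

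The first step establishes \eqref{eq:main-link-gap}. Since $Y=\langle F\rangle$ is finite, the hypothesis $\mu_j(Y)<1$ for every $-1\le j\le k-2$ makes each factor $1-\mu_j(Y)$ in \eqref{eq:Gk} strictly positive. Hence $G_k(Y)>0$. \Cref{thm:alev-lau} then gives $1-\lambda_2(P_u)\ge G_k(Y)>0$. By the equivalence \eqref{eq:gap-irreducible}, recorded in \Cref{lem:Pu-structure}(iii), this forces $P_u$ to be irreducible, and it yields $\kappa_u\ge G_k(Y)$.

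The second step uses irreducibility. With $P_u$ irreducible, the standing assumption of \Cref{sec:hodge-application} holds. \Cref{cor:Pu-gap} then identifies $\kappa_u$ as the exact gap of the normalized infinite-temperature generator $\Phi_{P_u}-\id$. Symmetry and positive semidefiniteness of $P_u$ make the minimum with $1$ in \Cref{thm:associated-gap} irrelevant. \Cref{lem:carre-L0} then verifies \Cref{ass:poincare} with $\kappa_0=\kappa_u$ for any adjoint-closed Kraus family of $\Phi_{P_u}$. By \Cref{lem:relabelling}, the choice of Kraus family does not affect the generator.

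The third step applies \Cref{thm:main}, which gives $\gap(-\cL_{\gamma_{\rm M}})\ge\tfrac12e^{-\beta W}\kappa_u$ for every $\beta>0$. Substituting the lower bound $\kappa_u\ge G_k(Y)$ from the first step gives \eqref{eq:hodge-link-metropolis}.

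The only point needing care is the first step. One must check that the Alev--Lau hypotheses are actually satisfied by the induced link distributions and the non-lazy link walks $\sfM_\alpha$ defined in \Cref{subsec:links}. This verification is already supplied in the proof of \Cref{thm:alev-lau}, so I would simply cite it. Everything else is a direct substitution.
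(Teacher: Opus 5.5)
Your proposal is correct and matches the paper's proof, which simply applies \Cref{thm:alev-lau} and \Cref{thm:main}. Your intermediate appeals to \Cref{cor:Pu-gap}, \Cref{lem:carre-L0}, and \Cref{lem:relabelling} only unpack steps already contained in the proof of \Cref{thm:main}.
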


\begin{proof}
Apply \Cref{thm:alev-lau,thm:main}.
\end{proof}

\subsection{Perturbative bounds and bounded-degree consequences}
\label{app:hodge-consequences}

\begin{corollary}[Perturbative spectral gap bounds for CKG generators with Hodge Laplacians]
\label{cor:hodge-perturbative}
Under the Hamiltonian, coupling, and irreducibility assumptions of
\Cref{thm:main}, take $\beta>0$, put $b:=\beta W$, and choose
$\gamma\in\{\gamma_{\rm G},\gamma_{\rm M}\}$.
Put $\delta_{\gamma,u}:=\sqrt2e^b b/
\sqrt{q_\gamma\kappa_u}$.
If $\delta_{\gamma,u}<1$, the CKG generator $\cL_\gamma$ is primitive with unique
stationary state $\sigma_\beta$ and
\begin{equation}
 \gap(-\cL_\gamma)\ge
 q_\gamma(1-\delta_{\gamma,u})^2e^{-\beta W}\kappa_u.
 \label{eq:hodge-direct-perturbative}
\end{equation}
If $G_k(Y)>0$, put
$\delta_{\gamma,Y}:=\sqrt2e^b b/\sqrt{q_\gamma G_k(Y)}$.
When $\delta_{\gamma,Y}<1$, the same conclusion holds with $\kappa_u$
and $\delta_{\gamma,u}$ replaced by $G_k(Y)$ and $\delta_{\gamma,Y}$,
respectively. In particular,
\begin{equation}
 \beta W\le\frac{\sqrt{q_\gamma G_k(Y)}}{2\sqrt2e}
 \quad\Longrightarrow\quad
 \gap(-\cL_\gamma)\ge\frac{q_\gamma}{4}e^{-\beta W}G_k(Y).
 \label{eq:simple-gap}
\end{equation}
\end{corollary}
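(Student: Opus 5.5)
This corollary follows by feeding the down--up walk gap into \Cref{thm:comparison}, and then substituting the local-expansion bound from \Cref{thm:alev-lau} through the substitution clause \eqref{eq:comparison-substituted-gap}.

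\emph{First claim.} By \Cref{cor:Pu-gap}, the normalized infinite-temperature generator built from any Kraus family for $\Phi_{P_u}$ has exact gap $\kappa_u=1-\lambda_2(P_u)>0$. By \Cref{lem:carre-L0}, the family satisfies \Cref{ass:poincare} with $\kappa_0=\kappa_u$, since normalization and adjoint closure up to phases hold by construction. With $b=\beta W$ we then have $\delta_\gamma=\delta_{\gamma,u}$. So whenever $\delta_{\gamma,u}<1$, \Cref{thm:comparison} gives primitivity, uniqueness of $\sigma_\beta$ as the stationary state, and \eqref{eq:hodge-direct-perturbative}. Independence from the choice of Kraus representation follows from \Cref{lem:relabelling}.

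\emph{Second claim.} If $G_k(Y)>0$, then \Cref{thm:alev-lau} gives $\kappa_u\ge G_k(Y)$. Apply \eqref{eq:comparison-substituted-gap} with $\underline\kappa=G_k(Y)$ and $\bar b=b$; then $\bar\delta_\gamma=\delta_{\gamma,Y}$. The theorem's monotonicity statement ensures $\delta_{\gamma,u}\le\delta_{\gamma,Y}<1$, so the hypothesis of \Cref{thm:comparison} is met automatically, and we get the bound with $G_k(Y)$ in place of $\kappa_u$.

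\emph{Implication \eqref{eq:simple-gap}.} Assume $b\le\sqrt{q_\gamma G_k(Y)}/(2\sqrt2 e)$. The key check is that $b\le 1$, which lets us use $e^b\le e$:
\begin{itemize}
\item since $q_\gamma\le1$ and $G_k(Y)\le 1/(k+1)\le 1/2$, the assumed bound on $b$ is below $1$;
\item hence $\delta_{\gamma,Y}=\sqrt2e^bb/\sqrt{q_\gamma G_k(Y)}\le \sqrt2\,e\,b/\sqrt{q_\gamma G_k(Y)}\le 1/2$.
\end{itemize}
Therefore $(1-\delta_{\gamma,Y})^2\ge 1/4$, and the previous bound yields $\gap(-\cL_\gamma)\ge \frac{q_\gamma}{4}e^{-\beta W}G_k(Y)$.

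The only non-mechanical step is confirming $b\le1$ so that $e^b\le e$; everything else is direct substitution.
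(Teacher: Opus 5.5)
Your route is the paper's: verify \Cref{ass:poincare} with $\kappa_0=\kappa_u$, apply \Cref{thm:comparison}, then pass to the smaller constant $G_k(Y)$. Your first two claims are fine. The gap is in the step you singled out, because the inequality $G_k(Y)\le 1/(k+1)$ is false in general. It would need every factor $1-\mu_j(Y)$ in \eqref{eq:Gk} to be at most $1$, i.e.\ $\mu_j(Y)\ge0$. The link walks $\sfM_\alpha$ are non-lazy (zero diagonal), however, so their second eigenvalues can be negative.

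For instance, take $k=1$ and $F$ the three edges of a triangle. Then $\sfM_\emptyset$ is the simple random walk on $K_3$, with eigenvalues $1,-\tfrac12,-\tfrac12$, so $G_1(Y)=\tfrac12\cdot\tfrac32=\tfrac34>\tfrac12$. This complex satisfies all hypotheses, and $\kappa_u=\tfrac34$ here by \Cref{rem:k1}. More generally, for the full $k$-skeleton on $n\ge k+2$ vertices every link graph is complete, and $G_k(Y)=n/\bigl((k+1)(n-k)\bigr)>1/(k+1)$. A crude per-factor bound does not rescue the step either. The bound $1-\mu_j(Y)\le2$ only gives $G_k(Y)\le 2^k/(k+1)$, which does not yield $\beta W\le1$ once $k$ is large.

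The fix is the paper's: bound $G_k(Y)$ through the walk gap,
\[
 G_k(Y)\le\kappa_u\le1 .
\]
The first inequality is \Cref{thm:alev-lau}, which you already invoke. The second holds because $P_u$ is positive semidefinite by \Cref{lem:Pu-structure}, so $\lambda_2(P_u)\ge0$. Together with $q_\gamma\le1$ this gives $\beta W\le 1/(2\sqrt2\,e)<1$. The rest of your argument then goes through unchanged: $e^{\beta W}\le e$, $\delta_{\gamma,Y}\le\tfrac12$, and $(1-\delta_{\gamma,Y})^2\ge\tfrac14$.
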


\begin{proof}
\Cref{cor:Pu-gap,lem:carre-L0} verify \Cref{ass:poincare} with
$\kappa_0=\kappa_u$, so \Cref{thm:comparison} gives
\eqref{eq:hodge-direct-perturbative}. If $G_k(Y)>0$, then
\Cref{thm:alev-lau} allows the smaller constant $\kappa_0=G_k(Y)$.
For the final implication, $G_k(Y)\le\kappa_u\le1$ and $q_\gamma\le1$
imply $\beta W<1$. Hence $e^{\beta W}\le e$ and
$\delta_{\gamma,Y}\le1/2$, which proves \eqref{eq:simple-gap}.
\end{proof}

\begin{remark}[Irreducibility does not require local spectral expansion]
\label{rem:links-not-necessary}
Consider the pure complex generated by the triangles
$\{1,2,3\}$, $\{2,3,4\}$, $\{3,4,5\}$, and $\{1,4,5\}$.
Only consecutive triangles in this list share an edge. Each shared edge
has degree two, and each other edge has degree one. Thus
$P_u=I-L_{\mathrm{path},4}/6$, where $L_{\mathrm{path},4}$ is the
combinatorial Laplacian of the path on four vertices, and
$\kappa_u=(2-\sqrt2)/6>0$. The link of vertex $1$ consists of the
two disjoint edges $\{2,3\}$ and $\{4,5\}$, so $\mu_0(Y)=1$ and
$G_2(Y)=0$. \Cref{thm:main} gives a positive lower bound on the CKG Lindbladian spectral gap although \Cref{cor:hodge-local-expansion} does not apply. This example satisfies \Cref{set:complex}; it is not a clique complex.
\end{remark}

\begin{corollary}[Bounded-degree complexes with one-sided local spectral expansion]
\label{cor:bounded-degree}
Fix $k\ge1$, $0\le\bar\mu<1$, and $R_-,R_+<\infty$.  Suppose every member
$X$ of a family of finite simplicial complexes satisfies \Cref{set:complex}
together with the degree bounds \eqref{eq:degree-bounds} and
$\mu_j(Y)\le\bar\mu$ for $-1\le j\le k-2$, where
$Y=\langle X(k)\rangle$.
Set $C_-:=(k+1)R_-$, $C_+:=(k+2)R_+$, $\Lambda_k:=C_-+C_+$,
$C_H:=\max\{C_-,C_+\}$, and $\kappa_*:=(1-\bar\mu)^k/(k+1)$.
Fix either transition weight $\gamma\in\{\gamma_{\rm G},\gamma_{\rm M}\}$
and set $B_*:=\sqrt{q_\gamma\kappa_*}/(2\sqrt2e)$ and
$g_*:=q_\gamma e^{-B_*}\kappa_*/4>0$.
For inverse temperature $\theta$ with respect to the unscaled Hodge Laplacian $\Delta_k$, use
$\alpha_k=\sqrt{C_-}$, $\alpha_{k+1}=\sqrt{C_+}$, $H_{X,k}=\Delta_k/\Lambda_k$,
and $\beta=\theta\Lambda_k$.  If $0<\theta C_H\le B_*$, then the exact CKG
generator $\cL_\gamma$, constructed from the Hamiltonian and couplings of
\Cref{thm:main} using the selected transition weight $\gamma$, has the unique stationary state
$e^{-\theta\Delta_k}/\Tr(e^{-\theta\Delta_k})$ and
\begin{equation}
 \gap(-\cL_\gamma)\ge
 \frac{q_\gamma}{4}e^{-\theta C_H}\kappa_*\ \ge\ g_*.     \label{eq:constant-gap}
\end{equation}
The constants $B_*$ and $g_*$ depend only on $k$, $\bar\mu$, and the chosen
weight, so \eqref{eq:constant-gap} is a positive lower bound uniform over
the family and independent of $|X(k)|$.  The admissible interval $0<\theta\le B_*/C_H$ is nonempty and the same for every complex in the family.
For $\gamma=\gamma_{\rm M}$, the following bound holds for every
$\theta>0$:
\begin{equation}
 \gap(-\cL_{\gamma_{\rm M}})
 \ge\frac12e^{-\theta C_H}\kappa_* .                    \label{eq:bounded-degree-all-temperature-gap}
\end{equation}
Consequently, for any fixed $B>0$, the condition $\theta C_H\le B$ gives
the following bound, uniform over the family
\begin{equation}
 \gap(-\cL_{\gamma_{\rm M}})
 \ge\frac12e^{-B}\kappa_* .                            \label{eq:bounded-degree-fixed-B-gap}
\end{equation}
This bound does not require $B\le B_*$.
\end{corollary}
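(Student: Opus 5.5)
The plan is to reduce everything to \Cref{cor:hodge-perturbative}, \Cref{thm:main}, and \Cref{thm:alev-lau}, once the normalization and the spectral width have been controlled uniformly over the family.

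\emph{Step 1 (Hamiltonian data).} By \Cref{lem:boundary-norms} and the degree bounds \eqref{eq:degree-bounds}, we have $\|\partial_k\|^2\le C_-$ and $\|\partial_{k+1}\|^2\le C_+$. Hence the choices $\alpha_k=\sqrt{C_-}$ and $\alpha_{k+1}=\sqrt{C_+}$ are admissible, and $\Lambda_k=C_-+C_+$. Since $\Delta_k\succeq0$, the width of $H_{X,k}$ satisfies $W\le\|\Delta_k\|/\Lambda_k\le C_H/\Lambda_k$ by \eqref{eq:hodge-norm}. With $\beta=\theta\Lambda_k$ this gives
\[
\beta W\le\theta C_H .
\]
By the scaling identity \eqref{eq:ckg-scaling}, the Gibbs state $e^{-\beta H_{X,k}}/\Tr(\cdot)$ coincides with $e^{-\theta\Delta_k}/\Tr(e^{-\theta\Delta_k})$.

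\emph{Step 2 (walk gap).} Since $\mu_j(Y)\le\bar\mu$ for $-1\le j\le k-2$, the product in \eqref{eq:Gk} has $k$ factors, each at least $1-\bar\mu$. Therefore $G_k(Y)\ge(1-\bar\mu)^k/(k+1)=\kappa_*>0$. \Cref{thm:alev-lau} then gives irreducibility of $P_u$ and $\kappa_u\ge\kappa_*$, so \Cref{ass:poincare} holds with $\kappa_0=\kappa_*$ via \Cref{cor:Pu-gap} and \Cref{lem:carre-L0}.

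\emph{Step 3 (perturbative regime).} Apply \Cref{thm:comparison} in its substituted form \eqref{eq:comparison-substituted-gap}, taking $\underline\kappa=\kappa_*$ and $\bar b=\theta C_H\ge b$. If $\theta C_H\le B_*=\sqrt{q_\gamma\kappa_*}/(2\sqrt2e)$, then $\bar b<1$, because $q_\gamma,\kappa_*\le1$. It follows that $e^{\bar b}\le e$, and so
\[
\bar\delta_\gamma=\frac{\sqrt2e^{\bar b}\bar b}{\sqrt{q_\gamma\kappa_*}}\le\frac12 .
\]
This is the same argument as in \eqref{eq:simple-gap}, applied with $\kappa_*$ in place of $G_k(Y)$. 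Consequently
\[
\gap(-\cL_\gamma)\ge\frac{q_\gamma}{4}e^{-\theta C_H}\kappa_*\ge\frac{q_\gamma}{4}e^{-B_*}\kappa_*=g_*,
\]
and the same theorem supplies primitivity and uniqueness of the stationary state. The admissible interval is nonempty and family-independent because $B_*$ and $C_H$ depend only on $k$, $\bar\mu$, $R_\pm$ and the weight.

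\emph{Step 4 (all temperatures, Metropolis).} \Cref{thm:main}, combined with $\kappa_u\ge\kappa_*$ and $\beta W\le\theta C_H$, gives
\[
\gap(-\cL_{\gamma_{\rm M}})\ge\tfrac12e^{-\beta W}\kappa_u\ge\tfrac12e^{-\theta C_H}\kappa_*
\]
for every $\theta>0$. Monotonicity in $\theta C_H$ then yields \eqref{eq:bounded-degree-fixed-B-gap} whenever $\theta C_H\le B$.

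The main point needing care is Step 3. The perturbative theorem must be invoked through its substitution clause, which accepts the lower bound $\kappa_*$ and the upper bound $\bar b$. Plugging $\kappa_*$ into $\delta_\gamma$ directly, as if it were the actual gap, would not be justified. Everything else is bookkeeping.
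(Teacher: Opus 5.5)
Your proposal is correct and follows the paper's proof: the paper also bounds $G_k(Y)\ge\kappa_*$ and $\beta W\le\theta C_H$, then substitutes into \eqref{eq:simple-gap}, whose derivation from \Cref{thm:comparison} is exactly your Step~3, and into \eqref{eq:hodge-link-metropolis}, which is your Step~4. Two small remarks. First, your closing caution is misplaced: \Cref{thm:comparison} accepts any valid Poincar\'e constant as $\kappa_0$, and you already showed in Step~2 that $\kappa_*$ is one, so the substitution clause is just a convenient way to replace $b=\beta W$ by its bound $\bar b=\theta C_H$. Second, nonemptiness of the admissible interval needs $C_H\ge C_-\ge k+1>0$, which holds because $R_-\ge1$ when $X(k)\neq\emptyset$; family-independence of $B_*$ and $C_H$ alone does not give it.
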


\begin{proof}
The product in \eqref{eq:Gk} gives $G_k(Y)\ge\kappa_*$, and
\Cref{lem:boundary-norms} gives $\|\Delta_k\|\le C_H$, hence
$\beta W\le\theta C_H$. Substituting these bounds into
\eqref{eq:simple-gap} gives \eqref{eq:constant-gap} whenever
$\theta C_H\le B_*$. For $\gamma=\gamma_{\rm M}$, the same substitutions
in \eqref{eq:hodge-link-metropolis} give
\eqref{eq:bounded-degree-all-temperature-gap} and
\eqref{eq:bounded-degree-fixed-B-gap}.
Finally $C_H\ge C_-\ge k+1>0$ and $\kappa_*>0$, so the interval
$0<\theta C_H\le B_*$ is nonempty.
\end{proof}

\begin{corollary}[Mixing-time upper bound for bounded-degree families]
\label{cor:HDX-mixing}
Under the hypotheses of \Cref{cor:bounded-degree}, with $B_*$ and $g_*$
defined there, for every $0<\epsilon<1$,
\[
 t_{\rm mix}(\epsilon)
 \le\frac1{g_*}\Bigl(
 \log\frac1\epsilon+\frac12\log D_k+\frac{B_*}2\Bigr),
 \qquad\text{hence}\qquad
 t_{\rm mix}(\epsilon)=O(\log D_k+\log\epsilon^{-1}),
\]
where the $O(\cdot)$ constant depends only on $k$, $\bar\mu$, and the chosen
weight, and not on the member of the family, on $D_k$, or on $\epsilon$.
For $\gamma=\gamma_{\rm M}$, let $B>0$ be arbitrary and assume only
$\theta C_H\le B$.  Then
\[
 t_{\rm mix}(\epsilon)
 \le\frac{2e^B}{\kappa_*}
 \left(\log\frac1\epsilon+\frac12\log D_k+\frac B2\right).
\]
At fixed accuracy, every fixed value of $B$ gives a mixing-time upper
bound logarithmic in $D_k$; the prefactor grows exponentially with $B$.
\end{corollary}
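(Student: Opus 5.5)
The plan is to feed the uniform gap bounds of \Cref{cor:bounded-degree} into the mixing estimate of \Cref{lem:mixing}. The only member-dependent quantity left is the Gibbs-state prefactor, which we bound by $\log D_k$. Both generators in question are KMS symmetric with respect to $\sigma_\beta$ (by \eqref{eq:KMS-DB}) and primitive, so \Cref{lem:mixing} applies with any valid lower bound $g$ on the gap.

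\textbf{First regime (either weight).} Take $g=g_*$ from \eqref{eq:constant-gap}; this requires $0<\theta C_H\le B_*$. Then bound the prefactor in \eqref{eq:mixing-time}.
\begin{itemize}
\item In the proof of \Cref{cor:bounded-degree} we showed $\beta W\le\theta C_H\le B_*$.
\item Hence $\max\{1,D_ke^{\beta W}-1\}\le D_ke^{B_*}$, since $D_k\ge2$ and $B_*\ge0$.
\item So $\tfrac12\log\max\{1,D_ke^{\beta W}-1\}\le\tfrac12\log D_k+B_*/2$.
\item Also $\log\frac1{2\epsilon}\le\log\frac1\epsilon$, and the bracket is already nonnegative, so the positive part can be dropped.
\end{itemize}
Together these give the first displayed bound. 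The $O(\log D_k+\log\epsilon^{-1})$ form follows because $g_*$ and $B_*$ depend only on $k$, $\bar\mu$ and the weight. For fixed $\epsilon$ the term $B_*/(2g_*)$ is a constant, and it is dominated by $\log D_k\ge\log 2$.

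\textbf{Second regime ($\gamma=\gamma_{\rm M}$, arbitrary $B>0$).} Take $g=\tfrac12e^{-B}\kappa_*$ from \eqref{eq:bounded-degree-fixed-B-gap}, so that $1/g=2e^B/\kappa_*$. The same prefactor estimate, now with $\beta W\le\theta C_H\le B$, gives $\tfrac12\log D_k+B/2$. This yields the second display. At fixed $\epsilon$ the bound is logarithmic in $D_k$, and the only $B$-dependence in the prefactor is the factor $e^B$.

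\textbf{Main obstacle.} There is no real obstacle; the argument is bookkeeping. The one point to check is that $\beta W\le\theta C_H$ holds with the chosen normalization $\alpha_k=\sqrt{C_-}$, $\alpha_{k+1}=\sqrt{C_+}$. This follows from $W\le\|\Delta_k\|/\Lambda_k$ and $\|\Delta_k\|\le C_H$ (\Cref{lem:boundary-norms}) together with $\beta=\theta\Lambda_k$. It is this bound that keeps the $\sigma_{\min}^{-1}$ contribution uniform over the family.
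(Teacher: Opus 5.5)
Your proposal is correct and follows essentially the same route as the paper. You apply \Cref{lem:mixing} with $g=g_*$ (respectively $g=\tfrac12e^{-B}\kappa_*$) and bound the Gibbs prefactor via $\beta W\le\theta C_H\le B_*$ (respectively $\le B$), which is exactly the paper's use of $g^{-1}[\log(1/\epsilon)+(\log D_k+\beta W)/2]$; your remaining checks (dropping the positive part, $\log\frac1{2\epsilon}\le\log\frac1\epsilon$, absorbing $B_*/2$ into $\log D_k\ge\log2$) are bookkeeping the paper leaves implicit.
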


\begin{proof}
Apply \Cref{lem:mixing} with $D=D_k$, $g=g_*$, and
$\beta W\le B_*$, using the mixing-time upper bound
$g^{-1}[\log(1/\epsilon)+(\log D_k+\beta W)/2]$.
For the bound with $\gamma=\gamma_{\rm M}$, use $g=\frac12e^{-B}\kappa_*$ and $\beta W\le B$.
The stated dependence of the constants follows from \Cref{cor:bounded-degree}.
\end{proof}

\subsection{Approximation of the harmonic state}
\label{app:harmonic-preparation}

The following corollaries combine the Gibbs-state error relative to the
harmonic state with the mixing time bounds already proved.

\begin{corollary}[Trace-distance bound for harmonic state preparation]
\label{cor:harmonic-approximation}
Under the assumptions of \Cref{thm:main}, assume $0<h:=\dim\ker\Delta_k<D_k$.
Let $P_0$ project onto this kernel, set $R=(D_k-h)/h$, and let
$\delta_H$ be the smallest positive eigenvalue of
$H_{X,k}=\Delta_k/\Lambda_k$.
Write $\rho_t=e^{t\cL_{\gamma_{\rm M},*}}(\rho_0)$ for the state evolved
from any density operator $\rho_0$ under the CKG dynamics with transition weight $\gamma_{\rm M}$.
For $\beta>0$ and $0<\eta<1$,
\[
 t\ge\frac{2e^{\beta W}}{\kappa_u}
 \left[\log\frac1\eta+\frac12\log D_k+\frac{\beta W}{2}\right]
 \quad\Longrightarrow\quad
 \frac12\|\rho_t-P_0/h\|_1\le\eta+R e^{-\beta\delta_H}.
\]
\end{corollary}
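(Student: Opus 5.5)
The plan is a triangle inequality through the Gibbs state:
\[
 \tfrac12\|\rho_t-P_0/h\|_1\le\tfrac12\|\rho_t-\sigma_\beta\|_1+\tfrac12\|\sigma_\beta-P_0/h\|_1 ,
\]
bounding the first term by the mixing estimate and the second by a direct spectral computation.

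For the first term, \Cref{thm:main} supplies primitivity, KMS symmetry, and $\gap(-\cL_{\gamma_{\rm M}})\ge g:=\frac12e^{-\beta W}\kappa_u$. Applying \eqref{eq:mixing} together with $\sup_{\rho_0}\chi^2_{1/2}(\rho_0\|\sigma_\beta)\le\sigma_{\min}^{-1}\le D_ke^{\beta W}$ gives
\[
 \tfrac12\|\rho_t-\sigma_\beta\|_1\le\tfrac12e^{-gt}\sqrt{D_k}\,e^{\beta W/2}.
\]
This is at most $\eta$ as soon as $gt\ge\log(1/\eta)+\frac12\log D_k+\frac{\beta W}{2}$, which is exactly the stated hypothesis on $t$. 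The stated form is slightly wasteful, since it drops the factor $\frac12$, so no care with constants is needed here.

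For the second term, $\sigma_\beta$ and $P_0$ commute, so both states are diagonal in a common eigenbasis of $H_{X,k}$ and their trace distance is a classical total-variation distance. Write $Z=h+\sum_{E>0}m_Ee^{-\beta E}$, where $m_E$ is the multiplicity of the eigenvalue $E$. The Gibbs weight on the kernel is $h/Z\le1$, so the positive-energy mass is
\[
 \epsilon_+:=1-\frac hZ=\frac{Z-h}{Z}\le\frac{(D_k-h)e^{-\beta\delta_H}}{h}=Re^{-\beta\delta_H},
\]
using $Z\ge h$ and $e^{-\beta E}\le e^{-\beta\delta_H}$ for $E>0$. The total-variation distance between $\sigma_\beta$ and the uniform state $P_0/h$ is the mass that $\sigma_\beta$ places outside $\ker H_{X,k}$, namely $\epsilon_+$. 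Indeed, $\sigma_\beta$ restricted to the kernel is uniform with total weight $1-\epsilon_+$, so the absolute deficit on the kernel is $\epsilon_+$ and the excess off the kernel is also $\epsilon_+$, giving $\tfrac12(\epsilon_++\epsilon_+)=\epsilon_+$. Hence $\tfrac12\|\sigma_\beta-P_0/h\|_1\le Re^{-\beta\delta_H}$.

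Adding the two bounds yields the stated estimate. The only step needing care is the trace-distance computation, and there the key observations are commutativity and uniformity of $\sigma_\beta$ on the kernel. Neither step presents a real obstacle.
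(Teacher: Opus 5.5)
Your proof is correct and follows the paper's own argument: a triangle inequality through $\sigma_\beta$, with \Cref{lem:mixing} (using $g=\tfrac12e^{-\beta W}\kappa_u$ from \Cref{thm:main} and $\sigma_{\min}\ge e^{-\beta W}/D_k$) for the first term, and the commuting-supports computation $\tfrac12\|\sigma_\beta-P_0/h\|_1=(Z-h)/Z\le Re^{-\beta\delta_H}$ for the second. You are also right that the hypothesis on $t$ has slack, since $\log(1/(2\eta))$ would suffice in place of $\log(1/\eta)$.
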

\begin{proof}
Put $Q_\beta=\Tr[(I-P_0)e^{-\beta H_{X,k}}]
\le(D_k-h)e^{-\beta\delta_H}$. The ground-space and positive-energy parts of $\sigma_\beta$ have
orthogonal supports, so
$\tfrac12\|\sigma_\beta-P_0/h\|_1=Q_\beta/(h+Q_\beta)
\le R e^{-\beta\delta_H}$.
By \Cref{thm:main,lem:mixing}, the stated evolution time ensures
$\tfrac12\|\rho_t-\sigma_\beta\|_1\le\eta$; the triangle inequality
proves the claim.
\end{proof}

\begin{corollary}[Polynomial evolution time for harmonic state preparation]
\label{cor:harmonic-polynomial}
Consider a family satisfying \Cref{cor:harmonic-approximation}, indexed
by $n\ge2$, with
$\kappa_u\ge n^{-a}$, $R\le n^r$, and $\delta_H/W\ge\alpha>0$
for constants $a,r\ge0$ and $\alpha$ independent of $n$.
For fixed $s>0$, choose a constant $c\ge(r+s)/\alpha$ and
$\beta W=c\log n$. Then, for every initial state,
\[
 \frac12\|\rho_t-P_0/h\|_1\le2n^{-s}
 \quad\text{whenever}\quad
 t\ge2n^{a+c}\left[\left(s+\frac c2\right)\log n
                       +\frac12\log D_k\right].
\]
The displayed sufficient evolution time is polynomial in $n$ if $\log D_k$ is polynomially bounded in $n$;
for complexes on $n$ vertices, $D_k\le2^n$ ensures this condition.
\end{corollary}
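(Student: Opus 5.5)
The plan is to apply \Cref{cor:harmonic-approximation} with the concrete choices $\eta=n^{-s}$ and $\beta W=c\log n$. There are two things to check: that the Gibbs-state bias term $Re^{-\beta\delta_H}$ is at most $n^{-s}$, and that the displayed evolution time dominates the sufficient time in \Cref{cor:harmonic-approximation}.

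\emph{Bias term.} Since $\delta_H/W\ge\alpha$, we have $\beta\delta_H=(\delta_H/W)\,\beta W\ge\alpha c\log n\ge(r+s)\log n$, using $c\ge(r+s)/\alpha$. Combined with $R\le n^r$ this gives
\[
 Re^{-\beta\delta_H}\le n^{r}n^{-(r+s)}=n^{-s}.
\]
Here $W>0$ holds automatically, because $0<h<D_k$ forces $H_{X,k}$ to have both a zero and a positive eigenvalue. Hence $\beta=c\log n/W$ is a well-defined positive inverse temperature.

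\emph{Time condition.} With $\eta=n^{-s}$ (note $0<\eta<1$ since $n\ge2$ and $s>0$), the sufficient time from \Cref{cor:harmonic-approximation} is
\[
 \frac{2e^{\beta W}}{\kappa_u}\Bigl[s\log n+\tfrac12\log D_k+\tfrac c2\log n\Bigr].
\]
The bracket equals $(s+\frac c2)\log n+\frac12\log D_k$ exactly. The prefactor satisfies $2e^{\beta W}/\kappa_u\le 2n^{c}n^{a}$ because $\kappa_u\ge n^{-a}$. So any $t$ at least the displayed quantity also meets the hypothesis of \Cref{cor:harmonic-approximation}. That corollary then yields $\frac12\|\rho_t-P_0/h\|_1\le\eta+Re^{-\beta\delta_H}\le2n^{-s}$ for every initial state.

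\emph{Polynomiality.} The number of $k$-faces on $n$ vertices is at most the number of subsets, so $D_k\le2^n$ and $\log D_k\le n\log2$. The displayed time is therefore $O(n^{a+c}(\log n+n))$, which is polynomial because $a$, $c$ and $s$ are constants.

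The argument is bookkeeping. The only point needing care is tracking that $\beta W=c\log n$ enters both the exponential prefactor and the $\chi^2$ term $\beta W/2$ in \Cref{cor:harmonic-approximation}, and that $\delta_H$ is measured in the same normalization as $W$.
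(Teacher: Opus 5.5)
Your proposal is correct and follows essentially the same argument as the paper. You bound the bias by $Re^{-\beta\delta_H}\le n^{r-c\alpha}\le n^{-s}$ and then apply \Cref{cor:harmonic-approximation} with $\eta=n^{-s}$, $e^{\beta W}=n^{c}$ and $\kappa_u^{-1}\le n^{a}$; your extra checks ($W>0$ from $0<h<D_k$, $0<\eta<1$, and $D_k\le 2^n$) are valid details that the paper leaves implicit.
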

\begin{proof}
The Gibbs-state error relative to the harmonic state obeys
$R e^{-\beta\delta_H}\le n^{r-c\alpha}\le n^{-s}$.
Apply \Cref{cor:harmonic-approximation} with $\eta=n^{-s}$,
$\kappa_u^{-1}\le n^a$, and $e^{\beta W}=n^c$.
\end{proof}

The ratio $\delta_H/W$ is unchanged by multiplying the Hamiltonian by a positive scalar.
These sufficient conditions concern evolution time; implementation and
measurement costs remain separate requirements.

\subsection{An example with reducible projected couplings}
\label{app:hodge-barbell-example}

The following example fixes one coupling family and applies
\Cref{appth:thm:groundobstruction} to show that the CKG Lindbladian spectral gap
tends to zero as the inverse temperature tends to infinity, while the
down--up walk has a fixed positive spectral gap.
Here $\cL_\beta$ denotes the CKG
generator with transition weight $\gamma_{\rm M}$ for
the Hamiltonian and couplings in the corollary.

\begin{corollary}[Low-temperature CKG Lindbladian spectral gap bound for an unweighted Hodge Laplacian]
\label{appth:cor:barbell}
Let $G$ consist of two disjoint four-cycles joined by one bridge, let
$X=\mathrm{Cl}(G)$, and put $F=E(G)$ and $H=\Delta_1(X)$.
The uniform edge down--up walk has transition matrix $Q=P_u$, with
$Q(f,g)=\frac12\sum_{v\in f\cap g}1/\deg_G(v)$, where $\deg_G(v)$ is
the number of edges incident to $v$. For its couplings
$K_{fg}=\sqrt{Q(f,g)}\ket g\bra f$,
\begin{equation}
 \kappa_0=\kappa_u=\frac{4-\sqrt{10}}{12}>0,
 \qquad \gap(-\cL_\beta)\le2e^{-\beta\delta_H},
 \label{appth:eq:barbell}
\end{equation}
where $\delta_H$ is the smallest positive eigenvalue of $H=\Delta_1(X)$.
For a rescaled Hamiltonian $H'=\Delta_1(X)/\Lambda$, with fixed
$\Lambda>0$, the corresponding generator at inverse temperature $\beta$
instead satisfies $\gap(-\cL^{H'}_\beta)\le
2e^{-\beta\delta_H/\Lambda}$, with the same spectral gap of the normalized infinite-temperature generator.
\end{corollary}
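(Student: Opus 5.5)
The plan is to verify the hypotheses of \Cref{appth:thm:groundobstruction} for this explicit complex, and to compute $\kappa_u$ separately through \Cref{rem:k1}. The structural step comes first. The graph $G$ has $8$ vertices and $9$ edges. Neither four-cycle has a chord, and the bridge lies on no cycle, so $G$ is triangle-free and $X(2)=\mathrm{Cl}(G)(2)=\emptyset$. Hence $\partial_2=0$ and $H=\Delta_1=\partial_1^\dagger\partial_1\succeq0$. Its kernel is the cycle space of $G$, of dimension $|E|-|V|+1=2$. This space is spanned by the two signed cycle vectors $c_1,c_2$ supported on the two four-cycles. Since the cycles are edge-disjoint, $c_1\perp c_2$. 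The ground energy is therefore already $0$, $h=2$, and the positive spectrum is nonempty because $D_1=9>2$. Call its minimum $\delta_H$.

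Next I show that the projected couplings are reducible with $P:=|c_1\rangle\langle c_1|/\|c_1\|^2$ and $r=1$. In the basis $c_1,c_2$ of $\operatorname{ran}P_0$, the matrix entries of $P_0K_{fg}P_0$ are proportional to $\langle c_i|g\rangle\langle f|c_j\rangle$. Such an entry is nonzero only if $g$ lies in cycle $i$ and $f$ lies in cycle $j$. Moreover $Q(f,g)>0$ forces $f\cap g\neq\emptyset$. The two four-cycles are vertex-disjoint, since the bridge joins them without sharing vertices. So no pair $f,g$ with $Q(f,g)>0$ has edges on different cycles. Every projected coupling is therefore diagonal in $\{c_1,c_2\}$ and commutes with $P$. \Cref{ass:poincare} holds with $\kappa_0=\kappa_u$ by \Cref{cor:Pu-gap} and \Cref{lem:carre-L0}, since $P_u$ is symmetric, positive semidefinite and irreducible ($G$ is connected). \Cref{appth:thm:groundobstruction} with $h=2$ and $r=1$ then gives $\gap(-\cL_\beta)\le\frac{2}{2-1}e^{-\beta\delta_H}=2e^{-\beta\delta_H}$. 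For $H'=\Delta_1/\Lambda$, I apply the same theorem to $H'$, whose smallest positive eigenvalue is $\delta_H/\Lambda$. Equivalently, I use the scaling identity \eqref{eq:ckg-scaling}. The couplings, and hence $\kappa_0$, are unchanged.

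The step I expect to be the main obstacle is computing $\kappa_u$ exactly. By \Cref{rem:k1}, $1-\lambda_2(P_u)=\frac12(1-\lambda_2(\sfM_\emptyset))$, where $\sfM_\emptyset$ is the simple random walk on $G$. So I need $\lambda_2(\sfM_\emptyset)=(2+\sqrt{10})/6$. I plan to exploit two symmetries: the involution exchanging the two halves, and the reflection of each four-cycle fixing its bridge vertex $a$ and the opposite vertex $c$. On each four-cycle write $b=d$ for the two neighbours of $a$. The eigenvector equations then reduce to $3\lambda x_a=2x_b\pm x_a$, $2\lambda x_b=x_a+x_c$ and $\lambda x_c=x_b$. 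The sign is $+$ for the sector symmetric under exchange and $-$ for the antisymmetric sector. Each sector gives a cubic, and reflection-antisymmetric vectors give further explicit eigenvalues. In the antisymmetric sector I expect the cubic to have root $(2+\sqrt{10})/6$. I must then check that this root exceeds every other eigenvalue below $1$ from all sectors, by comparing the roots of the two cubics with the remaining eigenvalues. That yields $\kappa_u=(4-\sqrt{10})/12$.
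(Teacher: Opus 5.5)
Your proposal is correct and follows the paper's proof. You use the same reducibility argument: the two cycles are vertex-disjoint, so no $K_{fg}$ with $Q(f,g)>0$ links the two cycle vectors, and \Cref{appth:thm:groundobstruction} then applies with $h=2$, $r=1$. You also use the same symmetry reduction of the vertex walk together with the $k=1$ correspondence between the nonzero spectra of $P_u$ and $(I+M_G)/2$; your eigenvector equations are exactly the paper's $3\times3$ matrices with the $\pm1/3$ entry.

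The computation you left pending does close as you expect. The exchange-antisymmetric cubic is $\frac16(\lambda+1)(6\lambda^2-4\lambda-1)$, with root $(2+\sqrt{10})/6$. This is the largest eigenvalue below $1$: the other eigenvalues are $(2-\sqrt{10})/6$, $-1$, the symmetric-sector values $1,(-2\pm\sqrt{10})/6$, and the two zeros. Hence $\kappa_u=(4-\sqrt{10})/12$.
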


\begin{proof}
The graph has eight vertices, nine edges, and no triangles. Thus
$H=B^\dagger B$ for its oriented incidence matrix $B$, and its kernel
has dimension two. The normalized oriented cycle vectors $h_L,h_R$, supported on
the respective four-cycles, form an orthonormal basis of this kernel for
any fixed orientation convention. The projection
$P=\ket{h_L}\bra{h_L}$ commutes with every coupling operator
projected to the ground state subspace: a coupling $K_{fg}$ connects an edge only to itself or to an edge
sharing a vertex, and
no edge of the left cycle is incident to an edge of the right cycle.
Every matrix element of a projected coupling between $h_L$ and $h_R$
therefore vanishes. \Cref{appth:thm:groundobstruction} gives the upper
bound in \eqref{appth:eq:barbell}.

To compute the spectral gap of the normalized infinite-temperature generator, let $M_G$ be the vertex random walk, with
$M_G(v,w)=1/\deg_G(v)$ for adjacent vertices and zero otherwise.
The nonzero eigenvalues of the edge down--up walk coincide, with
multiplicity, with the nonzero eigenvalues of $(I+M_G)/2$.
Antisymmetric functions on the two neighbors of the attachment vertex in
each cycle give two zero eigenvalues of $M_G$. On the remaining subspace,
the subspaces symmetric and antisymmetric under exchange of the cycles
reduce $M_G$ to the matrices
\[
 \begin{pmatrix}\pm1/3&2/3&0\\1/2&0&1/2\\0&1&0\end{pmatrix}.
\]
Their combined eigenvalues are $\pm1$,
$\pm(2+\sqrt{10})/6$, and $\pm(\sqrt{10}-2)/6$.
Consequently the uniform edge down--up walk has spectral gap
$(4-\sqrt{10})/12$. Its associated normalized infinite-temperature
generator has the same spectral gap by \Cref{thm:associated-gap}.
The incidence matrix of this connected graph has rank seven, so the
Hodge Laplacian has seven positive eigenvalues.
Finally, dividing the Hamiltonian by $\Lambda$ leaves its ground state subspace
unchanged and divides its smallest positive eigenvalue by $\Lambda$. Applying
\Cref{appth:thm:groundobstruction} once more proves the rescaled statement.
\end{proof}

\section{Relation to previous comparison theorems}
\label{app:comparison-context}

We distinguish the comparison proved here from established
comparison inequalities for different stationary states and from results for other choices of CKG
couplings.

\subsection{Comparison under GNS detailed balance}

Junge, LaRacuente, and Rouz\'e prove noncommutative Holley--Stroock
comparisons for modified logarithmic Sobolev inequalities, including
complete versions \cite[Theorems~1.2 and~3.1]{JLR}. Their generators are
GNS symmetric; the comparison generator at the maximally mixed state retains the modular
eigenoperators and removes their Bohr-frequency weights
\cite[Sec.~2, Eqs.~(2.2)--(2.3)]{JLR}.

Li and Lu compare primitive GNS-symmetric generators with commuting
full-rank stationary states and the same family of modular eigenoperators
\cite[Theorem~3.27, Eqs.~(3.50)--(3.54)]{LiLu}. At $p=2$, their Beckner
constant is half the spectral gap
\cite[Lemma~3.18 and Proposition~3.19]{LiLu}. In particular, their
specialization to $\tau=I/D$ gives
\begin{equation}
 \gap(-\mathcal G_\sigma)
 \ge \frac{\lambda_{\min}(\sigma)}{\lambda_{\max}(\sigma)}
       \gap(-\mathcal G_\tau)
 =e^{-\beta W}\gap(-\mathcal G_\tau)
 \quad\text{when }\sigma=\sigma_\beta,
 \label{eq:prior-gns-comparison}
\end{equation}
for the generator pair specified there
\cite[Remark~3.28]{LiLu}. The generator pairs in the cited GNS comparisons
differ from the pair considered here. Our theorem does not imply those comparisons or improve
their numerical prefactor. The additional estimate in
\Cref{thm:metropolis-all-temperature} bounds the integral of squared commutator norms in \eqref{eq:IM-definition} for the given CKG couplings without assuming GNS symmetry.

The difference in hypotheses occurs even on a qubit. Recall that a
generator is \emph{modular covariant} with respect to $\sigma$ if it
commutes with every map $O\mapsto\sigma^{is}O\sigma^{-is}$,
$s\in\mathbb R$. GNS symmetry implies this covariance
\cite[Lemma~2.2]{LiLu}; for a Gibbs state with $\beta>0$, it is equivalent
to commutation with $O\mapsto e^{itH}Oe^{-itH}$ for every real $t$.

\begin{proposition}[A CKG generator outside the GNS comparison hypotheses]
\label{prop:non-GNS-example}
On $\mathbb C^2$, let
$X=|0\rangle\langle1|+|1\rangle\langle0|$ and
$Z=|0\rangle\langle0|-|1\rangle\langle1|$. Fix $E>0$ and take
\begin{equation}
 H=E|1\rangle\langle1|,\qquad
 A_1=\frac{X+Z}{2},\qquad A_2=\frac{Z}{\sqrt2}.
 \label{eq:non-GNS-family}
\end{equation}
The normalized infinite-temperature generator for this Hermitian family has
spectral gap $\kappa_0=1-1/\sqrt2$. For every $\beta>0$, its exact CKG generator with transition weight $\gamma_{\rm M}$ is primitive and KMS symmetric, with
$\gap(-\cL_{\gamma_{\rm M}})\ge
\frac12e^{-\beta E}(1-1/\sqrt2)$, but it is not modular covariant and
therefore is not GNS symmetric.
\end{proposition}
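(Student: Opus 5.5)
The plan has three parts: check that the family meets \Cref{ass:poincare}, compute $\kappa_0$ exactly, and then invoke \Cref{thm:metropolis-all-temperature}; the non-GNS claim needs a separate computation.

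\textbf{Assumption and $\kappa_0$.} Both couplings are Hermitian, so the family is closed under adjoints with $\iota=\id$ and zero phases. Since $XZ+ZX=0$, we get $A_1^2=\tfrac14(X^2+Z^2)=\tfrac12I$ and $A_2^2=\tfrac12I$, hence $S_{\cA}=I$.

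Next compute $\Phi(O)=A_1OA_1+A_2OA_2$ on the Pauli basis. The expected values are
\[
 \Phi(I)=I,\qquad \Phi(Y)=-Y,\qquad \Phi(X)=\tfrac12(Z-X),\qquad \Phi(Z)=\tfrac12(X+Z).
\]
On $\operatorname{span}\{X,Z\}$, $\Phi$ is the symmetric matrix $\tfrac12\bigl(\begin{smallmatrix}-1&1\\1&1\end{smallmatrix}\bigr)$, with eigenvalues $\pm1/\sqrt2$. The Pauli basis is orthogonal for $\langle\cdot,\cdot\rangle_\tau$, so $\Phi-\id$ is self-adjoint there. Its spectrum is therefore $\{0,-2,-1\pm1/\sqrt2\}$ with $\ker=\mathbb CI$, giving gap $1-1/\sqrt2$.

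By the Dirichlet-form identity of \Cref{lem:commutator-sum}, this is exactly \eqref{eq:generic-L0-Poincare} with $\kappa_0=1-1/\sqrt2$. Since $W=E$, \Cref{thm:metropolis-all-temperature} then gives primitivity and the stated gap bound. KMS symmetry is \eqref{eq:KMS-DB}.

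\textbf{Failure of modular covariance.} Decompose $A_1$ into Bohr components as in \eqref{eq:bohr-components}:
\[
 A_1=\tfrac12 Z+\tfrac12\sigma_++\tfrac12\sigma_-,
\]
with $\sigma_+=|1\rangle\langle0|$ at $\nu=E$, $\sigma_-=|0\rangle\langle1|$ at $\nu=-E$, and $Z$ at $\nu=0$. The coupling $A_2$ has only the $\nu=0$ component.

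Modular covariance is equivalent to commuting with $e^{itH}\cdot e^{-itH}$. It would force $\cL_{\gamma_{\rm M}}$ to map the frequency-$0$ sector (diagonal operators) into itself. So it suffices to show that $\langle0|\cL_{\gamma_{\rm M}}(|1\rangle\langle1|)|1\rangle\neq0$.

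Use \eqref{eq:bohr}, $\widehat A_1(\omega)=\tfrac12\bigl(\widehat f_\beta(\omega)Z+\widehat f_\beta(\omega-E)\sigma_++\widehat f_\beta(\omega+E)\sigma_-\bigr)$, and introduce the overlap integrals
\[
 c_{\nu\mu}:=\int\gamma_{\rm M}(\omega)\widehat f_\beta(\omega-\nu)\widehat f_\beta(\omega-\mu)\,d\omega>0 .
\]
I would compute the off-diagonal entry of each of the three pieces of the Heisenberg generator, taking them in turn.
\begin{itemize}
 \item The gain term $\sum_a\int\gamma\,\widehat A_a^\dagger O\widehat A_a$ contributes only cross terms between the $Z$ and $\sigma_\pm$ components.
 \item The loss term $-\tfrac12\{R,O\}$ contributes through $(R)_{01}$, which is again a combination of $c_{0,\pm E}$.
 \item The coherent term $i[B,O]$ contributes $\tanh(\beta E/4)$ times the off-diagonal entry of $R$.
\end{itemize}

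\textbf{Main obstacle.} The hard step is showing that these contributions do not cancel for every $\beta>0$ and $E>0$. I expect the entry to reduce to a single combination such as $\tfrac14\bigl(c_{0,E}-c_{0,-E}\bigr)$ multiplied by a factor like $\bigl(1+\tanh(\beta E/4)\bigr)$, possibly with the two overlaps appearing in a different weighted combination.

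I would first try to evaluate this combination in closed form using Gaussian completion of the square and the rate function of \Cref{appth:lem:rate}. The aim is to exploit the strict asymmetry $\gamma_{\rm M}(\omega)\neq\gamma_{\rm M}(-\omega)$, together with strict monotonicity of $\gamma_{\rm M}$ on $\omega>-1/(2\beta)$, to show $c_{0,E}<c_{0,-E}$.

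If the combination turns out not to be sign-definite, the fallback is different. I would instead test $\cL_{\gamma_{\rm M}}(\sigma_-)$ for a nonzero diagonal component, using the same overlap bookkeeping.

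Once a nonzero cross-frequency entry is found, modular covariance fails. GNS symmetry then fails as well, by \cite[Lemma~2.2]{LiLu}.
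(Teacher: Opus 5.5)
Your verification of \Cref{ass:poincare}, the Pauli computation giving $\kappa_0=1-1/\sqrt2$, and the appeal to \Cref{thm:metropolis-all-temperature} and \eqref{eq:KMS-DB} match the paper's argument. Your non-GNS strategy is also the paper's: modular covariance would force $\cL_{\gamma_{\rm M}}$ to preserve the diagonal (frequency-$0$) operators. Since $\cL_{\gamma_{\rm M}}(I)=0$, your test operator $|1\rangle\langle1|=(I-Z)/2$ is equivalent to the paper's $Z$.

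The gap is that you stop at the step that carries the content, namely the non-vanishing of the $(0,1)$ entry. The form you anticipate for that entry is also wrong: it is not proportional to $c_{0,E}-c_{0,-E}$. Your planned monotonicity argument is true, but it only shows $(R_{\gamma_{\rm M}})_{01}\neq0$, which does not decide non-cancellation. The fallback test operator is unnecessary.

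Carry out your bookkeeping with $t_E:=\tanh(\beta E/4)$. One has $(R_{\gamma_{\rm M}})_{01}=\frac14(c_{0,-E}-c_{0,E})$. Since $|0\rangle\langle1|$ has Bohr frequency $-E$, $(B_{\gamma_{\rm M}})_{01}=-\frac{i}{2}t_E(R_{\gamma_{\rm M}})_{01}$. For $O=|1\rangle\langle1|$, the $(0,1)$ entries of the three terms are:
\begin{itemize}
\item gain: $-\frac14c_{0,E}$;
\item loss: $-\frac12(R_{\gamma_{\rm M}})_{01}$;
\item coherent: $i(B_{\gamma_{\rm M}})_{01}=\frac12t_E(R_{\gamma_{\rm M}})_{01}$.
\end{itemize}
Therefore
\[
 \bigl(\cL_{\gamma_{\rm M}}(|1\rangle\langle1|)\bigr)_{01}
 =-\frac18\bigl[(1+t_E)c_{0,E}+(1-t_E)c_{0,-E}\bigr]<0 .
\]
This is $-\frac12$ times the right side of \eqref{eq:non-GNS-witness}, with $w_\pm=c_{0,\pm E}$.

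The missing observation is structural. Gain and loss together contribute the undamped sum $-\frac18(c_{0,E}+c_{0,-E})$, while the coherent correction contributes only a $\tanh$-damped difference. Hence no cancellation is possible for any $\beta,E>0$. Positivity of the overlaps and $0<t_E<1$ suffice, with no comparison between $c_{0,E}$ and $c_{0,-E}$.

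The paper's choice $O=Z$ is slightly cleaner. There $\{R_{\gamma_{\rm M}},Z\}$ has vanishing $(0,1)$ entry, so the loss term drops out entirely.
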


\begin{proof}
The identities $X^2=Z^2=I$ and $XZ+ZX=0$ give
$A_1^2+A_2^2=I$. Put $Y=iXZ$. The coupling channel
$\Phi(O)=A_1OA_1+A_2OA_2$ satisfies
$\Phi(X)=(-X+Z)/2$, $\Phi(Z)=(X+Z)/2$, and $\Phi(Y)=-Y$.
Its eigenvalues on the traceless operators are
$1/\sqrt2,-1/\sqrt2,-1$. Hence the normalized infinite-temperature generator $\Phi-\id$ has the
asserted spectral gap.
\Cref{thm:metropolis-all-temperature} gives the spectral gap
lower bound at finite temperature and primitivity; \eqref{eq:KMS-DB} gives KMS symmetry.

Write $\varphi(\omega)=\widehat f_\beta(\omega)
=\sqrt\beta(2\pi)^{-1/4}e^{-\beta^2\omega^2/4}$, set
$t_E=\tanh(\beta E/4)$, and define the positive numbers
$w_\pm:=\int_{\mathbb R}\gamma_{\rm M}(\omega)
\varphi(\omega)\varphi(\omega\mp E)\,d\omega$.
The filtered first coupling is
\[
 \widehat A_1(\omega)=\frac12
 \begin{pmatrix}
  \varphi(\omega)&\varphi(\omega+E)\\
  \varphi(\omega-E)&-\varphi(\omega)
 \end{pmatrix},
\]
while $\widehat A_2(\omega)=\varphi(\omega)Z/\sqrt2$ contributes zero
to $\cL_{\gamma_{\rm M}}(Z)$. In the conventions of
\eqref{eq:CKG-data}, the loss operator and coherent correction satisfy
$(R_{\gamma_{\rm M}})_{01}=(w_--w_+)/4$ and
$(B_{\gamma_{\rm M}})_{01}=-it_E(R_{\gamma_{\rm M}})_{01}/2$.
The anticommutator with $Z$ has zero $(0,1)$ entry. Consequently,
\begin{equation}
 \bigl(\cL_{\gamma_{\rm M}}(Z)\bigr)_{01}
 =\frac{(1-t_E)w_-+(1+t_E)w_+}{4}>0.
 \label{eq:non-GNS-witness}
\end{equation}
Since $[H,Z]=0$ but $[H,\cL_{\gamma_{\rm M}}(Z)]\ne0$, the generator
does not commute with Hamiltonian conjugation. The cited implication
from GNS symmetry to modular covariance proves the last assertion.
\end{proof}

\begin{remark}[Failure of GNS symmetry for a construction from a symmetric Markov chain]
\label{rem:non-GNS-classical-reference}
Put $V=(X+Z)/\sqrt2$ and consider the normalized Hermitian family
$\{V/2,\allowbreak\sqrt{3/8}I,\allowbreak X/\sqrt8,\allowbreak Y/\sqrt8,\allowbreak Z/\sqrt8\}$.
Its channel is
$\Phi=\frac12\Phi_{\mathrm{reset}}+
\frac14\id+\frac14\operatorname{Ad}_V
=\frac12\Phi_{\mathrm{reset}}+\frac12\mathcal D_V$,
where $\Phi_{\mathrm{reset}}(O)=\Tr(O)I/2$,
$\operatorname{Ad}_V(O)=VOV$, and $\mathcal D_V$ dephases in the
eigenbasis of $V$. In that basis it is precisely the associated channel
of $Q=\left(\begin{smallmatrix}3/4&1/4\\1/4&3/4\end{smallmatrix}\right)$.
The normalized infinite-temperature generator has spectral gap $1/2$. For the same Hamiltonian $H=E|1\rangle\langle1|$,
the $I,X,Y,Z$ couplings have zero off-diagonal contribution to
$\cL_{\gamma_{\rm M}}(Z)$. The $V$ coupling gives the strictly positive
right side of \eqref{eq:non-GNS-witness} divided by two. Thus even a
channel associated with a symmetric Markov chain can yield a CKG generator outside
the GNS comparison hypotheses when the Hamiltonian is not diagonal in the basis used to construct the channel.
\end{remark}

\subsection{General comparison principles and local high-temperature bounds}

Junge and Wu characterize complete domination of operator-valued
commutator forms, called Lindblad carr\'es du champ, by inclusion of the
linear spans of the jump operators together
with the identity
\cite[Theorem~3.1]{JW}, without a detailed-balance assumption. Their
Theorem~6.1 separately gives stability of the spectral gap under small perturbations between generators that are GNS symmetric with respect to the same stationary state and have the same fixed-point algebra. The domination criterion determines whether a domination constant exists;
it does not by itself give the coefficient in
\eqref{eq:metropolis-all-temperature-gap}.

Bergamaschi compares the Dirichlet form of the CKG generator with a Gibbs-weighted commutator form
for single-site Pauli couplings under a high-temperature condition
controlled by local interaction strength
\cite[Theorems~1.3 and~5.1]{Bergamaschi}. Related locality-based
thermalization bounds appear in \cite{RouzeFrancaAlhambra2024}.

For Hodge Laplacians, the spectral gap bounds for the normalized
infinite-temperature generator follow by transferring the classical
walk bound of \cite{AlevLau2020}; they are not an independent quantum
local-to-global result.
All gap comparisons account for the overall normalization of the generators.

\subsection{Random coupling constructions}

Ramkumar and Soleimanifar obtain a constant spectral gap of the CKG generator, with fixed success
probability, from $M=\Theta(\log D)$ random unitary $1$-design couplings
when $\beta\|H\|=O(1)$, with filter width $\sigma_E=\beta^{-1}$
\cite[Theorem~2.2, arXiv v1]{RamkumarSoleimanifar}. After shifting the energy origin, this condition is equivalent to
$\beta W=O(1)$. Their Theorem~2.3
gives a guarantee depending on the distribution of energy eigenvalues beyond that regime.
Their results sample the coupling family at random; our theorem assumes
a given family and uses its infinite-temperature Poincar\'e constant.
Neither theorem gives a uniformly stronger bound under the hypotheses
of the other.

\bibliographystyle{unsrtnat}
\bibliography{main}
\end{document}